\newtheorem{proposition}{Proposition}[section]
\newtheorem{lemma}[proposition]{Lemma}
\newtheorem{corollary}{Corollary}[proposition]
\newtheorem{claim}[proposition]{Claim}
\theoremstyle{definition}
\theoremstyle{remark}
\newtheorem*{remark}{Remark}
\providecommand{\keywords}[1]{\textbf{\textit{Keywords:}} #1}
\providecommand{\jel}[1]{\textbf{\textit{JEL Classifications:}} #1}
\begin{document}

\title{Competing to Persuade a Rationally Inattentive Agent}

\author{Vasudha Jain\thanks{Corresponding author. Department of Economics, The University of Texas at Austin.  Email: vasudha.jain@utexas.edu.} \and Mark Whitmeyer\thanks{Department of Economics, The University of Texas at Austin.\newline Thanks to V Bhaskar, Vasiliki Skreta, Max Stinchcombe, and Tom Wiseman for helpful feedback and discussions; and to various seminar audiences for helpful comments.} }

\date{\today }%
% ----------------------------------------------------------------
\maketitle

\thispagestyle{empty}

\begin{abstract}
\noindent Firms strategically disclose product information in order to attract consumers, but recipients often find it costly to process all of it, especially when products have complex features.  We study a model of competitive information disclosure by two senders, in which the receiver may garble each sender's experiment, subject to a cost increasing in the informativeness of the garbling. \iffalse As long as attention costs are not too low, there is an interval of prior means over which it is an equilibrium for both senders to offer full information, which interval expands as attention costs grow. For a large class of parameters, it is an equilibrium for the senders to allow the receiver her first best learning outcome--the outcome she achieves when she herself controls information provision. Information on one sender substitutes for information on the other, which allows the receiver to nullify the profitability of a deviation. We thus provide a novel channel through which competition encourages information disclosure.\fi For a large class of parameters, it is an equilibrium for the senders to provide the receiver's first best level of information--i.e. as much as she would learn if she herself controlled information provision.  Information on one sender substitutes for information on the other, which nullifies the profitability of a unilateral provision of less information. Thus, we provide a novel channel through which competition with attention costs encourages information disclosure.
\end{abstract}
\keywords{Bayesian persuasion; Information design; Multiple senders; Competition; Rational Inattention; Search}\\
\jel{D82; D83}
\vspace{1in}

% ----------------------------------------------------------------
\clearpage
\setcounter{page}{1}

\epigraph{Constant attention wears the active mind,\\ 
Blots out our powers, and leaves a blank behind.}{Charles Churchill\\ \textit{Epistle to William Hogarth }}
\section{Introduction}\label{intro}
The standard Bayesian persuasion framework allows senders to design arbitrarily informative signal structures, and assumes that receivers costlessly process all information made available to them. This is an unrealistic assumption in many natural contexts, in which agents may rationally choose to stay partly ignorant. Moreover, there are many situations in which multiple senders compete via information provision to be chosen by the agent. In this competitive scenario, we ask how the consumer's information-processing (or attention) costs shape the information provided by the senders.

Consider, for instance, the situation encountered by doctors and pharmaceutical companies. Patients rely on their doctors to make important medical decisions for them, such as the decision of which medication to take. Often, multiple drugs exist to treat the same condition, but nevertheless differ in subtle ways that can prove crucial for patients. Which alternative is best might depend on the particular circumstances of individual patients--e.g., someone's medical history might make him more prone to the side effects of one of them.

A well intentioned doctor has her task clearly cut out--she should study all primary research published on each drug, and let that information guide her prescription decisions. This means that when she reads about a clinical trial, she should dig into details such as whether, for instance, adverse side effects had led many trial subjects of a certain demographic group to drop out midway, or whether the drug had a differential impact depending on the stage of the illness.

However, getting detailed information involves substantial time and effort, and doctors typically find it difficult to keep up. Tellingly, \cite{alper2004much} find that it would take a doctor six hundred hours to skim all research relevant to general practice that is published in just one month. Consequently, they are likely to pay attention only to some published summary statistics.

Pharmaceutical companies are prohibited from falsifying facts when marketing to doctors. They do, however, strategically decide how much information to reveal and in what form, and in doing so, take into consideration the lack of attention on the part of the recipients: designing pamphlets in a way that the most favorable pieces of evidence stand out, or other strategies of that ilk. As \cite{goldacre2014bad} explains, ``They (doctors) need good quality information, but they need it, crucially, under their noses. The problem of the modern world is not information poverty, but information overload...So doctors will not be going through every trial, about every treatment relevant to their field...They will take shortcuts, and \emph{these shortcuts can be exploited} [emphasis added].''

Motivated by this setting, we study a model of information disclosure \'a la \cite{kamenica2011bayesian}, with two senders, and a receiver who can save on attention costs by drawing from a less informative experiment than what is chosen by the senders. The question we are interested in is how, and to what extent, the degree of attention costs matters for the relationship between competition and information disclosure.

More specifically, our baseline model has two senders who simultaneously commit to a Blackwell experiment for the quality of their respective products, which are \textit{ex-ante} identical. An experiment can simply be identified by a distribution of beliefs that averages to the prior. A receiver, who wishes to choose the sender with a higher quality, visits the senders sequentially. When she visits the first one, she observes the distribution of beliefs induced by his (the sender's) experiment, and is free to choose any mean preserving contraction--or \emph{garbling}--of that. A draw from that garbling determines her posterior belief about that sender. Think back to the doctor example, and the shortcuts she might take: she might read just the first few pages of an article, only the nontechnical parts, only the technical sections, or even just the title. All of these correspond to different levels of information, and all of these impose on the receiver different costs--a grueling slog through a complicated model takes more out of the receiver than does a quick skim of the conversational portions. 

We capture this relationship by imposing that the less informative the receiver's garbling,\footnote{$q$ is less informative than $p$ if $q$ is a garbling of $p$.} the lower her attention costs.\footnote{So, it is not the act of garbling that is costly, but rather the act of drawing more accurate beliefs.} Intuitively, further garbling of an experiment reduces costs because it leads to a belief distribution that is more concentrated around the prior, and hence involves less learning about the state. The receiver has to balance this reduction in attention costs against the worsening quality of information on a decision-relevant variable.

With the first posterior in hand, the receiver decides whether to visit the second sender. Importantly, we do not impose that she must visit the second sender in order to choose him. If she does decide to visit him, the protocol is identical to that for the first sender: she chooses a garbling of his chosen experiment subject to an information cost. Finally, she chooses the sender favored by her posterior beliefs.\footnote{If she does not visit a sender, her posterior is equal to the prior.} Each sender wants to maximize the probability of being chosen.

As we show later on, the receiver's learning strategy has an intuitive feature that drives our analysis: it can happen that she will have ``seen enough'' at the first sender and need not visit the second sender--her belief about the first sender may be so high that she chooses him without ever visiting the second, and it may be so low that she chooses the second, sight unseen. Returning to the example: if a doctor is fairly certain that drug $A$ is of low quality, then she would be willing to prescribe drug $B$ without learning anything about it, and vice versa.

In this setting, a pertinent benchmark is what the receiver would do if she had potential access to full information on each sender, and could effectively use \emph{any} pair of Bayes plausible distributions to learn--subject only to attention costs as described above.\footnote{This is equivalent to an environment where the receiver herself controls how much information is provided by senders.} We show that in this case--the `first best' scenario for her--she would always learn \emph{something} from at least one sender, but never learn any sender's type with \emph{certainty}. Furthermore--and this is crucial--fixing any prior, for a high enough attention cost parameter, it is optimal for her to learn from exactly \emph{one} sender.

The main question we address in this paper is whether in our game with strategic senders, there is an equilibrium in which the receiver ends up with this first best outcome. That is, whether senders voluntarily provide as much information as the receiver would acquire in her first best scenario.

Surprisingly, the answer is yes under general conditions. In particular, for any prior, this holds as long as an attention cost parameter is above a threshold.\footnote{Equivalently, fixing attention costs above a threshold, this holds over an interior interval of prior means. The interval expands as attention costs grow, and approaches the full range as they explode.} This is a departure from closely related models in the literature, where either there is only one sender and the receiver faces attention costs, or there are two senders but no attention costs.

Our analysis produces a sharp economic insight into why a combination of these ingredients--competition and attention costs--gives us more information disclosure. Recall our observation that for high enough costs it is optimal for the receiver to learn from exactly one sender. Now suppose that the sender from whom she does plan to learn, unilaterally deviates and restricts her learning. Then since the other sender continues to provide full information, the receiver could just switch to learning from him instead. Her \textit{ex-ante} payoffs, and the probability of choosing correctly between the senders, would remain unaffected. Since a sender's payoffs ultimately depend only on this probability, the deviation ends up being unprofitable for him.

Our results are robust to a variety of modifications of the model. While we concentrate our attention on experiments that are unobserved by the receiver until her visit (since that fits our leading example best), our main results go through if the experiments are chosen publicly by the senders. As we discover, if anything this strengthens our main results: publicly chosen experiments make it easier for the receiver to see, and hence react to, a sender's deviation. Similarly, for ease of exposition, for most of the paper we focus on the symmetric case in which both senders are identically likely to be the high type. Again, we show that this restriction is not needed for inattention to beget the first-best level of information, and that a result of a similar ilk holds for heterogeneous means (that are not too far apart). Finally, we modify the receiver's information cost function, itself, to allow it to depend on the distribution chosen by the sender. Like the public experiments modification, this only supplements the incentives that drive our main results, which thus continue to hold.

\subsection{Related Literature}

To the best of our knowledge, this paper is the first to look at competitive information design with information processing costs faced by the receiver. This relates thematically to several strands of the literature.

Since in our model the receiver's decision to garble a sender's experiment is the result of an optimization problem that accounts for attention costs, she is \emph{rationally} inattentive as in the economics literature pioneered by \cite{sims2003implications}. The particular framework of rational inattention that we adopt is the same as in \cite{lipnowski2019attention} and \cite{wei2018persuasion}. The former paper considers the problem of a principal whose preferences over actions are perfectly aligned with those of an agent. Attention costs are borne only by the agent, and the authors establish conditions under which the principal would want to restrict her information with a view to manipulating her attention.

\cite{wei2018persuasion} belongs to the small but growing literature on persuasion of a rationally inattentive receiver by a \emph{single} sender. This paper, like ours, considers a binary types, binary action model with a single sender who has state independent preferences, and an exogenous threshold of acceptance for the receiver. It is shown that the sender necessarily finds it in his interest to restrict the receiver's learning--we show how competitive forces change this.

\cite{bloedel2018persuasion} take a different approach to a problem similar to Wei's. In their framework, after observing the sender's experiment, but before seeing its realization, the receiver can choose a mapping from signal realizations to distributions over `perceptions', incurring an entropy reduction cost. Then, the receiver observes the realized perception, and not the actual signal realization. As \cite{lipnowski2019attention} explain, this is conceptually different from our paper (and theirs), since the receiver in our model pays a cost to reduce uncertainty about the \emph{state}, and not the sender's message. \cite{matyskova2018bayesian} studies a persuasion model where the receiver, after observing the realization from the sender's signal, can acquire additional information on the state at a cost proportional to the reduction in entropy.

Our work is also closely related to the papers on competitive information design without any attention costs. With two senders, this has been studied (albeit with slightly different timing than our baseline model) by \cite{cotton}, who identify the unique equilibrium. \cite{Hulko} extend this analysis to $n>2$ senders, while also incorporating the possibility of search frictions. Crucially, providing full information is not an equilibrium with zero attention costs, and we show that this continues to hold for positive but small attention costs. 

Some other papers in the competitive information design literature that bear mentioning are \cite{Au1, Au2}, \cite{Albrecht}, and \cite{boleslavsky2018limited}. The result that competition encourages information disclosure is familiar from some of these,\footnote{The same is also true of some papers in the \emph{cheap talk} literature, e.g. \cite{battaglini2002multiple}, which have a very different flavor.} but as detailed above, introducing attention costs offers a novel perspective on why that might be true. 

\cite{board2018competitive} also look at sellers competing through information to entice buyers. However, there, search is random, the number of sellers is uncountable, and the decision by a seller of how much and what sort of information to disclose is made upon the buyer's visit. Thus, the problem is different from the scenario analyzed here (or in \cite{Hulko}), where the searcher must choose whom to visit as part of a stopping problem. Moreover, in \cite{board2018competitive} the value of each seller's goods to the buyer are perfectly correlated whereas here they are independent. Accordingly, one of the key inputs of their model; how much a seller can observe about the consumer's belief, is absent here.

Another group of papers look at what happens if there is no competition but costs are on the sender's side instead of the receiver's. \cite{gentzkow2014costly} look at optimal persuasion mechanisms when the sender pays higher costs (proportional to entropy reduction) of designing more informative experiments. Likewise, \cite{treusttomala} consider constraints on the sender's information transmission channel. The sender in their paper has $n$ copies of identical persuasion problems, but is constrained to send only $k<n$ messages, which are transmitted with exogenous noise. Interestingly, they find that the sender's payoff from the optimal solution is the concave closure of his payoff function, net of entropy reduction costs. Thus, these costs arise endogenously in their model.

The rest of the paper is organized as follows. Section \ref{model} presents our baseline model. Section \ref{single sender} presents results for the benchmark with a single sender. Section \ref{twosenders} presents the equilibrium analysis with two senders and spells out how the level of attention costs matters. Section \ref{extensions} illustrates the robustness of our results to the various modifications mentioned in the introduction, and Section \ref{conclusion} concludes. The Appendix contains proofs that are not presented in the main text. 

\section{Baseline Model}\label{model}

There are two senders indexed by $i \in \left\{1,2\right\}$, and a receiver ($R$). Sender $i$  has type $\omega_i \in \Omega_i \coloneqq \left\{0,1\right\}$, with the types being drawn independently. The common prior belief is that $\Pr(\omega_i=1)=\mu \in (0,1)$ for $i \in \left\{1,2\right\}$.

$R$ has to select one of the two senders, and she has no outside option.\footnote{Our results hold with an outside option, as long as its expected quality is not too high. } Her payoff is equal to the type of the selected sender, minus \emph{attention costs} that we elaborate on below. Sender $i$'s payoff is $1$ if he is selected, and $0$ if not. All players maximize expected payoffs. 

\subsection{Timing and Strategies}
The game proceeds in the following 3 stages.

\textbf{Stage 0:} Each (\textit{ex-ante} uninformed) sender simultaneously commits to a Blackwell experiment that generates information about his own type. Such an experiment is a mapping from $\left\{0,1\right\}$ to the set of Borel probability measures over a compact metric space of signal realizations. Each signal realization, then, is associated with a posterior belief distribution on $\left\{0,1\right\}$, and an experiment induces a distribution over posterior beliefs. Hereafter, we identify a posterior belief with the belief on $\omega_i=1$.

From the work of \cite{kamenica2011bayesian}, we know that the set of Blackwell experiments is isomorphic to the set of distributions of posterior beliefs whose average is the prior. Thus, at this stage 0, sender $i$ commits to a distribution $p_i \in \Delta[0,1]$, with $\int_{[0,1]} x \  dp_i(x) =\mu$.\\

\textbf{Stage 1:} $R$, who at this point \emph{does not} observe the chosen distributions,\footnote{We relax this assumption in Section \ref{extension1}.} decides whether to visit any sender, and if yes, which one.

Say she visits sender $1$ first. Upon visiting she observes $1$'s distribution $p_{1}$, and is free to choose any $q_{1}$ that is a mean preserving contraction (or garbling) of $p_{1}$.\footnote{$q$ is a garbling of $p$ if the random variable associated with $q$ second order stochastically dominates--and has the same mean as--the random variable associated with $p$. It is a strict garbling if additionally $q \neq p$. Trivially, $q_1=p_1$ or $q_1=\delta_\mu$ is always an option.} Associated with any such $q_{1}$ is an \textbf{attention cost} given by the following:
\begin{equation}\label{costfunction} C(q_{1})= \int_{[0,1]} k(x-\mu)^2 dq_{1}(x),\end{equation}
where $k>0$. Note that costs depend on $q_1$ and not directly on $p_1$. We defer a discussion of these costs to Section \ref{costssection}.

$R$ takes a draw from $q_{1}$, which determines her posterior belief about sender $1$.
\iffalse She faces a trade-off in her choice of $q_{1}$, because a garbling costs less, but also corresponds to a (Blackwell) less informative experiment and is less valuable for her decision problem.
Note that the sender's chosen distribution $p_j$ determines $R$'s choice set and, loosely speaking, puts a cap on how much she can learn. It does not, however, directly determine her attention cost. \fi \\

\textbf{Stage 2:} $R$ then decides whether to visit sender $2$. If she does, she observes $p_{2}$ and chooses a garbling $q_{2}$, once again incurring an attention cost $C(q_{2})$. She takes a draw from $q_{2}$, which determines her posterior belief about this sender. Finally, she chooses the sender for whom her posterior belief is higher.\footnote{As long as she learns something from at least one sender, the posteriors would never be equal. (See Footnote \ref{footnote}.) They would be equal only if she does not learn from either sender--in that case she may choose between the senders in any way.} She need not have visited a sender or learned anything from him in order to select him.\\

Notice that $R$'s optimal garbling at stage 2 potentially depends on the belief she draws at stage 1. She may be more or less inclined to learn about the second sender, depending on how much uncertainty has already been resolved about the first one. Indeed, as we shall see, if the stage 1 belief is close enough to $0$ or $1$, she chooses not to learn at all at stage 2, and this fact plays a crucial role in our analysis.

The distribution offered by the sender visited first dictates how much can be learned at stage 1. Then in light of the preceding observation, if both senders offer different distributions, the choice of whom to visit first (if anyone) matters for payoffs. \\

A pure strategy for sender $i$ is a choice of a distribution $p_i \in \Delta[0,1]$ whose average is $\mu$. A pure strategy for $R$ consists of i) a choice of which sender to visit first, if any; ii) a choice of garbling for any distribution offered by the sender she visits first; iii) a choice of whether to visit the second sender for each belief drawn in the previous stage; iv) a choice of garbling for the second sender, for each distribution offered by him and each posterior belief drawn in the previous stage.

Our solution concept is subgame perfect equilibrium (hereafter, equilibrium), defined in the standard way. We restrict the players to pure strategies, with one exception--\textit{viz.}, we allow $R$ to randomize over the order of visits.

Before proceeding to our analysis, we point out the following characterization of the set of garblings of a binary distribution, which we shall extensively use: \[q \ \text{is a garbling of a distribution with support} \left\{\nu_1,\nu_2\right\} \iff supp(q) \subseteq [\min\left\{\nu_1,\nu_2\right\},\max\left\{\nu_1,\nu_2\right\}].\]

\subsection{Attention Costs}\label{costssection}
Attention costs, in our framework, are costs incurred to process information on a sender's type. Through his choice of a Blackwell experiment, a sender can control how much information on his type is available--in other words, he can put a cap on what can be learned. But a recipient may choose to ignore some of that information and take a draw from a less informative experiment, thereby reducing attention costs. For instance, a pharmaceutical company can decide how much research on its drug to publish, but a doctor might choose to read a subset of that. Her costs would depend on how much of the research she chooses to read, not on how much was published. In particular, the act of garbling \textit{per se} is costless.\footnote{As discussed in the section on related literature, this is closely related to the framework in \cite{lipnowski2019attention} and \cite{wei2018persuasion}.}

The cost function in equation \ref{costfunction} captures this notion. Associated with each posterior $x$ is a cost $k(x-\mu)^2$, so that  more accurate beliefs--those that are further away from the prior--cost more. This is integrated to determine the cost of a distribution of posteriors. Thus, costs are \emph{posterior separable} as in \cite{caplin2019rationally}.

Since $k(x-\mu)^2$ is strictly convex, by Jensen's inequality we have
\[  q \ \text{is a garbling of} \ p \implies C(q) \leq C(p),\] with the inequality strict for strict garblings.\footnote{Of course, this property holds for any strictly convex function instead of $k(x-\mu)^2$, but we work with the specific form for tractability.}
For instance, $C(q)$  is minimized for the uninformative distribution $\delta_\mu$, and maximized for the fully informative one with support $\left\{0,1\right\}$.

Clearly then, $R$ faces a trade-off in her choice of garblings $q_{1}$ and $q_2$--a garbling costs less, but also corresponds to a less informative experiment and is less valuable for her decision problem (\citealt{blackwell, blackwell1953equivalent}). Returning to our example, the more extensive or detailed the research a doctor chooses to read, the costlier it is to draw an inference from it; but also, the more confidence she can place in that inference.

\section{Benchmark: Single Sender}\label{single sender}
We begin by taking a brief look at what happens if there is only one sender. $R$ chooses a garbling of that sender's distribution and accepts his product if the belief drawn from it is above a threshold $\lambda \in (0,1)$. Payoffs clearly depend only on the distribution finally chosen by $R$. In any (subgame perfect) equilibrium, the sender offers a distribution to maximize his expected payoff, correctly anticipating $R$'s optimal garbling behavior. 

Following the arguments in \cite{wei2018persuasion}, the setup with a single sender permits two simplifications that are not valid in our two-sender model. One: to obtain the set of equilibrium outcomes, it is without loss of generality to restrict the sender to incentive compatible distributions--those that $R$ would not want to garble. This leads to the second simplification, which is that it is without loss to restrict him to binary and degenerate distributions. (Since $R$ has only two actions, she never wants to pay to generate more than two beliefs.)

We refer to a distribution offered by the sender in a sender-preferred equilibrium as \emph{sender-optimal}.

If $\lambda<\mu$, it is immediate that any sender-optimal distribution is such that nothing is learned and he is accepted with certainty. For example, he can simply design an uninformative experiment.

Now say $\lambda>\mu$. If $k=0$, then we have a standard Bayesian persuasion problem, and we know from prior work that the sender-optimal distribution has support $\left\{0,
\lambda\right\}$. But if $k>0$, this is no longer optimal, because the garbling chosen by $R$ in response to that would be $\delta_\mu$, and the sender would not be accepted. This is easy to see intuitively--at a belief $\lambda$, $R$ is indifferent between accepting and rejecting. When offered $\left\{0,\lambda\right\}$, her gross payoff from choosing any garbling is the same as the payoff from rejecting with certainty. But then there is no reason for her to pay to learn anything. To make it worth her while to do so, the sender would have to allow her to generate beliefs above $\lambda$.

The following proposition summarizes the results for this benchmark case.

\begin{proposition}\label{prop1}

Suppose there is a single sender, and $R$ has a threshold of acceptance $\lambda>\mu$. Then,
\begin{enumerate}
    \item A sender-preferred equilibrium exists.
    \item In a sender-preferred equilibrium, $R$'s garbling on path is strictly less informative (in the Blackwell sense) than her optimal garbling in response to full information.
\end{enumerate}

\end{proposition}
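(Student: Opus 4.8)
The plan is to reduce everything to the receiver's ``value-of-a-belief'' function and then run a concavification argument on both sides of the comparison. Define
\[
u(x) \;=\; \max\{x,\lambda\} \;-\; k(x-\mu)^2,
\]
so that, given an offered $p$, the receiver's optimal garbling maximizes $\int u\,dq$ over mean-preserving contractions $q$ of $p$. I would first record two facts. (i) For a binary $\{a,b\}$ with mean $\mu$, the receiver keeps it (it is incentive compatible) if and only if the chord joining $(a,u(a))$ and $(b,u(b))$ lies weakly above $u$ on $[a,b]$; this follows because any garbling $q$ supported in $[a,b]$ satisfies $\int u\,dq \le \int(\text{chord})\,dq = \text{chord}(\mu)$, attained by $\{a,b\}$ itself. (ii) The receiver's optimal response to full information (support $\{0,1\}$) attains the concave closure $\hat u(\mu)$ on $[0,1]$. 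Invoking the reductions stated in the excerpt (that it is without loss to restrict the sender to incentive-compatible, binary-or-degenerate distributions), the sender's problem becomes: maximize the acceptance probability $\tfrac{\mu-a}{b-a}$ over binary $\{a,b\}$ with $a\le\mu<\lambda<b$ subject to the chord-above-$u$ constraint.

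For the full-information benchmark I would concavify $u$. Since $u$ is strictly concave on each side of $\lambda$ (two downward parabolas peaked at $\mu$ and at $\mu+\tfrac1{2k}$) with an upward slope-kink of size one at $\lambda$, the concave closure $\hat u$ is the \emph{common tangent} line bridging the kink, tangent to the left parabola at some $x_L$ and to the right parabola at some $x_H$. Equating slopes and imposing collinearity gives the two clean conditions $x_H-x_L=\tfrac1{2k}$ and $\lambda=\tfrac12(x_L+x_H)$, hence $x_L=\lambda-\tfrac1{4k}$ and $x_H=\lambda+\tfrac1{4k}$, so the receiver's optimal response to full information is exactly $\{x_L,x_H\}$. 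This is informative (i.e., $x_L<\mu$) precisely when $k<\tfrac1{4(\lambda-\mu)}$. I would flag here that in the complementary regime $k\ge\tfrac1{4(\lambda-\mu)}$ the full-information response is no learning, the sender can induce no learning at all, and so both parts of the statement degenerate; the substantive content lives in the persuasion regime $k<\tfrac1{4(\lambda-\mu)}$, which I would treat as the maintained case.

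To prove Part~2 I would parametrize the binding incentive-compatibility frontier near $\{x_L,x_H\}$ by the chord being tangent to the \emph{left} parabola. Writing $w=b-a$ for the spread, this branch yields an explicit acceptance function $A(w)=1-\tfrac{\lambda-\mu}{w}-kw$, which is strictly concave with interior maximizer $w^\ast=\sqrt{(\lambda-\mu)/k}$, and $b=\lambda+kw^2$ along the branch. The decisive inequality is $w^\ast<\tfrac1{2k}=x_H-x_L$, which holds exactly in the persuasion regime; combined with $b^\ast=2\lambda-\mu<x_H$ and an AM--GM check giving $a^\ast>x_L$, this yields $x_L<a^\ast<b^\ast<x_H$, so the sender-optimal distribution $\{a^\ast,b^\ast\}$ is a strict mean-preserving contraction of $\{x_L,x_H\}$, i.e. strictly Blackwell-less informative. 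Part~1 then follows because the maximizer is explicitly attained (equivalently, acceptance is continuous on the compact incentive-compatible region once incentive compatibility bounds $b$ away from $\lambda$, which rules out the only candidate discontinuity).

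The main obstacle is showing the optimum sits on this less-informative (left-tangent) branch rather than on a more-informative configuration. I would handle this by analyzing the complementary branch where the chord is tangent to the \emph{right} parabola: there feasibility (chord above the left parabola) requires $w\le\tfrac1{2k}$, and acceptance is \emph{increasing} in $w$, so its best point is $\{x_L,x_H\}$ itself. A short comparison—reducing to $2\bigl(\sqrt{k(\lambda-\mu)}-\tfrac12\bigr)^2\ge0$—shows $A(w^\ast)$ strictly dominates this whenever $k<\tfrac1{4(\lambda-\mu)}$, so no more-informative distribution can be sender-optimal. I would also note that $\{x_L,x_H\}$, although strictly incentive compatible, is tangent to $u$ at both endpoints, so lowering $b$ forces raising $a$; this is what rules it out as a local acceptance-maximum and pins the optimum strictly inside $[x_L,x_H]$. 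The remaining loose ends to check are the $[0,1]$ boundary/corner cases (e.g. $a^\ast=0$ or $b^\ast>1$) and the degenerate high-cost regime, which I would dispatch separately.
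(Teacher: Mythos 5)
Your proof is essentially correct, but it cannot be compared to the paper's argument in the usual sense, because the paper offers no argument at all: its proof of this proposition is a citation to Lemma 1 and Proposition 2 of Wei (2018), from which it also imports the two without-loss reductions (incentive-compatible, binary-or-degenerate distributions) that you invoke. Your derivation is thus a genuinely different, self-contained route, and its computations check out: the concave closure of $u(x)=\max\{x,\lambda\}-k(x-\mu)^2$ bridges the kink via the common tangent with contact points $x_L=\lambda-\frac{1}{4k}$ and $x_H=\lambda+\frac{1}{4k}$; along the left-tangency frontier one indeed gets $b=\lambda+kw^2$ and $A(w)=1-\frac{\lambda-\mu}{w}-kw$, hence $w^\ast=\sqrt{(\lambda-\mu)/k}$ and $b^\ast=2\lambda-\mu$; the right-tangency branch is feasible only for $w\le\frac{1}{2k}$ with acceptance increasing in $w$, so its best point is $\{x_L,x_H\}$, and the dominance comparison does reduce to $2\left(\sqrt{k(\lambda-\mu)}-\frac{1}{2}\right)^2\ge 0$, strict in the persuasion regime $k<\frac{1}{4(\lambda-\mu)}$. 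Combined with the strict nesting $x_L<a^\ast<b^\ast<x_H$ (your AM--GM step), this delivers part 2, and attainment on the compact incentive-compatible region delivers part 1. What your route buys over the paper's citation is explicitness---closed forms for the sender optimum and a transparent view of where the cost threshold enters---and it surfaces a genuine imprecision that the paper inherits by deferring to Wei: for $k\ge\frac{1}{4(\lambda-\mu)}$ the receiver's optimal response to full information is no learning, so nothing can be strictly less informative than it and part 2 fails as literally written; the proposition implicitly presumes the persuasion regime, exactly as you flag.

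Two points you should make explicit when writing this up. First, justify that the sender optimum must lie on one of the two tangency branches: for fixed $a$, acceptance $\frac{\mu-a}{b-a}$ is strictly decreasing in $b$, and incentive compatibility fails for $b$ near $\lambda$, so at an optimum one of the two chord constraints binds; a side benefit of this bookkeeping is that incentive compatibility of a straddling binary distribution forces $f'(a)\le g'(b)$, i.e.\ spread at most $\frac{1}{2k}$, and in fact support inside $[x_L,x_H]$, which cleanly rules out any ``more informative'' candidate. Second, the deferred $[0,1]$ corner cases ($x_H>1$, $x_L<0$, $a^\ast<0$, or $b^\ast>1$) are not automatic: there both the full-information response and the frontier formulas change, so the nesting argument must be redone with the boundary binding rather than waved through.
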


\begin{proof}
See Lemma 1 and Proposition 2 in \cite{wei2018persuasion}.
\end{proof}

Importantly, as will be shown ahead in this paper, $R$ has a unique optimal garbling in response to any binary distribution, which implies that this result holds for \emph{any} equilibrium and we can omit the `sender-preferred' qualifier.

\begin{corollary}
If there is a single sender, and $R$ has a threshold of acceptance $\lambda \in (0,1)$, then full information is not offered by the sender in equilibrium.
\end{corollary}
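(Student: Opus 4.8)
The plan is to prove the contrapositive-style claim directly: full information is never sender-optimal when $\lambda \in (0,1)$. I would split into the two cases already isolated in the text, according to whether $\lambda < \mu$ or $\lambda > \mu$ (the knife-edge $\lambda = \mu$ can be folded into the first case). First, if $\lambda \leq \mu$, the discussion preceding \Cref{prop1} already notes that the sender is accepted with certainty by offering an uninformative experiment, yielding payoff $1$. Since $1$ is the maximal attainable payoff, any sender-optimal distribution must deliver acceptance probability $1$; full information with support $\{0,1\}$ delivers acceptance only with probability $\mu < 1$ (the receiver rejects after drawing the belief $0$), so it is strictly dominated and cannot be offered in equilibrium.

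Second, if $\lambda > \mu$, I would invoke \Cref{prop1} directly. Part 2 of that proposition states that in a sender-preferred equilibrium, $R$'s on-path garbling is strictly less informative (in the Blackwell sense) than her optimal garbling in response to full information. The key observation is that if the sender offered full information, $R$ would respond with \emph{her} optimal garbling of full information; but Part 2 asserts that the equilibrium garbling is \emph{strictly} less informative than that. Hence the garbling actually induced on path differs from (and is strictly coarser than) the receiver's best response to full information, which means the sender cannot be offering full information itself, for if he did, the induced garbling would by definition be the receiver's best response to full information, contradicting the strictness. I would state this contradiction cleanly to close the case.

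The one subtlety to handle carefully is aligning the corollary's hypothesis ($\lambda \in (0,1)$, with no ordering relative to $\mu$) with \Cref{prop1}'s hypothesis ($\lambda > \mu$); this is why the case split is needed, and it is the only real obstacle—everything else is a short appeal to established results. I would also lean on the remark immediately following \Cref{prop1}, namely that $R$'s optimal garbling in response to any binary distribution is unique, to justify dropping the ``sender-preferred'' qualifier so that the conclusion holds for \emph{every} equilibrium and not merely a sender-preferred one. With uniqueness of the best response in hand, the strict-inequality argument of the previous paragraph applies to any equilibrium, giving the corollary in full generality.
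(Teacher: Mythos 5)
Your second case ($\lambda > \mu$) is essentially the paper's own argument: Proposition \ref{prop1}(2), extended from sender-preferred to all equilibria via the uniqueness of $R$'s optimal garbling of a binary distribution (established later in Lemma \ref{stage2k}), rules out full information by exactly the contradiction you describe. No complaint there.

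The genuine gap is in your first case ($\lambda \leq \mu$). You assert that full information "delivers acceptance only with probability $\mu < 1$ (the receiver rejects after drawing the belief $0$)." But in this model the receiver does not draw from the sender's experiment; she draws from her chosen \emph{garbling} of it, and she never retains the fully informative experiment, since paying to learn the type perfectly is never worthwhile. Her optimal garbling of full information when $\lambda < \mu$ is obtained by concavifying $\max\left\{x,\lambda\right\} - k(x-\mu)^2$ (the single-sender analog of Lemma \ref{stage2k}): it is degenerate at $\mu$ when $\lambda \leq k\mu^2$, and otherwise it is a binary distribution such as $\left\{0, \sqrt{\sfrac{\lambda}{k}}\right\}$, whose acceptance probability is $\mu\sqrt{\sfrac{k}{\lambda}}$, not $\mu$. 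In the non-degenerate regime your conclusion survives (rejection occurs with positive probability, so $\delta_\mu$ strictly dominates full information), but for a different reason and with a different acceptance probability than you state. In the degenerate regime $\lambda \leq k\mu^2$ your argument collapses entirely: the receiver's unique best response to full information is to learn nothing and accept, so full information yields the sender payoff $1$ and is \emph{not} dominated--it is itself a best response, so no strict-dominance argument can go through there. (This regime in fact exposes a parametric qualification that the corollary's statement glosses over; but the point here is that your stated reasoning is invalid even where the conclusion is true.) Any correct treatment of the $\lambda < \mu$ case must run through the receiver's optimal garbling; computing the sender's payoff from the face value of the experiment is the $k=0$ persuasion logic that this model is precisely built to depart from.
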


In response to full information, say $R$'s optimal garbling has support $\left\{\nu_1,\nu_2\right\}$ where $\nu_1<\lambda<\nu_2$. Then these results state that in equilibrium, the distribution $R$ ends up with would be a strict garbling of this. Stated differently, in equilibrium the sender does restrict $R$'s learning, \emph{not allowing her to choose her first best}.

The intuition roughly is that although the sender cannot implement his optimal no-garbling solution $\left\{0,\lambda\right\}$, he need not go all the way to providing full information. He can profitably restrict learning so that the higher belief in the support of $R$'s garbling is below $\nu_2$, and the probability of its realization is higher.

As we see next, introducing an additional sender yields an interesting comparison to this.

\section{Equilibrium Analysis with Two Senders}\label{twosenders}
We now analyze the game described in Section \ref{model}, for an arbitrary $k>0$ and $\mu \in (0,1)$.

To start off, recall our observation that $R$'s order of visits matters when the two distributions on offer are different. In equilibrium she must correctly anticipate the distributions chosen, and the order of visits must be a best response to those. However, since she does not observe the chosen distributions at stage 0, any deviation by a sender goes undetected until and unless he is visited. This has the following implication, which we note for further reference.

\begin{remark}
Any deviation by a sender cannot affect either $R$'s decision to visit a sender, or the order of her visits.
\end{remark}

Next, note that if both senders offer the same distribution, then $R$ is indifferent between the two orders of visit (if she visits anyone). The analysis below will make it clear that the tie breaking rule in this case does not matter for our results, and we do not assume anything about it.

We now turn to the question of equilibrium existence. Suppose that each of the two senders offers no information, i.e. the distribution $\delta_\mu$. Then upon visiting either sender, $R$ is also restricted to choosing $\delta_\mu$. But then she expects to gain nothing by visiting a sender, and not visiting either of them is a best response. She may simply select sender 1 with any probability $p \in [0,1]$, and sender 2 with probability $1-p$. Clearly, if this best response is played, a deviation by a sender goes undetected, and does not make any difference to the outcome. Thus we have the following.

\begin{claim}[Equilibrium existence]
An equilibrium exists for all $\mu \in (0,1)$, and for all $k>0$. In particular, there is always an equilibrium in which each sender offers an uninformative distribution.
\end{claim}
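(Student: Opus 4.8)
The plan is to verify that the proposed strategy profile---both senders choosing $\delta_\mu$, and $R$ choosing not to visit either sender and then selecting each sender with some fixed probability---constitutes a subgame perfect equilibrium. Since the statement only asserts \emph{existence}, I would proceed by exhibiting this candidate profile explicitly and checking that no player has a profitable deviation. The key structural fact I would lean on is the characterization stated in the model section: a garbling of a distribution with support $\{\nu_1,\nu_2\}$ must have support contained in $[\min\{\nu_1,\nu_2\},\max\{\nu_1,\nu_2\}]$. Applied to the degenerate distribution $\delta_\mu$, this forces the only available garbling to be $\delta_\mu$ itself, so upon visiting any sender $R$ learns nothing.

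First I would pin down $R$'s behavior. Given that both senders offer $\delta_\mu$, visiting a sender yields no new information (the only feasible garbling is $\delta_\mu$, with attention cost $C(\delta_\mu)=0$), so $R$'s posterior remains the prior $\mu$ regardless of whom she visits. Hence her gross expected payoff from selecting either sender equals $\mu$, and no sequence of visits can improve upon this. I would therefore specify that $R$ does not visit, and selects sender $1$ with probability $p$ and sender $2$ with probability $1-p$ for an arbitrary fixed $p\in[0,1]$; this is a best response since all on- and off-path continuation choices deliver the same payoff $\mu$. Subgame perfection on $R$'s side is immediate because at every information node the feasible learning set is trivial whenever she faces a $\delta_\mu$ offer, and I can specify any best-responding garbling (e.g.\ the only one, $\delta_\mu$) off path.

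Next I would check that neither sender can profitably deviate. Each sender's payoff depends only on the probability that $R$ ultimately selects him. Here I would invoke the observation recorded just before the claim (the Remark): since $R$ does not observe the stage-0 distributions, her decision of whether to visit and in what order is fixed in advance and cannot be altered by a unilateral deviation. Concretely, if sender $i$ deviates to some informative $p_i'\neq\delta_\mu$, this deviation is only detectable upon a visit to sender $i$; but $R$'s equilibrium strategy prescribes no visit, so the deviation goes unobserved and the selection probabilities $p$ and $1-p$ are unaffected. Thus the deviating sender's payoff is unchanged, and the deviation is not profitable.

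I do not anticipate a serious technical obstacle, as the argument is essentially a no-detection / no-gain verification. The one point requiring care is the off-path specification of $R$'s strategy: subgame perfection demands a well-defined best response at \emph{every} node, including those reached only after a sender deviates and is then (hypothetically) visited. I would handle this by specifying that, in any subgame reached upon visiting a sender who offers some distribution $p$, $R$ plays her optimal garbling for that subgame---which is well-defined and, as noted later in the paper, unique for binary distributions---and then selects according to her resulting beliefs. Because $R$'s prescribed on-path behavior never reaches these nodes, this off-path specification is consistent with her not visiting on path, and it ensures the profile is genuinely subgame perfect rather than merely a Nash equilibrium.
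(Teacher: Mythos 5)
Your proposal is correct and follows essentially the same argument as the paper: both senders offer $\delta_\mu$, the receiver (restricted to the trivial garbling) gains nothing from visiting and so best-responds by not visiting and selecting arbitrarily, and any unilateral deviation goes undetected---hence unprofitable---because the receiver's visit decision cannot react to an unobserved stage-0 choice. Your additional care in specifying the receiver's off-path garbling behavior is a sensible tightening of the same no-detection/no-gain verification, not a different route.
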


Naturally, we are interested in finding other, more interesting equilibria. Of particular interest are equilibria that give $R$ her first best payoff.

First, let us clarify what exactly we mean by this. $R$'s first best payoff is essentially the best she can achieve, across all profiles of sender behavior. In other words, it is the payoff she would get if she herself could choose the senders' distributions. Now, since every Bayes-plausible distribution is a garbling of the fully informative distribution, she has greatest latitude when both senders offer the fully informative distribution. Thus, her first best payoff is attained when both senders offer full information.

However, she may attain the same payoff even when senders choose other less informative distributions. Suppose, for illustration, that when offered full information, the following is a best response for $R$: visit Sender 1 first, choose the garbling $\left\{\mu-\epsilon,\mu+\epsilon\right\}$ for him; then visit Sender 2 and choose the uninformative garbling for him. Then even if, e.g., Sender 1 offers the distribution $\left\{\mu-2\epsilon,\mu+2\epsilon\right\}$ and Sender 2 offers no information, she gets to choose the exact same response and secure her first best payoff. At the heart of this is the fact that due to attention costs, she might not (and as we shall see, does not) really use full information even when allowed to.

The next observation is easy to make.

\begin{remark}
Suppose there is an equilibrium in which Sender $i$ offers $p_i$. Then $R$ achieves her first best payoff in this equilibrium if and only if her best response on path is also a best response on path to full information from both senders.
\end{remark}

This observation is used to prove the `only if' direction of the following proposition--the argument draws on some results from later sections in the paper, and is presented in the Appendix. The `if' direction is obvious.

\begin{claim}[First best]\label{firstbestclaim}
For given parameters, there is an equilibrium that gives $R$ her first best payoff if and only if there is an equilibrium in which both senders offer full information. 
\end{claim}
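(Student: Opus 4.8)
The \emph{if} direction I would dispose of immediately. If both senders offer full information in some equilibrium, then on the equilibrium path $R$ best responds to the pair (full, full); since full information gives $R$ the largest possible set of feasible garblings, her optimal response there yields precisely her first-best payoff. Hence any such equilibrium is automatically a first-best equilibrium.

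All the content is in the \emph{only if} direction, and the plan is to manufacture a full-information equilibrium out of an arbitrary first-best one. So suppose $E$ is a first-best equilibrium in which the senders offer $(p_1,p_2)$ and $R$ plays $\sigma$. I would define a candidate profile $E'$ in which both senders offer full information and $R$ plays a strategy $\sigma'$ that coincides with $\sigma$ along the path---the same (randomized) order of visits and the same garblings $R$ actually draws from---and that best responds in every off-path subgame, i.e. after a sender has been seen to deviate. This on-path transplant is legitimate because every garbling $R$ uses in $E$ is a garbling of $p_i$, and hence also a garbling of full information, so it remains feasible when the senders offer full information.

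Two things then require checking. First, receiver optimality: on the path this is exactly the Remark, since $R$ attains her first-best payoff in $E$, her on-path play is a best response to full information from both senders, which is the on-path situation in $E'$; off the path $\sigma'$ is a best response by construction, appealing to the later result that $R$'s optimal garbling of a binary distribution is unique and well-behaved so that $\sigma'$ is well-defined. A by-product is that the realized garblings and the selection rule are identical on the paths of $E$ and $E'$, so the two profiles induce the same selection probabilities and hence the same equilibrium payoffs $(\pi_1,\pi_2)$.

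Second, and this is where I expect the real difficulty, I must rule out a profitable unilateral deviation in $E'$. Fix a deviation by sender $i$ to a distribution $p_i'$; by the first Remark it is undetected until $i$ is visited and so cannot change $R$'s visiting behavior, the only effect being $R$'s response once she sees $p_i'$ in place of full information. The decisive economic point is that sender $j$ still offers full information, so $R$ can substitute learning about $j$ for the learning about $i$ that the deviation curtails. The crux---and the main obstacle---is to convert this substitution into the inequality that $i$'s selection probability against $(p_i',\text{full})$ is at most $\pi_i$. The natural route is to show that, under $R$'s optimal response, facing the fully informative opponent leaves the deviator no better off than facing the less informative $p_j$ would in $E$---i.e. that a more informative opponent weakly lowers the deviator's selection probability---whence $i$'s deviation payoff in $E'$ is at most his payoff from offering $p_i'$ in $E$, which is at most $\pi_i$ because $E$ is an equilibrium. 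Establishing this monotone-substitution comparison is exactly what forces the argument to draw on the structural results proved later: the characterization of $R$'s optimal garbling, the conditions under which she optimally learns from a single sender, and the precise sense in which information on one sender substitutes for information on the other; this is why the proof is deferred to the Appendix.
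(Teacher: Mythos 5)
Your ``if'' direction is fine and matches the paper's. The ``only if'' direction, however, has a genuine gap at precisely the step you yourself flag as the crux. Note first that your architecture differs from the paper's: the paper argues by contraposition---if no full-information equilibrium exists, Proposition \ref{main} pins down the parameter region, in that region $R$'s best response to full information is unique and uses \emph{both} senders, and the explicit deviations constructed in the proof of Proposition \ref{binarygeneral} (which change nothing when the deviator is visited first and profitably restrict learning when he is visited second) remain profitable against any candidate first-best profile $(p_1,p_2)$. You instead try to build a full-information equilibrium directly out of $E$, and the entire burden falls on your ``monotone-substitution'' inequality: that a deviator's payoff against a fully informative opponent is weakly below his payoff against the equilibrium opponent $p_j$. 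You give no proof of this, and as a monotonicity principle it is false. Take $k=1$, $\mu=\tfrac{1}{2}$, let the opponent offer the binary distribution with support $\left\{0,\tfrac{3}{5}\right\}$ (mass $\tfrac{5}{6}$ on $\tfrac{3}{5}$), and let the deviator, visited first, offer $\delta_{\mu}$. Against a full-information opponent, $R$ garbles the opponent to $\left\{\tfrac{1}{4},\tfrac{3}{4}\right\}$ and selects the deviator with probability $\tfrac{1}{2}$. Against the opponent $\left\{0,\tfrac{3}{5}\right\}$, Lemma \ref{stage2k} gives $R$'s optimal garbling $\left\{\tfrac{3}{5}-\sqrt{\tfrac{1}{10}},\tfrac{3}{5}\right\}$, under which the deviator is selected only with probability $\sqrt{\tfrac{1}{10}}\approx 0.316<\tfrac{1}{2}$. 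So the \emph{more} informative opponent makes the deviator strictly \emph{better} off---the opposite of what your chain of inequalities requires. The economics is that $R$ garbles to maximize her own payoff, not to punish the deviator: against an opponent whose highest feasible posterior is capped, her garbling must put more mass on that high posterior to preserve the mean, which selects the opponent more often and the deviator less often.

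You might hope to salvage the lemma by noting that this particular $p_j$ cannot arise in a first-best equilibrium, and restricting the claim to equilibrium $p_j$'s. But that restriction is exactly the missing content: one must show that any $p_j$ sustaining a first-best equilibrium lets $R$ replicate her entire first-best learning plan (including the continuations she would want after a deviation), and that across all of $R$'s first-best-achieving responses the deviator's payoff is constant---this is the paper's Lemma \ref{nodeviation} together with its feasibility argument, i.e., essentially the whole proof you were trying to avoid. Moreover, in the parameter region where no full-information equilibrium exists, no version of the substitution lemma can help (there $R$'s first-best response needs both senders and cannot be replicated after a deviation); there one must exhibit explicit profitable deviations against the putative $(p_1,p_2)$, which your argument never constructs and which the paper imports from the proof of Proposition \ref{binarygeneral}. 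As written, your proposal defers its only substantive step to ``structural results proved later,'' and the principle it names to bridge that step is not true.
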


An implication of this is that in establishing conditions for the existence of a full information equilibrium, we establish conditions for an equilibrium where $R$ gets her first best. Hence, we focus our attention on full information, and the next proposition presents our main result.

\begin{proposition}[Full information equilibrium]\label{main}
\begin{enumerate}
    \item For all $k>\frac{1}{2}$, there \emph{is} an equilibrium in which both senders offer full information if and only if $\mu \in \left[\frac{1}{4k},1-\frac{1}{4k}\right]$.
    \item For all $k \in \left(0,\frac{1}{2}\right], \mu \in (0,1)$, there is no equilibrium in which both senders offer full information.
\end{enumerate}

     \end{proposition}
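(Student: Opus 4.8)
The plan is to collapse everything onto one object: the receiver's optimal response to full information from both senders. Because every $p_i$ is a garbling of the fully informative experiment, full information maximizes her option set, so (by the Remark preceding the statement) a candidate full-information equilibrium gives $R$ her first best, and her on-path behavior must be a best response to full information. First I would solve the one-sender subproblem. Facing a full-information experiment with fallback value $\mu$ (the belief she retains about the unexamined sender), $R$ chooses a mean-$\mu$ garbling $q$ to maximize $\int\!\big[\max\{x,\mu\}-k(x-\mu)^2\big]\,dq(x)$, i.e. she concavifies $x\mapsto \max\{x,\mu\}-k(x-\mu)^2$ at $\mu$. A short computation (the cross term makes the variance of a binary support $\{\mu-t,\mu+s\}$ equal to $st$) gives the optimal garble as the symmetric $\{\mu-\frac{1}{4k},\,\mu+\frac{1}{4k}\}$ with equal weights and value $\mu+\frac{1}{16k}$, \emph{provided} this support lies in $[0,1]$. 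Feasibility is precisely $\mu\in[\frac{1}{4k},1-\frac{1}{4k}]$, and the interval is nonempty exactly when $k>\frac12$ — which already isolates the interval of Part 1 and the threshold $k=\frac12$ of Part 2.

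The second and cleaner step is the $\Longleftarrow$ direction for interior $\mu$. I would verify the key \emph{substitution identity}: after the symmetric garble, the marginal value of learning about the second sender is zero (the optimal continuation gain from a fresh full-information sender with fallback $z$ equals $k(|z-\mu|-\frac{1}{4k})^2$, which vanishes at $z=\mu\pm\frac{1}{4k}$). Hence learning from one sender already attains $\mu+\frac{1}{16k}$. I then construct the equilibrium: both senders offer full information; $R$ randomizes the order, learns the symmetric garble from the first sender, and — crucially, off path — whenever she visits a sender and detects a deviation she garbles him to $\delta_\mu$ and learns the symmetric garble from the full-information rival. This response attains the first best (so it is a best response), and it selects the deviator exactly when the rival's symmetric draw realizes low, i.e. with probability $\tfrac12$. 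So no unilateral deviation beats the on-path payoff of $\tfrac12$, giving the novel ``competition $+$ inattention $\Rightarrow$ disclosure'' conclusion.

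The third step is necessity ($\Longrightarrow$): when $\mu\notin[\frac{1}{4k},1-\frac{1}{4k}]$ the optimal single-sender garble is a \emph{boundary} garble that resolves a type with certainty, the substitution identity fails, and I would exhibit a profitable deviation. On the low side ($\mu<\frac{1}{4k}$) this is clean: since equilibrium payoffs sum to $1$, some sender earns at most $\tfrac12$, and by deviating to the uninformative $\delta_\mu$ he forces $R$ to learn only from the full-information rival, whose optimal garble is $\{0,\sqrt{\mu/k}\}$ reaching belief $0$ with probability $1-\sqrt{k\mu}$; $R$ then selects the deviator (held at the prior) whenever the rival realizes $0$, i.e. with probability $1-\sqrt{k\mu}>\tfrac12$. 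This is strictly profitable and $R$'s response here is forced, so it cannot be tie-broken away. The case $k\le\frac12$ follows because then every $\mu$ lies in one of the two tails.

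I expect the main obstacle to be necessity on the high side ($\mu>1-\frac{1}{4k}$). There the uninformative deviation backfires — the rival confirms its good type cheaply and wins — so one needs a \emph{partially} informative deviation that preserves the deviator's ability to be confirmed while denying $R$ the fine examination that would expose mediocrity, and one must check that $R$'s global re-optimization (in particular how finely she examines the rival once she anticipates the restriction) does not neutralize the gain. The delicate point throughout is pinning down $R$'s off-path best response precisely enough to bound the deviator's selection probability: unlike the forced low-$\mu$ response, here $R$ has a continuum of optimal continuations, so the argument must produce a deviation under which \emph{every} best response leaves the deviator above $\tfrac12$, rather than one the receiver can tie-break against. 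Establishing this sequential-learning characterization rigorously — and thereby the sharp equivalence with feasibility of the symmetric garble — is the step I would budget the most care for.
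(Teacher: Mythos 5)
Your sufficiency argument (the ``if'' half of Part 1) is essentially the paper's own: the symmetric garble $\left\{\mu-\frac{1}{4k},\mu+\frac{1}{4k}\right\}$, the vanishing continuation gain $k\left(|z-\mu|-\frac{1}{4k}\right)^2$ at $z=\mu\pm\frac{1}{4k}$, and the off-path response ``garble the deviator to $\delta_\mu$, then learn the symmetric garble from the rival'' are exactly the paper's Lemmata on stage-1/stage-2 garblings and its no-deviation lemma, and fixing one sequentially rational off-path response suffices for existence (modulo the global-optimality verification of ``symmetric garble then stop,'' which you rightly flag as needing the full two-stage concavification rather than just the local substitution identity).

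The genuine gap is the low-side necessity argument, which you describe as ``clean'' but which fails. The probability $1-\sqrt{k\mu}$ is the deviator's selection probability only \emph{conditional on being visited first}. Because experiments are unobserved at Stage 0, the deviation cannot change the visit order, and when the deviator is visited \emph{second} the deviation to $\delta_\mu$ strictly hurts him: on path, $R$'s stage-1 garble at the first-visited sender is $\left\{0,y_1(\mu)\right\}$ with $y_1\in\left(\mu,\frac{1}{4k}\right)$, and after drawing $y_1$ she learns $\left\{0,\sqrt{y_1/k}\right\}$ about the second sender and selects him with probability $\mu\sqrt{k/y_1}>0$, whereas after the deviation she is forced to $\delta_\mu$ and selects the rival (belief $y_1>\mu$) for sure; the deviator then wins only when the rival draws $0$, probability $1-\mu/y_1$, not $1-\sqrt{k\mu}$. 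Concretely, take $k=1$, $\mu=0.24$, and a $50$--$50$ visit order, so each sender's on-path payoff is exactly $\frac{1}{2}$: then $y_1\in(0.24,0.25)$ gives $1-\mu/y_1<0.04$ while $1-\sqrt{k\mu}\approx 0.51$, so the deviation payoff is below $\frac{1}{2}(0.51)+\frac{1}{2}(0.04)<0.28<\frac{1}{2}$ --- strictly unprofitable for \emph{both} senders, so your argument cannot rule out this configuration (and your tie-break ``pick the sender earning at most $\frac{1}{2}$'' can select precisely the sender for whom the deviation is worst). The paper instead deviates to a \emph{partially informative} $\left\{0,h'\right\}$ with $h'<\sqrt{y_1/k}$ chosen so that $y_1<h'-k(h')^2$: this leaves $R$'s behavior unchanged when the deviator is visited first (her garble $\left\{0,y_1\right\}$ remains feasible and optimal) but raises his winning probability when visited second from $\mu\sqrt{k/y_1}$ to $\mu/h'$, a pure gain. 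A deviation of the same partially-informative flavor (the Appendix's Possibility 1: raise the \emph{lower} support point to $l'$ with $w=l'+k(\mu-l')^2$, so that $R$ visiting second accepts the deviator at the prior without learning) is what handles the high side you correctly identify as an obstacle; both constructions are needed again for $k\le\frac{1}{2}$, and note additionally that at $k=\frac{1}{2}$ the prior $\mu=\frac{1}{2}$ lies in neither of your two ``tails,'' so Part 2 is not covered by your feasibility dichotomy there.
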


It is worth highlighting that the parameter in the attention cost function is crucial. If $k$ is above $\frac{1}{2}$, we obtain an interval of priors over which full information is an equilibrium, and this interval expands as $k$ grows. In the limit, as $k \rightarrow \infty$, the interval converges to $(0,1)$, the full range of priors. Thus, by having higher attention costs, the receiver might elicit better information from competing senders.

The following corollary states the same result differently.

\begin{corollary}
For all $\mu \in (0,1)$, there is an equilibrium in which both senders offer full information if and only if $k > \max\left\{\frac{1}{4\mu},\frac{1}{4(1-\mu)}\right\}$ (weak inequality if $\mu \neq \frac{1}{2})$.
\end{corollary}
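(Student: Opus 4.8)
The plan is to derive this corollary directly from Proposition \ref{main} by inverting its characterization: whereas the proposition fixes $k$ and describes the admissible set of priors $\mu$, the corollary fixes $\mu$ and describes the admissible set of costs $k$. Since no new economic content is involved, the whole argument is an algebraic reformulation, and the only care required is in tracking strict versus weak inequalities at the boundary.

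First I would record the two elementary equivalences that let me toggle between the variables. For $k>0$ and $\mu \in (0,1)$,
\[
\mu \geq \frac{1}{4k} \iff k \geq \frac{1}{4\mu}, \qquad \mu \leq 1-\frac{1}{4k} \iff k \geq \frac{1}{4(1-\mu)}.
\]
Intersecting these, the membership condition $\mu \in \left[\frac{1}{4k},1-\frac{1}{4k}\right]$ of Proposition \ref{main}(1) holds if and only if $k \geq \max\left\{\frac{1}{4\mu},\frac{1}{4(1-\mu)}\right\}$. Observe that this maximum equals $\frac{1}{4\mu}$ when $\mu<\frac12$ and $\frac{1}{4(1-\mu)}$ when $\mu>\frac12$, and equals $\frac12$ when $\mu=\frac12$; in the two off-center cases the threshold is \emph{strictly} larger than $\frac12$.

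Next I would stitch together the two parts of the proposition with the $k>\frac12$ hypothesis. Writing $T(\mu):=\max\left\{\frac{1}{4\mu},\frac{1}{4(1-\mu)}\right\}$, for $\mu \neq \frac12$ we have $T(\mu)>\frac12$, so every candidate $k \geq T(\mu)$ automatically satisfies $k>\frac12$ and Proposition \ref{main}(1) applies verbatim; since its interval in $\mu$ is closed, the endpoint $k=T(\mu)$ is included, which yields the weak inequality. Any $k<T(\mu)$ either still exceeds $\frac12$—in which case part (1) excludes the equilibrium because $\mu$ then falls outside the interval—or satisfies $k \leq \frac12$, in which case part (2) excludes it directly. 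For $\mu=\frac12$ the threshold is exactly $\frac12$, so the closed-interval endpoint of part (1) is unavailable (part (1) requires $k>\frac12$ strictly) while part (2) positively forbids $k=\frac12$; hence the inequality must be strict there, matching the corollary's parenthetical.

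The step I expect to be most delicate is precisely this reconciliation at the boundary: Proposition \ref{main}(1) presents a closed interval in $\mu$, suggesting a weak inequality in $k$, yet its accompanying hypothesis $k>\frac12$ is strict and part (2) bans $k=\frac12$. The resolution is that the two formulations agree everywhere except at $\mu=\frac12$, where the $\mu$-interval $\left[\frac{1}{4k},1-\frac{1}{4k}\right]$ collapses to the single point $\frac12$ exactly as $k\downarrow\frac12$; everywhere else the relevant endpoint sits strictly above $\frac12$, so the closed-interval inclusion survives and the inequality stays weak. Verifying that $\mu=\frac12$ is the sole point of tension, and that it produces exactly the strict-versus-weak dichotomy stated, is the crux of the argument.
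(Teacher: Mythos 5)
Your proposal is correct and matches the paper's intent exactly: the paper offers no separate proof, presenting the corollary as merely a restatement of Proposition \ref{main}, and your algebraic inversion (with the careful strict-versus-weak bookkeeping at $\mu = \frac{1}{2}$, where $\max\left\{\frac{1}{4\mu},\frac{1}{4(1-\mu)}\right\}$ collapses to the excluded value $\frac{1}{2}$) is precisely the reasoning that justifies it.
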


Stated this way, one might conjecture that the result is trivially obtained because for high enough values of $k$, $R$ finds it optimal to not learn anything at all even when offered full information. As it turns out, this is not the case, and for any finite $k$ she does undertake some learning from at least one sender when offered full information.

Instead, we obtain the existence result because for high enough values of $k$, $R$ finds it optimal to learn only about the quality of one sender, and completely ignore information on the other. The analysis ahead will clarify how this fact plays a crucial role.

Section \ref{kpositive} provides a proof of this result (and presents additional results), but before we move on to that, it is instructive to examine another benchmark, where $k=0$.

\subsection{Benchmark: No Attention Costs \boldmath (\texorpdfstring{$k=0$}{tex}) \unboldmath}
When $k=0$, it is costless for $R$ to learn. This makes a substantive difference to the analysis, because she never has a strict incentive to garble either sender's distribution--there is no reason to leave any information on the table.\footnote{This setup has been studied in \cite{boleslavsky2018limited}, \cite{Hulko} and other papers, with the difference that they assume that $R$ observes the chosen distributions at stage 0.} Notice that in this case $R$'s first best payoff is obtained \emph{only} when both senders offer full information.

For simplicity, here we assume the following tie-breaking rules i) if the stage 1 draw is $0$ (or $1$), she rejects (or accepts) that sender without visiting the other one, and ii) if the draws from both stages are the same, she selects the sender visited last.

\begin{proposition}[No attention costs]\label{kzero}
Suppose $k=0$. Then the following are true.
\begin{enumerate}
    \item For all $\mu \in (0,1)$, there is an equilibrium in which both senders choose an uninformative distribution.
    \item For all $\mu \in (0,1)$, there is no equilibrium in which both senders offer full information.
\end{enumerate}
\end{proposition}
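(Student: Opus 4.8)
The plan is to handle the two parts separately; Part 1 is essentially free, and all the real work is in Part 2.

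For Part 1, I would reuse the construction from the equilibrium-existence claim verbatim. Let both senders offer $\delta_\mu$ and let $R$ visit no one, selecting sender $1$ with some fixed probability $p\in[0,1]$. Since any garbling of $\delta_\mu$ is again $\delta_\mu$ (at zero cost), $R$ learns nothing by visiting and is indifferent between visiting and not, so not visiting is a best response given the equilibrium belief that both senders play $\delta_\mu$. Because $R$ never visits, no unilateral change of distribution is ever observed (by the remark that deviations cannot affect $R$'s visiting decisions), so each sender's selection probability is independent of his own experiment and no deviation is profitable. The only point to check relative to the $k>0$ version is that the argument survives at $k=0$, which it does because garbling $\delta_\mu$ is costless for every $k$.

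For Part 2, I would argue by contradiction: suppose both senders offering full information is an equilibrium, and exhibit a profitable deviation. First I would pin down on-path behavior. Given full information from both and $k=0$, $R$ strictly prefers to visit (visiting the first sender yields $\mu\cdot 1+(1-\mu)\mu=\mu(2-\mu)>\mu$), so she visits someone first with probability one; write $\rho$ for the probability she visits sender $1$ first and, relabeling if needed, assume $\rho\geq\tfrac12>0$. Under tie-breaking rule (i), the first-visited sender is accepted exactly when his drawn type is $1$, so conditional on being visited first a sender wins with probability $\mu$, while the other sender is never actually visited and wins with probability $1-\mu$ (whenever the first sender's type is $0$).

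The deviation I would use for sender $1$ is to shift the atom that full information places at $0$ slightly upward: offer the binary distribution supported on $\{\epsilon,1\}$ with mean $\mu$, for small $\epsilon>0$ (so the atom at $1$ is $a=(\mu-\epsilon)/(1-\epsilon)$). The point is that this removes the ``sure-rejection'' outcome. When sender $1$ is visited first and the draw is $1$ he is still accepted immediately; but when the draw is $\epsilon\in(0,1)$, tie-breaking (i) no longer applies, and since learning is costless $R$ optimally goes on to visit sender $2$, selecting sender $1$ whenever sender $2$'s type turns out to be $0$. Sender $1$'s conditional win probability thus rises from $\mu$ to $a+(1-a)(1-\mu)$, which tends to $\mu+(1-\mu)^2=1-\mu+\mu^2>\mu$ as $\epsilon\to 0$. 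Because the deviation leaves the branch in which sender $1$ is visited second untouched (there $R$ never visits him, selecting him iff sender $2$ is type $0$), sender $1$'s total payoff strictly increases, by an amount tending to $\rho(1-\mu)^2>0$, contradicting equilibrium.

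The main obstacle, and the place I would be most careful, is verifying that $R$'s continuation after the deviation is genuinely her sequentially rational response, so that the computed deviation payoff is the one she actually delivers in any candidate equilibrium. The key simplification is that choosing $\epsilon\in(0,1)$ dodges every tie: after the draw $\epsilon$, visiting sender $2$ is strictly optimal ($\mu+(1-\mu)\epsilon$ strictly exceeds both $\epsilon$ and $\mu$), and the final comparison of $\epsilon$ against a draw in $\{0,1\}$ is strict in each case, so neither tie-breaking rule is invoked and the deviation's profitability is robust to how ties are resolved. I would close by noting the economic content: without attention costs the first-visited sender is hurt by fully revealing bad news and can always profit by pooling the bad state upward, which is precisely the force that positive attention costs neutralize in the main result.
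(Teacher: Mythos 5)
Your proposal is correct and follows essentially the same route as the paper: part 1 reuses the uninformative construction from the text, and part 2 exhibits the same deviation, since the paper's distribution (mass $\mu-\tfrac{1}{n}$ on $1$ and mass $1+\tfrac{1}{n}-\mu$ on $\epsilon=\tfrac{1}{n+1-\mu n}$) is exactly your binary mean-$\mu$ distribution on $\left\{\epsilon,1\right\}$, with your limit $\rho(1-\mu)^{2}>0$ playing the role of the paper's explicit threshold $\frac{1+2n-\sqrt{1+4n}}{2n}>\mu$. One quibble: your closing claim that \emph{neither} tie-breaking rule is invoked is slightly overstated, since the draw-$1$ branch of the deviation (and the on-path payoff $\mu$) still rely on rule (i); but this side remark is inessential, as both you and the paper prove the result under the stated tie-breaking rules.
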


The reason an uninformative equilibrium exists is identical to that for $k>0$--it is a best response for $R$ to not visit either sender, but then a deviation is not detected and makes no difference to the outcome. The reasoning behind non-existence of a full information equilibrium, on the other hand, is very different for $k=0$ and for small, positive $k$.

For $k=0$, in response to full information from both senders, $R$ visits either one of them, learns his type perfectly, and immediately takes a decision. A sender's deviation cannot make a difference if he is not visited first. But if he is, a deviation to support $\left\{\epsilon,1\right\}$ is profitable, where $\epsilon$ is arbitrarily close to zero. This is because if $R$'s draw from this distribution is $\epsilon$, she continues to learn from the second sender, and rejects him if the draw then is $0$.

When $k$ is any positive quantity, a deviation of this nature does not help--intuitively, even if the stage 1 draw is a small, positive $\epsilon$, $R$ is sure enough of the quality of the first sender that she does not find it worth her while to learn about the other one. 

The following result establishes existence of other (less than fully) informative equilibria when attention costs are absent.

\begin{claim}\label{kzeroother}
\begin{enumerate}
    \item Let $k=0$ and $\mu \leq \frac{1}{2}$. There is an equilibrium in which each player chooses the uniform distribution on $[0,2\mu]$.
    \item Let $k=0$ and $\mu > \frac{1}{2}$. There is an equilibrium in which each sender chooses a CDF with a continuous portion $F(x) = \frac{x}{2\mu}$ on $[0,2(1-\mu)]$ and a point mass of size $2 - \frac{1}{\mu}$ on $1$. In such an equilibrium, $R$'s decision about whom to visit first must be fair (each sender is visited first with probability $\frac{1}{2}$).
\end{enumerate}
\end{claim}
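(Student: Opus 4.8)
The plan is to exhibit the receiver's strategy, check that it is optimal, and then reduce each sender's deviation problem to a one-dimensional comparison against an \emph{effective score function}. Throughout I use that with $k=0$ the receiver never strictly prefers to garble, so both on and off path she draws directly from whatever distribution she faces, and I will treat the two regimes ($\mu\le\tfrac12$ and $\mu>\tfrac12$) through a single linear-domination argument, the first being the atom-free special case. As a preliminary I would record that the proposed $F$ is Bayes plausible, i.e. $\int x\,dF(x)=\mu$ (for $\mu>\tfrac12$ the continuous mass $\tfrac{(1-\mu)^2}{\mu}$ plus the atom contribution $\tfrac{2\mu-1}{\mu}$ sum to $\mu$).

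First I would verify the receiver's prescribed play: randomize the order $\tfrac12$--$\tfrac12$, visit both senders, and select the higher realized belief, with the stated boundary tie-breaks. Selecting the higher posterior is immediate. For the continuation decision, after drawing a first belief $x$ she compares $E_F[\max(x,x')]$ (visit) with $\max(x,\mu)$ (stop), where $x'\sim F$; convexity of $t\mapsto\max(x,t)$ together with $E_F[x']=\mu$ makes visiting weakly optimal, and strictly so for every $x\in(0,1)$ because $\mathrm{supp}(F)$ contains points on both sides of the kink at $x$ (it includes $0$ and, when $\mu>\tfrac12$, the atom at $1$). Symmetry makes her indifferent over the order, so the fair rule is a best response, and on path each sender is selected with probability $\tfrac12$.

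The key object is $W(x)$, the probability that a sender who induces realized belief $x$ is ultimately selected against an opponent playing $F$, averaged over the two equally likely orders. By the Remark a deviation changes neither the order nor the decision to visit the deviator's opponent, so I would compute $W$ by splitting on the order. When the deviator is visited \emph{second}, he is selected (given $x$) with probability $\Pr(x'<x,\,x'\neq1)$, so the atom of $F$ at $1$ costs him the event $\{x'=1\}$. When he is visited \emph{first}, he is selected with probability $\Pr(x'<x)$ for $x<1$ but wins outright at $x=1$ (the receiver accepts without visiting, by rule i). Carrying out these computations and averaging yields $W(x)=\tfrac{x}{2\mu}$ on the continuous support, $W(x)=\tfrac{1-\mu}{\mu}$ on the gap $(2(1-\mu),1)$, and --- crucially --- $W(1)=\tfrac12\big(\tfrac{1-\mu}{\mu}+1\big)=\tfrac{1}{2\mu}$, which lands exactly on the line $x\mapsto\tfrac{x}{2\mu}$. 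Hence $W(x)\le \tfrac{x}{2\mu}$ for all $x\in[0,1]$, with equality precisely on $\mathrm{supp}(F)$. For any Bayes-plausible deviation $G$ the selection probability is then $\int W\,dG\le\tfrac{1}{2\mu}\int x\,dG=\tfrac{1}{2\mu}\cdot\mu=\tfrac12$, while $F$ attains $\tfrac12$; so no deviation is profitable and the profile is an equilibrium. (For $\mu\le\tfrac12$ there is no atom, $W(x)=\min(\tfrac{x}{2\mu},1)$ is concave, and the same bound is plain Jensen.)

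For necessity of fairness when $\mu>\tfrac12$, I would redo the score computation with a given sender visited first with probability $\rho$. His score at $x=1$ becomes $\rho\cdot1+(1-\rho)\tfrac{1-\mu}{\mu}$, which exceeds the line value $\tfrac{1}{2\mu}$ exactly when $\rho>\tfrac12$; then deviating to full information, $G=(1-\mu)\delta_0+\mu\delta_1$, gives him $\mu\rho+(1-\rho)(1-\mu)>\tfrac12$, a strict improvement, and by symmetry $\rho<\tfrac12$ lets the other sender deviate, so only $\rho=\tfrac12$ survives. I expect the main obstacle to be pinning down $W$ correctly --- in particular the interaction of the atom at $1$, the tie-break rules, and the receiver's endogenous, deviation-contingent choice of whether to continue to the second sender. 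The load-bearing and non-obvious fact is that the upward jump of the visited-first score at $x=1$ is \emph{exactly} cancelled by the deficit of the visited-second score there under fair ordering, so that $W$, though not concave, is still dominated by the affine map $\tfrac{x}{2\mu}$; this tangency is what simultaneously delivers the equilibrium and forces the order to be fair.
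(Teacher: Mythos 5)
Your verification of the equilibrium itself is correct and takes a genuinely different route from the paper. The paper computes on-path payoffs directly and then rules out deviations by invoking a reduction to \emph{binary} deviations (citing \cite{whitmeyer2019mixtures}), checking that every binary deviation yields exactly $\frac{1}{2}$; you instead construct the pointwise selection probability $W(x)$ and dominate it by the affine map $x \mapsto \frac{x}{2\mu}$, so that Bayes plausibility alone gives $\int W\,dG \le \frac{1}{2\mu}\int x\,dG = \frac{1}{2}$ for \emph{every} deviation $G$, with equality on $\mathrm{supp}(F)$. Your computation of $W$ is right, including the load-bearing fact that under fair ordering $W(1) = \frac{1}{2}\left(1 + \frac{1-\mu}{\mu}\right) = \frac{1}{2\mu}$ lands exactly on the line. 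This is self-contained (no appeal to the binary-deviation reduction), handles both regimes uniformly, and makes transparent why the atom at $1$ is exactly "priced in." (One small overstatement: for $\mu \le \frac{1}{2}$ and a first draw $x > 2\mu$, visiting the second sender is only \emph{weakly} optimal, since $\mathrm{supp}(F)$ then lies entirely below $x$; this is harmless, as the receiver's prescribed behavior is a tie-break, exactly as in the paper.)

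The necessity-of-fairness step, however, has a genuine flaw as written. You argue that if $\rho > \frac{1}{2}$, deviating to full information yields $\mu\rho + (1-\rho)(1-\mu) > \frac{1}{2}$, "a strict improvement." But profitability must be judged against the deviator's \emph{on-path} payoff under the unfair order, which is not $\frac{1}{2}$: with both senders playing $F$, the favored sender's on-path payoff is $\int W_{\rho}\,dF = \frac{(1-\mu)^2}{2\mu^2} + \frac{2\mu-1}{\mu}W_{\rho}(1)$, which also exceeds $\frac{1}{2}$ when $\rho > \frac{1}{2}$, precisely because his atom at $1$ already collects the elevated score $W_{\rho}(1) = \rho + (1-\rho)\frac{1-\mu}{\mu}$. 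So "deviation payoff $> \frac{1}{2}$" proves nothing by itself. The conclusion is nonetheless correct, and the fix lives entirely inside your own framework: compare $\int W_{\rho}\,dG = \mu W_{\rho}(1)$ (deviation) with $\int W_{\rho}\,dF$ (on path); their difference equals $\frac{(1-\mu)^2}{\mu}\left(W_{\rho}(1) - \frac{1}{2\mu}\right) = \frac{(1-\mu)^2(2\mu-1)(2\rho-1)}{2\mu^2}$, strictly positive exactly when $\rho > \frac{1}{2}$, which recovers the paper's expression for the deviation gain. Equivalently: once $W_{\rho}(1) > \frac{1}{2\mu}$, the line $x\,W_{\rho}(1)$ dominates $W_{\rho}$ with equality only at $\left\{0,1\right\}$, so full information is the \emph{unique} maximizer of $\int W_{\rho}\,dG$ over Bayes-plausible $G$, and $F$ cannot be a best response.
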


\subsection{Positive Attention Costs \boldmath (\texorpdfstring{$k>0$}{tex}) \unboldmath}\label{kpositive}
As previously discussed, for positive attention costs our main result pertains to the full information equilibrium, which is stated in Proposition \ref{main} above. We begin by showing why it is true, and for ease of exposition present the key arguments for $k=1$. The structure of the proof is the same for a generic $k>0$, and the details are relegated to the Appendix. In essence, the argument will be that full information is an equilibrium when $R$ wishes to learn only from one sender.

\subsubsection{Full information equilibrium for \boldmath \texorpdfstring{$k=1$}{tex} \unboldmath}
Recall that for $k=1$, Proposition \ref{main} states that full information is an equilibrium if and only if $\mu \in \left[\frac{1}{4}, \frac{3}{4}\right]$.

Start by considering any $\mu \in (0,1)$, and suppose that each sender offers the fully informative distribution with support $\left\{0,1\right\}$. To analyze $R$'s (on path) best response, we proceed in two steps--first, we determine her stage 2 best response for each belief drawn at stage 1; second, we use that to solve for the optimal stage 1 behavior. We make use of the technique of concavification for this.\\

\boldmath $R'$\unboldmath\textbf{s stage 2 best response:} First let us find the optimal stage 2 garbling, if $R$ visits the sender at that stage.

Say the draw from Stage 1 is $x \in [0,1]$. Then, $R$ selects the second sender if and only if the stage 2 draw $y$ turns out to be higher than $x$.\footnote{It does not matter what we assume about the tie breaking rule when $y=x$. For any distribution offered, $x$ would not belong to the support of the garbling chosen at stage 2. The reasoning is similar to the argument for why the standard Bayesian persuasion solution is not incentive compatible in the single-sender case (see Section \ref{single sender}).\label{footnote}} Her payoff from a stage 2 belief $y$ is then $max\left\{x,y\right\}$, minus the attention cost associated with $y$. Denote this stage 2 payoff by $U_2(y;x)$.

\[U_2(y;x)=max\left\{x,y\right\}-(y-\mu)^2,\]
for $x,y \in [0,1]$. This is piecewise concave in $y$, and Figure \ref{stage2} plots it for a representative value of $x$.

\begin{figure}[ht]
\centering
\includegraphics[scale=0.2]{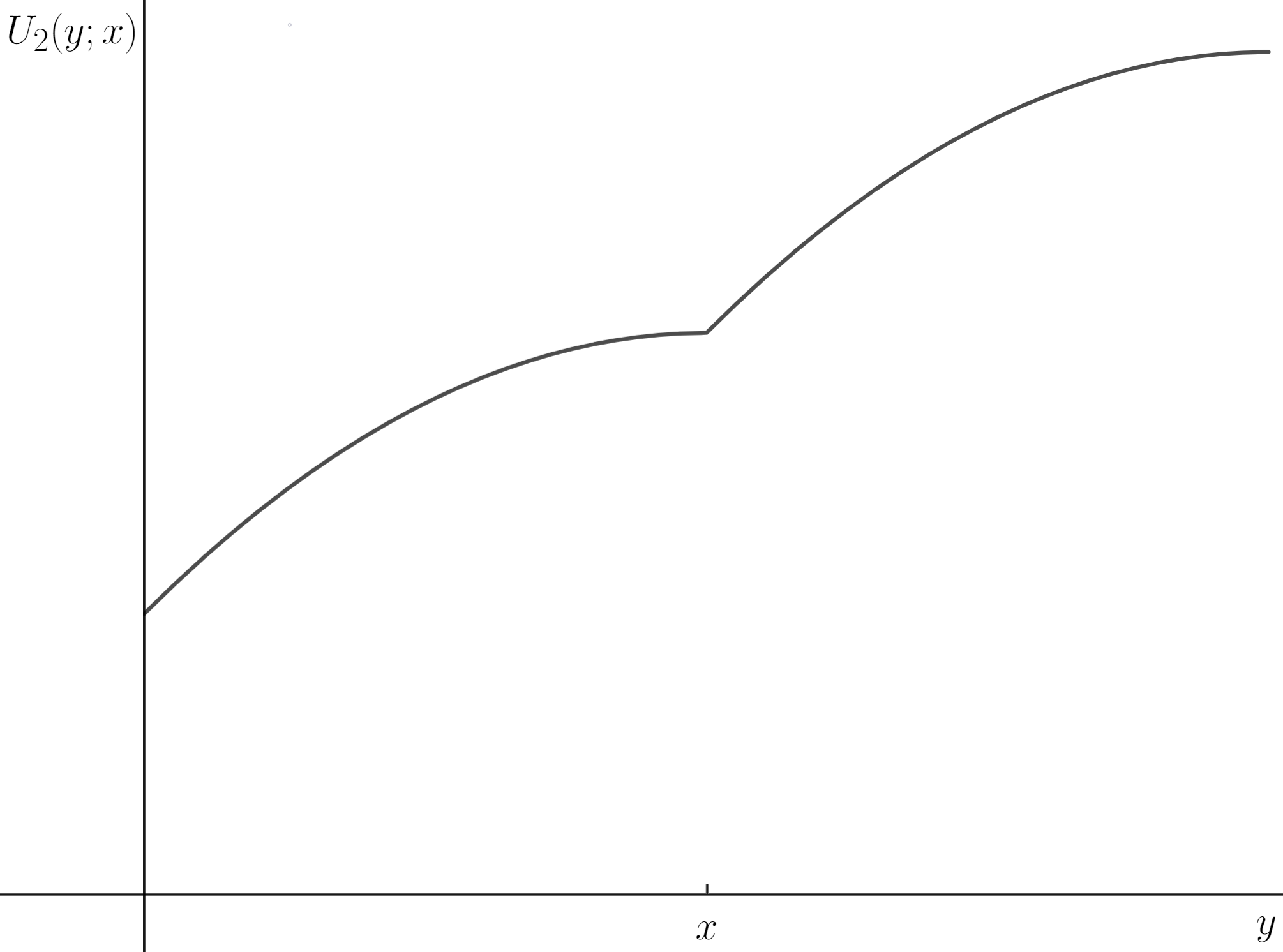}
\caption{$R$'s stage 2 payoffs}
\label{stage2}
\end{figure}

Now, since any distribution is a garbling of the one with support $\left\{0,1\right\}$, we know from \cite{kamenica2011bayesian} that for any given $x$, $R$'s optimal garbling is determined using the concavification of $U_2(y;x)$ over $[0,1]$. The concavification is the red curve in Figure \ref{concav}. It is evident that depending on where $\mu$ lies, the optimal distribution of beliefs is either degenerate on $\mu$, or is binary.

\begin{figure}[ht]
\centering
\includegraphics[scale=0.2]{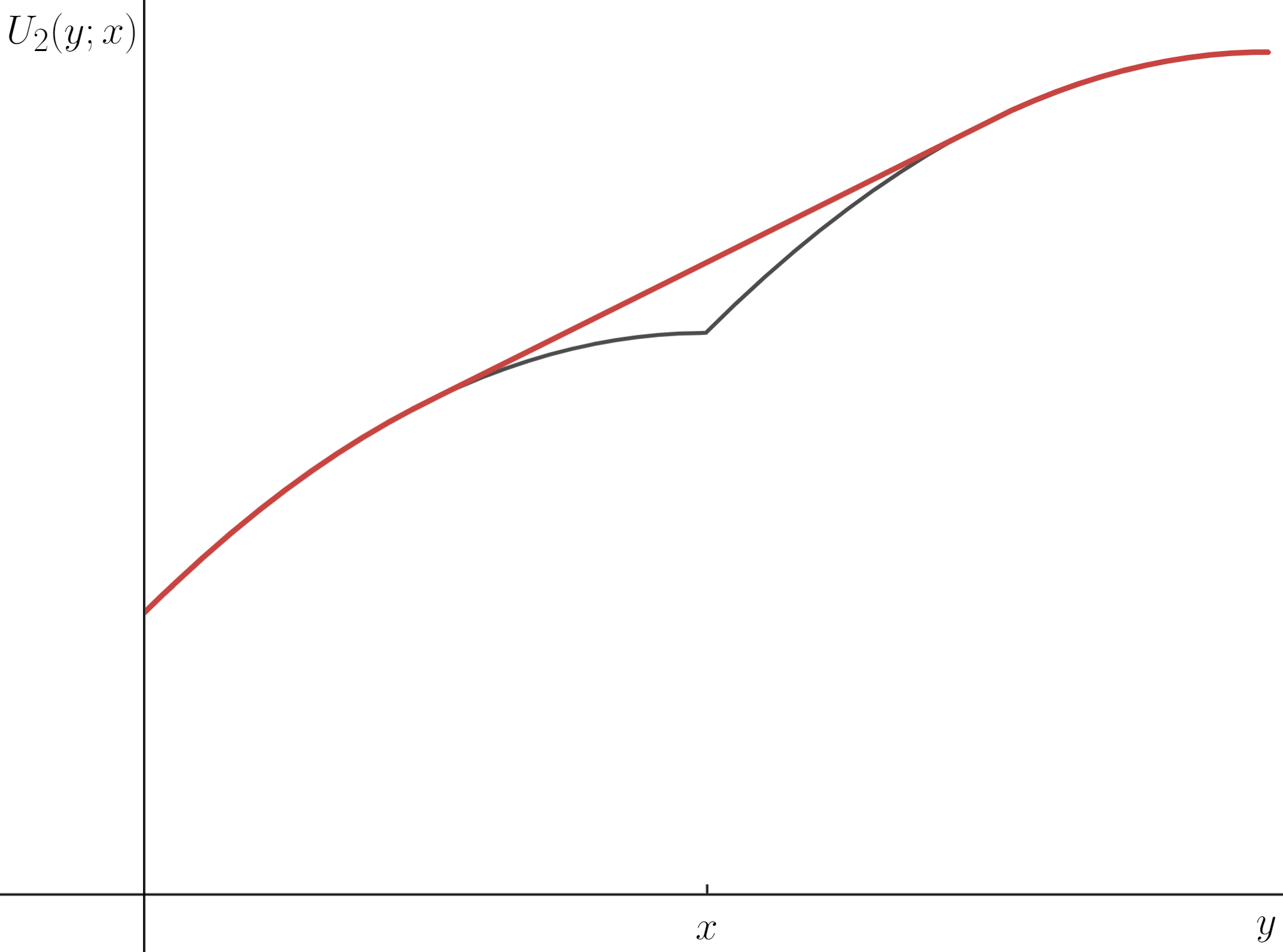}
\caption{Concavification of $R$'s stage 2 payoffs}
\label{concav}
\end{figure}

\begin{lemma}[Stage 2 optimal garbling]\label{distristage2}
Suppose that $R$'s stage 1 draw is $x \in [0,1]$ and she visits the sender at stage 2. $R$'s stage 2 optimal garbling is either degenerate or binary, and its support is as follows.
\begin{enumerate}
    \item If $\mu<\frac{1}{2}$, \[\begin{cases} \left\{x - \frac{1}{4}, x + \frac{1}{4}\right\} & if \  \frac{1}{4} \leq x < \mu + \frac{1}{4}\\ \left\{0,\sqrt{x}\right\} & if \ \mu^2<x< \frac{1}{4}\\ \left\{\mu\right\} & if x \leq \mu^2 \ or \ x \geq \mu + \frac{1}{4}  \end{cases}\]

\item If $\mu=\frac{1}{2}$, \[ \begin{cases} \left\{x - \frac{1}{4}, x + \frac{1}{4}\right\} & if \ \frac{1}{4} < x < \frac{3}{4}\\ \left\{\mu\right\} & if \ x \leq \frac{1}{4} \ or \ x \geq \frac{3}{4} \end{cases}\]

\item If $\mu>\frac{1}{2}$, \[\begin{cases} \left\{x - \frac{1}{4}, x + \frac{1}{4}\right\} & if \ \mu-\frac{1}{4}<x \leq \frac{3}{4}\\ \left\{1-\sqrt{1-x},1\right\} & if \ \frac{3}{4}<x<1-(1-\mu)^2\\ \left\{\mu\right\} & x \leq \mu-\frac{1}{4} \ or \ x \geq 1-(1-\mu)^2 \end{cases}\]
\end{enumerate}
\end{lemma}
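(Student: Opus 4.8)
The plan is to apply the concavification technique of \cite{kamenica2011bayesian}: since every distribution averaging to $\mu$ is a garbling of the fully informative distribution, $R$'s optimal stage-2 value equals $\operatorname{cav} U_2(\cdot\,;x)$ evaluated at the prior $\mu$, and her optimal garbling is supported on the points where the concave hull of $y \mapsto U_2(y;x)$ coincides with $U_2$ itself. The whole task therefore reduces to computing this concave hull over $[0,1]$ and reading off its contact points at $\mu$. As a first step I would record the shape of $U_2(\cdot\,;x)$. It splits at the kink $y=x$ into a left piece $L(y) = x - (y-\mu)^2$ on $[0,x]$ and a right piece $r(y) = y - (y-\mu)^2$ on $[x,1]$, each a downward parabola with leading coefficient $-1$; $L$ peaks at $y=\mu$ with value $x$, while $r$ peaks at $y=\mu+\tfrac12$ with value $\mu+\tfrac14$. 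Because each piece is strictly concave, the hull agrees with $U_2$ except possibly along a single straight segment bridging the two parabolas, so $R$'s optimum is either degenerate at $\mu$ or binary with support equal to the endpoints of that segment.

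The key computation is the bridging segment. The decisive observation is that $L$ and $r$ are translates of one another, so they admit a unique common tangent line: matching slopes forces the two contact points to be $\tfrac12$ apart, and matching intercepts pins them as $x-\tfrac14$ on $L$ and $x+\tfrac14$ on $r$. This yields the generic support $\{x-\tfrac14,\,x+\tfrac14\}$, valid precisely while both points lie in $[0,1]$ and $\mu$ lies between them. When $x$ is too small the left contact point leaves $[0,1]$ and is replaced by the boundary $y=0$; the relevant segment is then the tangent from $\bigl(0,\,x-\mu^2\bigr)$ to $r$, and solving $r(s)-s\,r'(s)=x-\mu^2$ gives the contact point $s=\sqrt{x}$, hence support $\{0,\sqrt{x}\}$. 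The case $\mu>\tfrac12$ is the mirror image: here $r$ peaks beyond $1$, the boundary $y=1$ binds, and the analogous tangent from $\bigl(1,\,1-(1-\mu)^2\bigr)$ to $L$ solves $(s-1)^2=1-x$, giving contact point $1-\sqrt{1-x}$ and support $\{1-\sqrt{1-x},\,1\}$.

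Finally I would translate the feasibility constraints into the stated thresholds on $x$. For $\mu<\tfrac12$, the interior solution requires $x-\tfrac14\ge 0$ together with $x-\tfrac14\le\mu$, which (using $\mu-\tfrac14<\tfrac14$) collapses to $\tfrac14\le x<\mu+\tfrac14$; the boundary solution requires $x<\tfrac14$ and $\sqrt{x}\ge\mu$, i.e.\ $\mu^2<x<\tfrac14$; and outside these ranges the hull touches $U_2$ only at $\mu$, giving the degenerate garbling for $x\le\mu^2$ or $x\ge\mu+\tfrac14$. The symmetric thresholds for $\mu>\tfrac12$ follow from the mirror computation, and the $\mu=\tfrac12$ case is the degenerate middle case in which $\mu^2=\tfrac14$ makes the $\{0,\sqrt{x}\}$ regime empty.

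I expect the main obstacle to be bookkeeping at the regime boundaries rather than any single computation. One must verify that the three pieces agree at the transition points $x=\tfrac14$ and $x=\mu^2$ (so the characterization is continuous in $x$, e.g.\ both formulas give $\{0,\tfrac12\}$ at $x=\tfrac14$), confirm in each regime that $\mu$ genuinely lies in the claimed contact interval, and rule out competing configurations — for instance, a segment anchored at the boundary when the interior common tangent is already feasible — by checking that the common tangent lies weakly above all such alternatives at $\mu$. Since both parabolas share the same leading coefficient, these dominance checks reduce to comparing affine functions and are routine once the tangent points above are in hand.
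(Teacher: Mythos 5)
Your proposal is correct and takes essentially the same approach as the paper: the paper's proof (of the general Lemma \ref{stage2k}, of which this lemma is the case $l=0$, $h=1$, $k=1$) likewise concavifies the piecewise-concave payoff $U_2(y;x)=\max\{x,y\}-k(y-\mu)^2$ via tangency/secant conditions, obtaining interior contact points $x\pm\frac{1}{4k}$ and boundary-adjusted contact points $l+\sqrt{(x-l)/k}$ and $h-\sqrt{(h-x)/k}$, and then classifies the optimum as binary or degenerate according to whether $\mu$ lies strictly between the contact points. Your common-tangent-of-translated-parabolas shortcut and the boundary-anchored tangent computations (yielding $\sqrt{x}$ and $1-\sqrt{1-x}$) reproduce exactly these quantities and thresholds.
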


The interesting thing to note here is that regardless of the prior, if the first stage draw is either very high or very low, then $R$ chooses not to learn anything from the second sender. This is intuitive--for a high enough belief that the first sender's quality is good, she deems it very unlikely that the second sender is better, and does not invest in learning about him. Instead, she accepts the first sender with certainty. Conversely, if the first stage draw is very low, she accepts the second sender with certainty.

Furthermore, the thresholds beyond which there is no learning at stage 2 depend on the prior. The prior is the expected quality of the second sender, so the higher it is, the larger (smaller) the range of first stage beliefs over which the second sender is accepted (rejected) without learning.

If $R$ does choose a binary distribution at stage 2, then she selects the second (first) sender at the higher (lower) belief.

For any stage 1 draw, if the stage 2 optimal garbling involves any learning, $R$ strictly gains from visiting the second sender. If it does not involve any learning, $R$ is indifferent between making the second visit and not, and she may resolve this in any manner.\\

\textbf{\boldmath $R$\unboldmath's stage 1 best response:} Using the above result, it is straightforward to obtain $R$'s first stage continuation payoffs for an arbitrary $x$, and determine her first stage optimal garbling from its concavification over $[0,1]$. This leads to the following.

\begin{lemma}[Stage 1 optimal garbling]\label{distristage1}
Any Bayes plausible distribution with support drawn from the following sets is optimal for $R$ at stage 1.
\begin{enumerate}

\item $\left\{\mu - \frac{1}{4}\right\} \cup \left[\frac{1}{4}, \mu + \frac{1}{4}\right]$ if $\mu \in \left[\frac{1}{4},\frac{1}{2}\right]$.

\item $\left[\mu - \frac{1}{4}, \frac{3}{4}\right] \cup \left\{\mu+\frac{1}{4}\right\}$ if $\mu \in \left[\frac{1}{2},\frac{3}{4}\right]$.

\item $\left\{0,y_1(\mu)\right\}$ if $\mu< \frac{1}{4}$, where $y_1(\mu) \in \left(\mu,\frac{1}{4}\right)$.

\item $\left\{y_2(\mu),1\right\}$ if $\mu>\frac{3}{4}$, where $y_2(\mu) \in \left(\frac{3}{4},\mu\right)$.

\end{enumerate}
\end{lemma}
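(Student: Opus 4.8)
The plan is to solve the receiver's stage-1 problem by backward induction and concavification, exactly mirroring the stage-2 analysis. Having characterized the stage-2 optimal garbling in Lemma \ref{distristage2}, I would first record the resulting stage-2 \emph{value} $W_2(x) := \mathrm{conc}(U_2(\cdot;x))(\mu)$, the most the receiver can secure from the second sender after a stage-1 draw of $x$; this already incorporates her option not to visit sender $2$, since $\delta_\mu$ is always feasible. Because each sender offers full information, the receiver's feasible stage-1 garblings are \emph{all} mean-$\mu$ distributions on $[0,1]$, so her stage-1 value is the concave closure at $\mu$ of the stage-1 objective $U_1(x) := W_2(x) - (x-\mu)^2$, and the optimal garblings are precisely the mean-$\mu$ distributions supported on the contact set $\{x : U_1(x) = \mathrm{conc}(U_1)(x)\}$. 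The lemma is thus a statement about this contact set, and the proof reduces to (i) writing $U_1$ in closed form and (ii) locating where its concave closure touches it.

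The first step is a direct but revealing computation. Substituting the three branches of Lemma \ref{distristage2} (for $\mu<\tfrac12$) into these definitions yields a piecewise $U_1$: on the no-learning regions $U_1(x)=\mu-(x-\mu)^2$ (for $x\le\mu^2$) and $U_1(x)=x-(x-\mu)^2$ (for $x\ge\mu+\tfrac14$); on the region where the stage-2 support is $\{x-\tfrac14,x+\tfrac14\}$ the quadratic terms cancel and $U_1$ collapses to the \emph{affine} function $U_1(x)=\tfrac{x}{2}+\tfrac{\mu}{2}+\tfrac{1}{16}$ of slope $\tfrac12$; and on the region where the stage-2 support is $\{0,\sqrt x\}$ one finds the clean identity $W_2(x)=\mu+(\sqrt x-\mu)^2$, so that $U_1(x)=\mu+(\sqrt x-\mu)^2-(x-\mu)^2$. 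These simplifications are what make the concavification tractable.

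For $\mu\in[\tfrac14,\tfrac12]$ I would then guess that the concave closure coincides with the affine piece, i.e.\ the line $L(x)=\tfrac{x}{2}+\tfrac{\mu}{2}+\tfrac{1}{16}$, and verify $L-U_1\ge 0$ region by region. The payoff of this route is that in every region the difference is a perfect square (or a square times a manifestly positive factor): on $\{x\le\mu^2\}$ one gets $L-U_1=(x-\mu+\tfrac14)^2$; on $\{x\ge\mu+\tfrac14\}$ one gets $(x-\mu-\tfrac14)^2$; and on the $\{0,\sqrt x\}$-region one gets $(\sqrt x-\tfrac12)^2\,(x+\sqrt x+\tfrac14-2\mu)$. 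Each vanishes only at the relevant endpoint, so $L\ge U_1$ with equality exactly on $\{\mu-\tfrac14\}\cup[\tfrac14,\mu+\tfrac14]$; since $\mu$ lies in this interval and $L$ is affine, $\mathrm{conc}(U_1)(\mu)=U_1(\mu)$ and any mean-$\mu$ distribution on the contact set attains it. This is case 1. The range $\mu\in[\tfrac12,\tfrac34]$ (case 2) follows by the reflection $x\mapsto 1-x$, $\mu\mapsto 1-\mu$, under which the decision problem—and the stage-2 solution in Lemma \ref{distristage2}—maps to itself.

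Cases 3 and 4 ($\mu<\tfrac14$, and its reflection $\mu>\tfrac34$) are where the work lies and where I expect the main obstacle. Here the prior sits below the affine region, so the closure at $\mu$ is no longer the line $L$ but a \emph{chord} from $(0,U_1(0))$ tangent to $U_1$ on $(\mu^2,\tfrac14)$, reflecting the fact that $U_1$ is convex-then-concave (``S-shaped'') there. Optimizing the value of the binary garbling $\{0,y_1\}$ reduces the tangency to the implicit equation $y_1^2+\mu^2=\mu\sqrt{y_1}$, whose relevant root behaves like $\mu^{2/3}$ for small $\mu$. The delicate points—each of which must hold for \emph{every} $\mu\in(0,\tfrac14)$—are that this root lies in $(\mu,\tfrac14)$ (so $\mu$ is interior to the chord and $y_1$ is a valid belief), that the tangent chord dominates $U_1$ globally (in particular that its slope is at least $2\mu$, which is exactly what keeps $U_1$ from poking above the chord on the lowest piece $[0,\mu^2]$), and that one should tangent the $\{0,\sqrt x\}$-branch rather than reach into the affine region. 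Establishing these inequalities with the square-root nonlinearity present is the crux; everything else is bookkeeping. Finally, the generic-$k$ version replaces $\tfrac14$ by $\tfrac{1}{4k}$ throughout and runs identically.
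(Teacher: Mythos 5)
Your framework coincides with the paper's: form the stage-1 continuation payoff $U_1(x)=W_2(x)-(x-\mu)^2$ from the stage-2 value and concavify it over $[0,1]$, reading the optimal garblings off the contact set. For cases 1 and 2 your execution is complete and actually more explicit than the paper's own proof (which proceeds qualitatively from the shape of $U_1$): the closed forms check out, namely the affine piece $U_1(x)=\tfrac{x}{2}+\tfrac{\mu}{2}+\tfrac{1}{16}$, the identity $W_2(x)=\mu+(\sqrt{x}-\mu)^2$ on the $\{0,\sqrt{x}\}$ branch, and the perfect-square differences $(x-\mu+\tfrac14)^2$, $(x-\mu-\tfrac14)^2$, and $(\sqrt{x}-\tfrac12)^2\bigl(x+\sqrt{x}+\tfrac14-2\mu\bigr)$ (the last factor is increasing in $\sqrt{x}$ and equals $(\mu-\tfrac12)^2\geq 0$ at $\sqrt{x}=\mu$, so it is nonnegative on the relevant range); these pin the contact set at exactly $\left\{\mu-\tfrac14\right\}\cup\left[\tfrac14,\mu+\tfrac14\right]$, noting $\mu-\tfrac14\leq\mu^2$ iff $(\mu-\tfrac12)^2\geq0$. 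The reflection argument for case 2 is also sound, since $\max\{x,y\}=1-\min\{1-x,1-y\}$ differs from $\max\{1-x,1-y\}$ only by terms affine in the posteriors, which are constant in expectation across Bayes-plausible garblings.

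The gap is in cases 3 and 4. You correctly derive the tangency equation $y_1^2+\mu^2=\mu\sqrt{y_1}$, but you then label the three facts that constitute the claim---that the relevant root lies in $\left(\mu,\tfrac14\right)$, that the chord from $(0,U_1(0))$ dominates $U_1$ globally, and that tangency occurs on the $\{0,\sqrt{x}\}$ branch rather than the affine piece---as ``the crux'' and leave them unproven; as written, cases 3 and 4 are a plan, not a proof. These facts are true and fillable by elementary means: for instance, $g(y)\coloneqq y^2+\mu^2-\mu\sqrt{y}$ satisfies $g(\mu)=\mu^{3/2}(2\sqrt{\mu}-1)<0$ and $g\left(\tfrac14\right)=\left(\mu-\tfrac14\right)^2>0$ for $\mu<\tfrac14$, so the intermediate value theorem places a root in $\left(\mu,\tfrac14\right)$; and your slope condition (chord slope $\geq 2\mu$) reduces, via the tangency identity, to $z-3z^2\geq\mu^2$ on $\left(\mu,\tfrac14\right)$, which holds by concavity of $z(1-3z)$ and checking both endpoints. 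Note also that the paper sidesteps this grind for the key fact $y_1(\mu)>\mu$ with an indirect argument you could borrow: if the tangent point were $\leq\mu$, the unique stage-1 optimum would be $\delta_\mu$, which is impossible because the binary stage-2 garbling it induces could equally well be taken at stage 1 (learning at the two stages is exchangeable), contradicting uniqueness---this is the Remark in the proof of Lemma \ref{stage1k}.
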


The exact expressions for $y_1(\mu)$ and $y_2(\mu)$ are not important. The main thing to note here is that the stage 1 solution always involves some learning, and is unique if and only if $\mu \not \in \left[\frac{1}{4}, \frac{3}{4}\right]$. Interestingly, in spite of the fact that there are only two senders and binary types in this model, $R$ may choose to generate more than two beliefs at stage 1. The reason is that each stage 1 belief is optimally followed by a different degree of learning at stage 2.

Note also that since the stage 1 optimal distribution always involves learning, a visit is necessarily made at this stage. $R$ does not care which sender is visited first, and she may randomize her choice in any way. 

Since there are multiple best responses for $\mu \in \left[\frac{1}{4}, \frac{3}{4}\right]$, we need to make a selection among them. Notice that the \emph{most informative} (in the Blackwell sense) of the optimal distributions has support $\left\{\mu-\frac{1}{4},\mu+\frac{1}{4}\right\}$, and by Lemma \ref{distristage2}, this is the only one among them that is necessarily followed by no learning at stage 2. We assume that $R$ breaks her indifference in favor of this distribution. That is, when indifferent, she'd rather not put off learning until the next stage.\\

In summary: if $\mu \in \left[\frac{1}{4}, \frac{3}{4}\right]$, $R$'s best response to full information is the following. Visit sender 1 with probability $q \in [0,1]$, and Sender 2 with probability $1-q$. Choose the garbling with support $\left\{\mu-\frac{1}{4},\mu+\frac{1}{4}\right\}$ for the sender visited. If the belief drawn is $\mu-\frac{1}{4}$, select the other sender without learning anything from him.\footnote{Either by not visiting him at all, or by visiting but not learning.} If the belief drawn is $\mu+\frac{1}{4}$, select the visited sender without learning anything from the other one.\footnote{ibid.}\\

We now know what happens on path if full information is offered. For $\mu \in \left[\frac{1}{4}, \frac{3}{4}\right]$, it turns out that we can rule out profitable deviations without exactly knowing $R$'s best response to any deviation. For $\mu \not \in \left[\frac{1}{4}, \frac{3}{4}\right]$, we show that there exists a profitable deviation for a sender.\\

\textbf{No profitable deviation for \boldmath $\mu \in \left[\frac{1}{4}, \frac{3}{4}\right]$ \unboldmath:} Consider what a sender achieves by deviating. We have already seen that this does not affect the probability of being the one to be visited first. Moreover, if he is \emph{not} the one to be visited first, his payoffs are not affected, since $R$ does not plan to learn anything from him. So, we only need to consider what happens if he deviates and is visited first. In this case, $R$'s behavior would be altered if $\left\{\mu-\frac{1}{4},\mu+\frac{1}{4}\right\}$ is \emph{not} a garbling of the distribution he deviates to.

Now, say $R$ visits a sender and finds out that she may no longer choose support $\left\{\mu-\frac{1}{4},\mu+\frac{1}{4}\right\}$. Regardless of what the sender's deviation is, though, she is permitted to learn nothing, i.e. choose support $\left\{\mu\right\}$. By Lemma \ref{distristage1}, this is one of her best responses, and by Lemma \ref{distristage2}, this would be optimally followed by visiting the other sender (who has not deviated) and choosing support $\left\{\mu-\frac{1}{4},\mu+\frac{1}{4}\right\}$ for him.

By responding to the deviation in this manner, $R$ ensures a payoff equal to what is attained in the absence of the deviation. Naturally, any other response specified in Lemma \ref{distristage1}, if permissible under the deviation, would also give her the same payoff, and she may choose that instead of support $\left\{\mu\right\}$.

What this essentially implies is that in response to any deviation by the sender visited first, $R$ would choose from the set specified in Lemma \ref{distristage1}, and depending on the belief she draws, follow it with stage 2 behavior specified in Lemma \ref{distristage2}.

This observation, and the next Lemma, are key to our analysis.

\begin{lemma}\label{fullinfoeqm}
For all $\mu \in \left[\frac{1}{4}, \frac{3}{4}\right]$, conditional on being visited first, a sender's expected payoff is the same for any of the receiver responses specified in Lemma \ref{distristage1}.
\end{lemma}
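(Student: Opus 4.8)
The plan is to show that, conditional on being visited first, the first-visited sender's probability of selection is the restriction of a \emph{single affine function} of the stage-1 belief to the support of every optimal stage-1 response. Bayes plausibility then forces the expected payoff to be identical across all such responses (indeed equal to $\tfrac12$), because an affine function's integral depends on the distribution only through its mean.

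First I would set up the payoff. Conditional on being visited first, the first-visited sender is selected exactly when the stage-1 belief $x$ drawn about him exceeds the stage-2 belief $y$ drawn about the other (non-deviating) sender. Hence his expected payoff is $\int_{[0,1]} \phi(x)\, dq_1(x)$, where $q_1$ is the optimal stage-1 garbling and $\phi(x) \coloneqq \Pr(x > Y \mid x)$, with $Y$ distributed according to the optimal stage-2 garbling given $x$ as described in Lemma \ref{distristage2}.

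Next I would compute $\phi(x)$ for each $x$ in the support described in Lemma \ref{distristage1}, using the three regimes of Lemma \ref{distristage2}. When the stage-2 optimum is degenerate at $\mu$ with $x < \mu$ (the low no-learning region), the other sender is selected for sure, so $\phi(x)=0$; when it is degenerate at $\mu$ with $x>\mu$ (the high no-learning region), $\phi(x)=1$; and when it is the binary garbling $\{x-\tfrac14,\,x+\tfrac14\}$, the first sender is selected precisely when the low realization $x-\tfrac14$ is drawn, whose probability is pinned down by mean-preservation (the $\tfrac14$-symmetric support must average to $\mu$) as $\tfrac{1-4\mu+4x}{2}$. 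The crucial observation—which I would verify at the regime boundaries $x=\mu-\tfrac14$ and $x=\mu+\tfrac14$—is that this single affine expression $\phi(x)=\tfrac{1-4\mu+4x}{2}$ also returns $0$ and $1$ respectively, so $\phi$ coincides with one affine function on the entire support $\{\mu-\tfrac14\}\cup[\tfrac14,\mu+\tfrac14]$ (and, symmetrically, on the mirror-image support for $\mu>\tfrac12$).

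Finally, affinity makes the integral depend on $q_1$ only through its mean: $\int \phi\, dq_1 = \tfrac12\bigl(1-4\mu+4\int x\, dq_1(x)\bigr) = \tfrac12(1-4\mu+4\mu)=\tfrac12$, since every optimal $q_1$ is Bayes plausible with mean $\mu$. As this value does not depend on which optimal response is selected, the conditional expected payoff is constant, which is the claim. The main obstacle is the middle step: establishing that $\phi$ is the restriction of a \emph{single} affine function is not obvious a priori, since $\phi$ is assembled from three distinct stage-2 regimes, and it requires checking that the two no-learning regions splice onto the binary region with exactly matching values ($0$ and $1$) at the thresholds. Once that splicing is confirmed, the conclusion is immediate.
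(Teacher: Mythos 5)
Your proposal is correct and takes essentially the same approach as the paper's own proof (Lemma \ref{nodeviation}, specialized to $l=0$, $h=1$, $k=1$): there too, the selection probability conditional on a stage-1 belief $x$ is $0$ at the isolated point $\mu-\frac{1}{4}$ and the affine function $2x-2\mu+\frac{1}{2}$ on $\left[\frac{1}{4},\mu+\frac{1}{4}\right]$, and substituting the Bayes-plausibility constraints yields a payoff of $\frac{1}{2}$ independent of which optimal garbling is chosen. Your observation that the isolated no-learning point splices onto the same affine function (which evaluates to $0$ there) is exactly what makes the paper's substitution come out to a constant.
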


This immediately implies that a unilateral deviation does not affect a sender's payoffs, and it is proven that full information is an equilibrium for $\mu \in \left[\frac{1}{4}, \frac{3}{4}\right]$.

Let's take a closer look at the intuition behind this. We see that in the best case scenario for $R$, i.e. when both senders allow her perfect information, attention costs lead her to learn from only one sender. It does not matter which sender that is--if it is a best response to learn only from the sender visited first, then clearly it is also a best response to learn only from the one visited second.

Now, we say that on path she chooses to learn from the first sender she visits. If, however, that sender deviates and restricts her learning, she is able to compensate for it by learning more from the other sender. The \textit{ex-ante} probability that she makes the correct choice thereby remains unaffected, and the deviating sender is unable to gain. \\

\textbf{Existence of profitable deviation for \boldmath$\mu \not \in \left[\frac{1}{4}, \frac{3}{4}\right]$\unboldmath:}
When $\mu \not \in \left[\frac{1}{4}, \frac{3}{4}\right]$, similar reasoning does not apply, since $R$'s best response is unique and involves learning from both senders on path. In this case, there exists a deviation where a sender profitably restricts $R$'s learning in case he is visited second, without affecting what happens if he is visited first.

In particular, say for instance $\mu< \frac{1}{4}$. Recall that in response to full information, $R$ chooses support $\left\{0,y_1(\mu)\right\}$ at stage 1. Following belief $0$ she immediately accepts the second sender, and following belief $y_1(\mu)$, she chooses support $\left\{0,\sqrt{y_1(\mu)}\right\}$ at stage 2.

It can be shown that $\exists p_2 \in (y_1(\mu),\sqrt{y_1(\mu)})$ such that if a sender deviates to $\left\{0,p_2\right\}$ and is visited second (by $R$ holding a belief $y_1(\mu)$ from stage 1), $R$ chooses $\left\{0,p_2\right\}$ instead of $\left\{0,\sqrt{y_1(\mu)}\right\}$. If he is instead visited first, $R$'s best response is unchanged. Evidently this deviation increases the probability of being selected, and is therefore profitable.

\subsubsection{Other equilibria for \boldmath \texorpdfstring{$k>0$}{tex} \unboldmath}\label{otherbinary}
The analysis so far tells us that for any $k$, first, an uninformative equilibrium always exists; and second, a full information equilibrium exists for a large class of parameter values.

Our focus on full information is not misplaced, in spite of the fact that $R$ never makes use of it even when it is on offer.\footnote{We showed this for $k=1$, but Appendix \ref{prooflong} shows that it is true generally: in response to full information, she visits only one sender and picks support $\left\{\mu-\frac{1}{4k},\mu+\frac{1}{4k}\right\}$.} The reason we focus on it--as we have discussed at length--is Claim \ref{firstbestclaim}: full information is an equilibrium if and only if $R$ can get her first best outcome in an equilibrium.

Our interest in equilibria where $R$ gets her first best is natural--first, the existence of such equilibria is surprising; second, in many situations it is appropriate to select receiver-preferred equilibria.

This brings us to two questions. First, what can we say about other equilibria (besides full information) that give $R$ her first best payoff? And second, what can we say about equilibria that do not give her the first best?

The answer to the first question is easy. There is a whole class of such equilibria, where the distributions offered by the senders allow her to respond exactly as under full information.

\begin{claim}\label{obvious}
Suppose $k>\frac{1}{2}$ and $\mu \in \left[\frac{1}{4k},1-\frac{1}{4k}\right]$. For $i \in \left\{0,1\right\}$, let $p_i \in \Delta[0,1]$ be any distribution with expectation $\mu$, and of which the distribution with support $\left\{\mu-\frac{1}{4k},\mu+\frac{1}{4k}\right\}$ is a garbling. Then, there is an equilibrium in which sender $i$ offers the distribution $p_i$. Such an equilibrium is outcome equivalent to full information.
\end{claim}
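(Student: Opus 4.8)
The plan is to reduce this to the full-information equilibrium analysis already carried out, since offering a distribution $p_i$ of which $D \coloneqq \left\{\mu-\frac{1}{4k},\mu+\frac{1}{4k}\right\}$ is a garbling leaves $R$ with exactly the options that mattered on and off the path of the full-information equilibrium. Recall (from the $k=1$ analysis and its generalization in the Appendix) that $D$ is $R$'s stage-1 optimal garbling against full information, and that she attains her first best by drawing $D$ from one sender and learning nothing from the other. I would specify the candidate receiver strategy as follows: randomize over the order of visits in any fixed way; upon visiting whichever sender is first, choose the garbling $D$ (feasible, since $D$ is a garbling of $p_i$ by hypothesis), and after the draw select the first sender if it equals $\mu+\frac{1}{4k}$ and the second sender, without further learning, if it equals $\mu-\frac{1}{4k}$; off path, if the sender visited first has deviated so that $D$ is no longer a garbling of what he offers, choose $\delta_\mu$ at stage 1 and then $D$ from the (non-deviating) sender visited second.

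First I would establish outcome equivalence, i.e. that this strategy is a best response to the profile $(p_1,p_2)$ and reproduces the full-information outcome. The set of garblings of $p_i$ is contained in the set of all Bayes-plausible distributions (the garblings of full information), and $D$ lies in it by hypothesis. Since $D$, followed by no stage-2 learning, is globally optimal for $R$ --- it attains her first best, the maximum over all sender profiles --- and since it is feasible here, it remains optimal within the smaller feasible set. The stage-2 no-learning choice is likewise feasible ($\delta_\mu$ is always a garbling) and optimal, because for the on-path stage-1 draws $\mu\pm\frac{1}{4k}$ the stage-2 optimum is no learning even against full information (\Cref{distristage2}), hence a fortiori against the less informative $p_j$ (by Blackwell). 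Thus on path $R$ draws $D$ from one sender and learns nothing from the other, exactly as in the full-information equilibrium, giving outcome equivalence.

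Next I would rule out profitable deviations, following the full-information argument essentially verbatim. Since $R$ does not observe the chosen distributions at stage $0$, a deviation cannot alter her order of visits, so it suffices to treat the deviating sender as visited first or second. If he is visited second, then on path $R$'s stage-1 draw is $\mu\pm\frac{1}{4k}$, after which no stage-2 learning is optimal even against full information; therefore no deviation to any $\tilde p$ can induce her to learn about him, and his selection probability is unchanged. If he is visited first and deviates to $\tilde p$: when $D$ is still a garbling of $\tilde p$, $R$ plays $D$ and his payoff is unchanged; when it is not, $R$ falls back to the off-path response (choose $\delta_\mu$, then $D$ from the non-deviating sender), which is feasible and first-best-achieving, hence optimal. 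Both the on-path choice $D$ and this fallback are responses of the form described in \Cref{distristage1}, so by \Cref{fullinfoeqm} the deviating sender's expected payoff is identical across them. Hence no deviation is profitable and the profile is an equilibrium.

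The main obstacle, and the only point requiring care beyond quoting the full-information analysis, is the off-path specification: one must verify that the fallback response (no learning at the deviator, then $D$ at the non-deviating sender) is simultaneously \textbf{(i)} feasible under an arbitrary deviation $\tilde p$ --- guaranteed because $\delta_\mu$ is always a garbling and $D$ is a garbling of the non-deviating $p_j$; \textbf{(ii)} a genuine best response --- guaranteed by the global-optimality/first-best argument together with the facts that $\left\{\mu\right\}$ is itself among the stage-1 optima in \Cref{distristage1} and that $D$ is the stage-2 optimum following a stage-1 draw of $\mu$ (\Cref{distristage2}); and \textbf{(iii)} payoff-neutral for the deviator --- guaranteed by \Cref{fullinfoeqm}. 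Once these three are in place the result follows, and indeed the argument makes transparent that \emph{any} $p_i$ of which $D$ is a garbling supports $R$'s first-best outcome for precisely the same reason full information does.
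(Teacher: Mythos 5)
Your proof is correct and follows essentially the same route as the paper's: verify that the stage-1 garbling $\left\{\mu-\frac{1}{4k},\mu+\frac{1}{4k}\right\}$ followed by no stage-2 learning remains feasible, and hence still a best response, under $(p_1,p_2)$; then handle deviations via the fallback of learning nothing at the deviator and choosing that garbling at the non-deviating sender, with payoff-neutrality for the deviator delivered by the invariance lemma. The only nominal discrepancy is that, since the claim concerns general $k>\frac{1}{2}$, the citations should be to the general-$k$ Lemmata \ref{stage2k}, \ref{stage1k} and \ref{nodeviation} rather than their $k=1$ specializations (Lemmata \ref{distristage2}, \ref{distristage1} and \ref{fullinfoeqm}), which is exactly what the paper's proof does.
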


This can be used to construct specific examples such as the following.

\begin{corollary}\label{uniformkpositive}
\begin{enumerate}
    \item Let $\mu \leq \frac{1}{2}$. Then there is an equilibrium in which both senders offer the uniform distribution on $[0,2\mu]$ if $k \geq \frac{1}{2\mu}$.
    \item Let $\mu > \frac{1}{2}$. Then there is an equilibrium in which both senders offer a CDF with a continuous portion $F(x) = \frac{x}{2\mu}$ on $[0,2(1-\mu)]$ and a point mass of size $2 - \frac{1}{\mu}$ on $1$ if $k \geq \frac{1}{2\mu}$ for $\mu \leq \frac{2}{3}$, and if $k \geq \frac{1}{4(1-\mu)}$ for $\mu \geq \frac{2}{3}$.
\end{enumerate}
\end{corollary}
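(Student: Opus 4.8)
The plan is to obtain this as an immediate corollary of Claim \ref{obvious}. That claim reduces the task, for each of the two proposed families, to three checks: (i) the offered distribution $p_i$ has mean $\mu$; (ii) the stated bound on $k$ forces both $k>\tfrac12$ and $\mu\in[\tfrac{1}{4k},1-\tfrac{1}{4k}]$; and (iii)---the only substantive point---the two-point distribution $q$ with support $\{\mu-\tfrac{1}{4k},\mu+\tfrac{1}{4k}\}$ (hence equal weights, since its mean is $\mu$) is a garbling of $p_i$. Checks (i) and (ii) are routine: for the uniform on $[0,2\mu]$ the mean is $\mu$ by symmetry, for the Part-2 family a short integration gives mean $\mu$, and the parameter inequalities follow from $k\geq\tfrac{1}{2\mu}$ (resp.\ $k\geq\tfrac{1}{4(1-\mu)}$) by elementary algebra (e.g.\ $k\geq\tfrac{1}{2\mu}$ with $\mu\le\tfrac12$ gives $k\geq1>\tfrac12$, and $\tfrac{1}{4k}\le\tfrac{\mu}{2}\le\mu$). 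So the real work is the garbling condition.

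For (iii) I would use the integrated-CDF characterization of a mean-preserving contraction: since $q$ and $p_i$ share the mean $\mu$, $q$ is a garbling of $p_i$ if and only if $\widehat{F}_q(t)\le\widehat{F}_{p_i}(t)$ for all $t$, where $\widehat{F}(t):=\int_0^t F(s)\,ds$. Writing $c:=\tfrac{1}{4k}$, $a:=\mu-c$, and $b:=\mu+c$, the integrated CDF of $q$ is $0$ on $(-\infty,a]$, equals $\tfrac12(t-a)$ on $[a,b]$, and equals $t-\mu$ on $[b,\infty)$. The integrated CDF of $p_i$ is quadratic on the uniform portion, linear with slope $\tfrac{1-\mu}{\mu}$ on the flat portion (in Part 2), and again equals $t-\mu$ once all mass has been passed. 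Setting $D:=\widehat{F}_{p_i}-\widehat{F}_q$, condition (iii) is equivalent to $D\ge0$ everywhere, and $D$ vanishes below the support and at or above the top of the support.

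The heart of the argument is that $D$ restricted to $[a,b]$ is convex (an integrated CDF minus an affine function of slope $\tfrac12$), so on $[a,b]$ its minimum is either at the interior stationary point where $F_{p_i}(t)=\tfrac12$ or at an endpoint. For the uniform case and for Part 2 with $\mu\le\tfrac23$, the CDF $F_{p_i}$ attains $\tfrac12$ at the interior point $t=\mu$, and $D(\mu)\ge0$ reduces exactly to $c\le\tfrac{\mu}{2}$, i.e.\ $k\ge\tfrac{1}{2\mu}$. For Part 2 with $\mu\ge\tfrac23$ the CDF stays strictly below $\tfrac12$ until its atom at $1$ (since $\tfrac{1-\mu}{\mu}\le\tfrac12$), so $D$ is nonincreasing on $[a,b]$ and is minimized at the upper atom $t=b$; there $D(b)\ge0$ reduces exactly to $c\le1-\mu$, i.e.\ $k\ge\tfrac{1}{4(1-\mu)}$. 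The two binding regimes switch precisely at $\mu=\tfrac23$, where $\tfrac{\mu}{2}=1-\mu$, matching the corollary's case split.

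The step I expect to require the most care is confirming that this single tightest point governs the \emph{whole} inequality rather than merely a local one. One must separately verify $D\ge0$ on the region above $b$ (which follows from the support containment $b\le1$ together with the one-line sign computation $(2\mu-1)(t-1)\le0$ on $[b,1]$) and, where relevant, across the kink between the continuous part and the flat part of $F_{p_i}$. These are routine monotonicity and sign checks, but they are what rules out a spurious violation away from $t=\mu$ or $t=b$. Once they are in place, the garbling condition holds under exactly the stated thresholds, and Claim \ref{obvious} then delivers the desired equilibria, outcome-equivalent to full information.
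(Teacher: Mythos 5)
Your proposal is correct and takes essentially the same route as the paper: both reduce the corollary to Claim \ref{obvious} and then verify, via the integrated-CDF characterization of a mean-preserving contraction, that the two-point distribution $\left\{\mu-\frac{1}{4k},\mu+\frac{1}{4k}\right\}$ is a garbling of the offered distribution, arriving at exactly the thresholds $k \geq \frac{1}{2\mu}$ and $k \geq \frac{1}{4(1-\mu)}$ with the switch at $\mu = \frac{2}{3}$. The only difference is in the elementary verification: you locate the minimum of $D = \widehat{F}_{p_i}-\widehat{F}_q$ by convexity (interior stationary point $t=\mu$ for $\mu\le\frac{2}{3}$, endpoint $t=\mu+\frac{1}{4k}$ for $\mu\ge\frac{2}{3}$), whereas the paper shows the quadratic $j(x)-l(x)$ has no real root when $k>\frac{1}{2\mu}$ and checks a single point; your endpoint condition $D\left(\mu+\frac{1}{4k}\right)\ge 0$ is equivalent to the paper's support constraint $\mu+\frac{1}{4k}\le 1$, and your treatment of part 2 is actually more explicit than the paper's, which simply declares that case ``analogous.''
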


We choose to present these examples for a reason: Recall from Claim \ref{kzeroother} that these distributions are also equilibria in the $k=0$ scenario, where full information is \emph{not} an equilibrium. In contrast, here these equilibria are in fact outcome equivalent to full information. The difference arises since with attention costs, both full information and these distributions are garbled down to the same thing by $R$.

Turning to the question on equilibria that do not give $R$
her first best, we have the following sharp result that rules out their existence in the class of \emph{binary, symmetric} equilibria.\footnote{That is, equilibria where both senders offer the same distribution, that has binary support.} Essentially, it implies that if a symmetric binary equilibrium exists (for any parameters), so must the full information equilibrium, and in fact it must be outcome equivalent to the full information one--so that $R$ must be getting her first best.
 
\begin{proposition}\label{binarygeneral}
Let the distribution $p$ have support $\left\{l,h\right\}$ with $l \in [0,\mu)$ and $h \in (\mu,1]$.
\begin{enumerate}
    \item If $k>\frac{1}{2(h-l)}$, there is an equilibrium where both senders offer $p$ if and only if $\mu \in \left[l+\frac{1}{4k},h-\frac{1}{4k}\right]$.
    \item If $k \leq \frac{1}{2(h-l)}$, there is no equilibrium where both senders offer $p$.
\end{enumerate}
\end{proposition}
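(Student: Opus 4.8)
The plan is to recognize Proposition \ref{binarygeneral} as nothing more than Proposition \ref{main} viewed through an affine change of variables, and then to separately dispatch the one family of deviations that has no counterpart in the full-information game. If both senders offer $p$ with support $\{l,h\}$, then by the binary-garbling characterization all of $R$'s feasible posteriors lie in $[l,h]$. I would introduce the map $z=(y-l)/(h-l)$ carrying $[l,h]$ onto $[0,1]$ and set $\tilde\mu=(\mu-l)/(h-l)$. Since $y-\mu=(h-l)(z-\tilde\mu)$, the attention cost becomes $k(y-\mu)^2=k(h-l)^2(z-\tilde\mu)^2$, while the selection rule (pick the larger of the two beliefs) is unaffected because an increasing affine transformation preserves the order of posteriors. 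Dividing $R$'s objective by the positive constant $(h-l)$ — which leaves every optimum unchanged — turns her problem into the full-information problem on $[0,1]$ with prior $\tilde\mu$ and cost parameter $\tilde k:=k(h-l)$. Because affine maps preserve mean-preserving contractions, Bayes-plausibility, and the probabilities of the selection events, $R$'s optimal garblings and each sender's selection probability are the images of those in this rescaled game.

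Next I would verify that the hypotheses translate exactly: a one-line computation gives $k>\tfrac{1}{2(h-l)}\iff\tilde k>\tfrac12$, then $\mu\in[l+\tfrac{1}{4k},h-\tfrac{1}{4k}]\iff\tilde\mu\in[\tfrac{1}{4\tilde k},1-\tfrac{1}{4\tilde k}]$, and $k\le\tfrac{1}{2(h-l)}\iff\tilde k\le\tfrac12$. The rescaled game is therefore precisely the setting of Proposition \ref{main}, so the full-information analysis transfers verbatim: the analogues of Lemmas \ref{distristage2} and \ref{distristage1} describe $R$'s stage-wise behavior (in particular her preferred garbling has rescaled support $\{\tilde\mu\pm\tfrac{1}{4\tilde k}\}$, i.e.\ original support $\{\mu\pm\tfrac{1}{4k}\}$), the no-deviation argument built on Lemma \ref{fullinfoeqm} rules out every deviation to a distribution supported in $[l,h]$, and the construction of a profitable restriction when $\tilde\mu$ leaves the interval yields a profitable restriction supported in $[l,h]$. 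This already gives part 2 in full and gives part 1 against all within-$[l,h]$ deviations.

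The only genuinely new step is to rule out, in part 1, a deviation to a distribution $p'$ whose support escapes $[l,h]$ — a \emph{more} informative offer than $p$, which cannot occur in Proposition \ref{main} since $\{0,1\}$ is already maximal. Here I would argue directly. Because $\mu\in[l+\tfrac{1}{4k},h-\tfrac{1}{4k}]$, the garbling with support $\{\mu-\tfrac{1}{4k},\mu+\tfrac{1}{4k}\}$ lies inside $[l,h]$ and hence remains available from the non-deviating sender, who still offers $p$. So, upon being visited first, $R$ can decline to learn from the deviator (choosing the always-feasible $\delta_\mu$) and extract $\{\mu-\tfrac{1}{4k},\mu+\tfrac{1}{4k}\}$ from the other sender, securing her first-best payoff; her best response to $p'$ must therefore itself attain the first best. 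The engine is that, conditional on being visited first, a sender's selection probability given the stage-$1$ belief $x$ equals the weight on the low posterior of the interior garbling $\{x\pm\tfrac{1}{4k}\}$, namely the affine expression $\tfrac12+2k(x-\mu)$, which equals $\tfrac12$ at $x=\mu$ and continues to coincide with the selection probability at the no-learning thresholds $\mu\pm\tfrac{1}{4k}$. Any first-best-attaining response has a stage-$1$ garbling of mean $\mu$ supported in this affine region, so the martingale property forces the deviator's selection probability to equal $\tfrac12$, its on-path value; no more-informative deviation pays.

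The main obstacle I anticipate is exactly this last step: confirming that the affineness of the selection probability — the content of Lemma \ref{fullinfoeqm} — survives for garblings of an arbitrary, possibly more informative, $p'$, and in particular that a first-best-attaining response never places stage-$1$ mass in the non-affine ``corner'' regions (those producing supports of the form $\{l,\cdot\}$ or $\{\cdot,h\}$), where the selection probability collapses to the indicator of $\{x>\mu\}$ and is no longer affine. This is precisely the property that fails once $\mu$ leaves $[l+\tfrac{1}{4k},h-\tfrac{1}{4k}]$, which is why the interval in the statement is sharp and why the rescaling-plus-corner argument, rather than a bare appeal to Proposition \ref{main}, is needed.
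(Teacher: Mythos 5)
Your proposal is correct in substance but takes a genuinely different route from the paper. The paper works directly on $[l,h]$: it re-derives the concavification analysis for an arbitrary binary support (Lemmata \ref{stage2k}, \ref{stage1k} and \ref{nodeviation}) and proves Proposition \ref{binarygeneral} from those, obtaining Proposition \ref{main} afterwards as the special case $l=0$, $h=1$. You invert this order: establish the full-information case and transport it by the affine map $z=(y-l)/(h-l)$, $\tilde\mu=(\mu-l)/(h-l)$, $\tilde k=k(h-l)$, whose parameter translations you verify correctly. One structural point: since in the paper Proposition \ref{main} is \emph{proved by} Proposition \ref{binarygeneral}, a bare citation of it would be circular; your plan is non-circular only if you first prove the general-$k$ full-information analysis on $[0,1]$ independently (i.e., the $l=0$, $h=1$ instances of Lemmata \ref{stage2k}--\ref{nodeviation}), after which the rescaling genuinely spares you from redoing that algebra for general $\left\{l,h\right\}$. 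What the rescaling cannot deliver---and you identify this precisely---are deviations to distributions whose support escapes $[l,h]$; your treatment of them (decline to learn from the deviator, extract $\left\{\mu-\frac{1}{4k},\mu+\frac{1}{4k}\right\}$ from the rival, observe that this attains the first-best upper bound, so it is a legitimate best response) is exactly the deterrence argument the paper runs in Appendix \ref{prooflong}, which, because its lemmata live on the true belief space, covers inside- and outside-$[l,h]$ deviations uniformly.

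Two caveats. First, you treat only the case in which the deviator is visited first. For a more-informative deviation by the sender visited second, either specify that the receiver simply does not make the second visit when she plans to learn nothing there (she is indifferent, and then the deviation is never observed), or note that at the threshold beliefs $\mu\pm\frac{1}{4k}$ learning nothing at stage 2 remains optimal against \emph{any} $p'$, by the same sandwich logic applied at stage 2 (no learning is optimal there even against full information, and any garbling of $p'$ is a garbling of full information). Second, the ``engine'' you flag as the main obstacle---that \emph{every} first-best-attaining response lies in the affine region, so that all of them give the deviator exactly $\frac{1}{2}$---is more than you need. Subgame perfection lets the equilibrium specify which best response the receiver plays off path; choosing the ignore-and-switch response already pins the deviator's payoff at $\frac{1}{2}$, which is how the paper closes the argument (via Lemma \ref{nodeviation}). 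So the obstacle you anticipate can be bypassed entirely. (Your stronger claim is in fact true: outside $\left[\mu-\frac{1}{4k},\mu+\frac{1}{4k}\right]$ the relevant continuation value is strictly concave, so no optimal stage-1 garbling places mass in the corner regions; but the proposition does not require it.)
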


The proof uses arguments similar to those for the full information equilibrium.\footnote{For $h=1$, $l=0$ this proposition is identical to Proposition \ref{main}.} 

Beyond this, characterizing \emph{all} equilibria of the game is beyond the scope of this paper, a major reason being that little practical can be said about the set of garblings of a non-binary distribution.

\section{Extensions}\label{extensions}

Here, we illustrate that our main results continue to hold under a variety of different modelling choices. We begin by demonstrating that they continue to hold if instead the senders' experiment choices are public.

\subsection{Publicly Chosen Experiments}\label{extension1}

So far we have considered a scenario where $R$ does not observe the experiment chosen by a sender until she visits him. This is a natural assumption in many situations--e.g., pharmaceutical companies have much leeway in dissemination of information to doctors, and it may not be known how detailed the research published on a drug is until one takes a look through it. As another example, one does not know how many customer reviews a seller has allowed to be posted on his website until one visits the website.

However, there can be other applications where one might expect the senders' experiments to be observed when they are chosen, i.e. at Stage $0$ of our game. In this case, the strategic considerations remain the same except for an important difference--posted experiments allow a sender's deviation to be observed by $R$, and hence deviations affect her order of visits. For example, if both senders are expected to choose the same experiment, then $R$ is indifferent between the order of visits; by deviating, a sender can break this indifference.

We show that our main results continue to hold in this scenario.

\begin{proposition}\label{publicexp}
Suppose that the experiments chosen by the senders are observed by $R$ at Stage 0 of the game. Then Proposition \ref{binarygeneral} still holds.
\end{proposition}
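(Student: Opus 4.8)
The plan is to reduce everything to the baseline analysis of Proposition~\ref{binarygeneral} and to isolate the single structural change introduced by public experiments: the Remark that a deviation cannot affect $R$'s decision of whom to visit first no longer holds, so $R$ may now condition her \emph{order} of visits on an observed deviation. I would argue that this extra flexibility leaves the on-path analysis untouched and only reinforces the incentives that sustain (or break) the symmetric binary equilibrium, so that the same thresholds $k>\frac{1}{2(h-l)}$ and $\mu\in\left[l+\frac{1}{4k},\,h-\frac{1}{4k}\right]$ continue to govern the result. The first step is on-path equivalence: when both senders offer the same $p$ with support $\{l,h\}$, observing the experiments at Stage $0$ reveals nothing that distinguishes the two senders, so $R$ is still indifferent over the order of visits and her best response is characterized exactly as in the baseline (the general-binary analogues of Lemmas~\ref{distristage2} and~\ref{distristage1}). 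Hence on-path payoffs, and therefore the boundary conditions on $k$ and $\mu$, are inherited without change.

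For the existence (``if'') direction I would re-examine deviations under the new observability. The key claim is that $R$'s ability to re-order upon observing a deviation can only \emph{help} her nullify it. Concretely, when one sender deviates while the other still offers $p$, $R$ retains the option of visiting the non-deviating sender first and playing her on-path response---choosing the garbling $\left\{\mu-\frac{1}{4k},\mu+\frac{1}{4k}\right\}$, which is feasible because it is a garbling of $p$ precisely under the maintained condition on $\mu$---and learning nothing from the deviator on the relevant belief path. This replicates the no-deviation outcome, so no deviation can be strictly profitable. Combined with the substitutability argument behind Lemma~\ref{fullinfoeqm} (conditional on being visited first, every optimal response of $R$ yields the deviator the same expected payoff), and with the observation that $R$ never uses more than $\left\{\mu-\frac{1}{4k},\mu+\frac{1}{4k}\right\}$ worth of information even when offered strictly more, I would conclude that existence holds under exactly the baseline conditions.

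For the non-existence direction I would exhibit a profitable deviation that \emph{survives} $R$'s re-ordering. When the conditions fail, the on-path response to $p$ involves learning from \emph{both} senders (the preferred single-sender garbling is no longer a garbling of $p$), and the baseline profitable deviation restricts $R$'s learning precisely when the deviator is visited \emph{second}. The crucial point is that such a deviation makes the deviator's experiment (weakly) less informative, so $R$---now able to see this---prefers to visit the more informative, non-deviating sender first; this re-ordering places the deviator in exactly the second-visit position that the deviation is designed to exploit. I would verify that $R$'s best response to the observed deviation indeed keeps ``non-deviator first'' optimal and forces her to accept the deviator's restricted garbling when she reaches him, so the deviation remains strictly profitable, at the same threshold as in the baseline.

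The main obstacle is this last direction: I must rule out that $R$'s enlarged strategy set lets her \emph{neutralize} the deviation rather than merely accommodate it. The delicate case is a deviation to a distribution that might tempt $R$ either to visit the deviator first or to abandon stage-$2$ learning altogether; I would need the concavification comparison to confirm that, given a less informative deviating experiment, the value to $R$ of the order ``non-deviator first followed by learning from the deviator'' strictly exceeds that of any neutralizing alternative, so that $R$'s optimal reaction necessarily leaves the deviator strictly better off.
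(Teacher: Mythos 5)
Your on-path analysis and the existence ("if") direction are essentially the paper's argument and are fine: when a deviation is observed, $R$ can visit the non-deviating sender first, garble to $\left\{\mu-\tfrac{1}{4k},\mu+\tfrac{1}{4k}\right\}$, and stop; this is a best response because it attains her first-best payoff, and by Lemma \ref{nodeviation} it leaves the deviator with exactly his on-path payoff, so no deviation is profitable.

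The non-existence direction, however, has a genuine gap: your claimed re-ordering goes the wrong way. Take the parameter region with $w<\mu$ and the baseline deviation to $\left\{l',h\right\}$ with $w=l'+k(\mu-l')^2$. That restriction does not bind at stage 1 (since $l'<w$, the on-path stage-1 garbling $\left\{w,h\right\}$ is still feasible at the deviator); it binds only at stage 2. Consequently, upon observing it, $R$ strictly prefers to visit the \emph{deviator first}: doing so lets her implement her unrestricted optimum (garble the deviator to $\left\{w,h\right\}$ at stage 1, keep full flexibility at the non-deviator for stage 2), whereas placing the deviator second forces her stage-2 continuation at belief $w$ down to the degenerate garbling, which is strictly worse because the unrestricted optimum is unique in this region and uses a stage-2 garbling that is not a garbling of $\left\{l',h\right\}$. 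So your verification step---that ``non-deviator first'' stays optimal and $R$ is ``forced to accept the deviator's restricted garbling''---fails; $R$ neutralizes exactly the channel you rely on. This is precisely why the paper builds \emph{different} deviations for the public case. For $w<\mu$ it uses $\left\{l',h\right\}$ with $l<l'<w$, and the profit comes from the \emph{order change itself}: $R$ now visits the deviator first with probability one, and when $w<\mu$ the first-visit position is worth strictly more than $\tfrac{1}{2}$, so any sender visited first with on-path probability below one gains. For $w>\mu$ your approach breaks down even more fundamentally, because there the first-visit position is worth \emph{less} than the second, so a deviation that attracts $R$ to the deviator first cannot pay; the paper instead has the sender deviate to \emph{no information}, which makes $R$ learn only from the non-deviator with acceptance threshold $\mu$, and verifies that this raises the deviator's payoff conditional on being visited first as well as conditional on being visited second.
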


Note that Proposition \ref{binarygeneral} encapsulates our main results: it establishes conditions for existence of a full information equilibrium, and proves non-existence of any binary, symmetric equilibrium that is not outcome equivalent to full information.

A full proof of Proposition \ref{publicexp} follows in the Appendix, but by recalling the arguments that lead up to existence of a full information equilibrium for some parameters in the baseline model, it is easy to see why that result is robust: If $R$'s best response to full information in the baseline model is to learn from exactly one sender, then a deviation cannot affect payoffs even if it is observed at Stage $0$. Put succinctly, if $R$ knows that a sender has deviated, she can simply visit the \emph{other} sender and learn from him.

\subsection{Different Means}

Our baseline model assumes that the distributions of the senders' types have identical means. There, the receiver's problem is the most interesting, since \emph{ex-ante} she has very little information to base her choice on. 

In this section, we relax this assumption. Our aim is to show that our main result applies to settings where the prior beliefs on the two senders are different but sufficiently close. Namely, if the two means $\mu_{1}$ and $\mu_{2}$ lie in a particular interval, then there is an equilibrium in which both senders offer full information. %(as for there is a continuum of equilibria in which $R$ can obtain her first best).

For expositional convenience let $k = 1$. We proceed in the same vein as in the proof for Lemma \ref{distristage1}: using Lemma \ref{distristage2} we can obtain the receiver's first stage continuation payoffs for an arbitrary first stage draw $x$, which we then concavify to obtain the receiver's first stage optimal garbling. Suppose that sender $2$ is visited second (we'll revisit this assumption shortly).

\begin{lemma}[Stage 1 optimal garbling]\label{diffmean1}
Any Bayes plausible distribution with support drawn from the following sets is optimal for $R$ at stage 1.
\begin{enumerate}

\item $\left[\mu_{2}-\frac{1}{4}, \frac{3}{4}\right] \cup \left\{\mu_{2}+\frac{1}{4}\right\}$ if $\mu_{2} \in \left[\frac{1}{2},\frac{3}{4}\right]$ and $\mu_{1} \in \left[\mu_{2}-\frac{1}{4}, \mu_{2}+\frac{1}{4}\right]$.

\item $\left[\mu_{2}-\frac{1}{4}, \frac{3}{4}\right]$ if $\mu_{2} \in \left[\frac{3}{4},1\right]$ and $\mu_{1} \in \left[\mu_{2}-\frac{1}{4}, \frac{3}{4}\right]$.

\item $\left[\frac{1}{4}, \mu_{2}+\frac{1}{4}\right] \cup \left\{\mu_{2}-\frac{1}{4}\right\}$ if $\mu_{2} \in \left[\frac{1}{4},\frac{1}{2}\right]$ and $\mu_{1} \in \left[\mu_{2}-\frac{1}{4}, \mu_{2}+\frac{1}{4}\right]$.

\item $\left[\frac{1}{4}, \mu_{2}+\frac{1}{4}\right]$ if $\mu_{2} \in \left[0, \frac{1}{4}\right]$ and $\mu_{1} \in \left[\frac{1}{4}, \mu_{2}+\frac{1}{4}\right]$.

\end{enumerate}
\end{lemma}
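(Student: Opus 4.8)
The plan is to mirror the backward-induction-via-concavification argument behind Lemma~\ref{distristage1}, the only change being that the sender visited second now carries prior $\mu_2$. Because the receiver's stage-2 payoff from a belief $y$ about sender~2 is $\max\{x,y\}-(y-\mu_2)^2$, Lemma~\ref{distristage2} applies \emph{verbatim with $\mu_2$ in place of $\mu$}, and I would use it to write the stage-2 continuation value $V_2(x)$ as a function of the stage-1 belief $x$ about sender~1. This $V_2$ is piecewise: on the interior band where the stage-2 garbling is $\left\{x-\tfrac14,x+\tfrac14\right\}$ one has $V_2(x)=x+\tfrac14\left(\tfrac12-2(x-\mu_2)\right)^2$; on the low (resp.\ high) no-learning band it equals the constant $\mu_2$ (resp.\ $x$), i.e.\ sender~2 (resp.\ sender~1) is accepted outright; and on the two ``corner'' bands it takes the values of the $\left\{0,\sqrt{x}\right\}$ and $\left\{1-\sqrt{1-x},1\right\}$ garblings.

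Next I would form the stage-1 objective $\Psi(x)=V_2(x)-(x-\mu_1)^2$—subtracting the stage-1 attention cost, which is measured against the sender-1 prior $\mu_1$ (with $k=1$)—and identify $R$'s optimal stage-1 garbling with the support of the concave hull of $\Psi$ over $[0,1]$ evaluated at $\mu_1$. The key computation, and the reason the answer is clean, is that on the interior band $\Psi$ is \emph{affine}: expanding $V_2(x)$ there, the quadratic term of $V_2$ exactly cancels the quadratic cost term, leaving slope $\tfrac12+2(\mu_1-\mu_2)$. On the two no-learning bands $\Psi$ is instead a downward parabola, $P_-(x)=\mu_2-(x-\mu_1)^2$ (accept sender~2) and $P_+(x)=x-(x-\mu_1)^2$ (accept sender~1).

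I would then build the hull by matching the interior line's slope to $P_\pm'$. Setting $\tfrac12+2(\mu_1-\mu_2)$ equal to $P_+'(x)=1-2(x-\mu_1)$ gives tangency at $x=\mu_2+\tfrac14$, and equal to $P_-'(x)=-2(x-\mu_1)$ gives tangency at $x=\mu_2-\tfrac14$; in both the $\mu_1$-terms cancel, which is precisely why the isolated support point is $\mu_2\pm\tfrac14$ and not something depending on $\mu_1$. The hull is thus the interior line, extended until it becomes tangent to $P_+$ at $\mu_2+\tfrac14$ and to $P_-$ at $\mu_2-\tfrac14$, and following the parabolas beyond. Since $\Psi$ meets this line only on the interior band and at the tangency points, the region over which the hull is affine and touches $\Psi$ is exactly the interior interval together with the feasible tangency point—matching the stated sets. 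The four cases split according to $\mu_2\gtrless\tfrac12$ (which fixes the interior band's far endpoint at $\tfrac34$ or $\tfrac14$ and decides whether the singleton sits on the high or low side) and according to whether the tangency point lies in $[0,1]$: $\mu_2+\tfrac14\le1$ (i.e.\ $\mu_2\le\tfrac34$) yields the high-side singleton of case~1 but not case~2, and $\mu_2-\tfrac14\ge0$ (i.e.\ $\mu_2\ge\tfrac14$) yields the low-side singleton of case~3 but not case~4. The hypotheses on $\mu_1$ are exactly what place $\mu_1$ in the affine span of the hull, so that every Bayes-plausible, mean-$\mu_1$ distribution on the stated set attains the hull value and is therefore optimal.

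The main obstacle I anticipate is the global verification of the hull rather than any single tangency. I must confirm that the two corner pieces of $V_2$ lie weakly below the relevant tangent line, so that they introduce no extra contact points inside the affine span, and that each tangency point falls in the accept band rather than the adjacent corner band—on the high side this reduces to checking $\mu_2+\tfrac14\ge 1-(1-\mu_2)^2$, i.e.\ $\left(\mu_2-\tfrac12\right)^2\ge0$, and symmetrically on the low side. These are elementary but slightly tedious inequalities; once they are dispatched, the affineness of $\Psi$ on the interior and the $\mu_1$-free tangency points make the stated characterization fall out immediately, exactly as in Lemma~\ref{distristage1}. Throughout I take the order of visits as given (sender~2 second), as the lemma's statement does; the case of sender~1 second is symmetric.
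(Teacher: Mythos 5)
Your proposal is correct and takes essentially the same route as the paper: the paper's proof of this lemma is just the concavification argument of Lemma \ref{stage1k} (specialized to $l=0$, $h=1$, $k=1$) with $\mu_{2}$ as the stage-2 prior and $\mu_{1}$ as the center of the stage-1 attention cost, which is exactly the adaptation you carry out. Your explicit computations—affineness of the stage-1 objective on the interior band because the quadratic terms cancel, the $\mu_{1}$-free tangency points $\mu_{2}\pm\frac{1}{4}$, and the check $\left(\mu_{2}-\frac{1}{2}\right)^{2}\geq 0$ placing the high tangency in the no-learning region—are precisely the substance of that adaptation, so no gap remains.
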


Note that we have not yet determined which sender should be visited first. Our first step is to show that if $\mu_{1}$ and $\mu_{2}$ satisfy one of the conditions for Lemma \ref{diffmean1} when $\mu_{2}$ is visited second, then they satisfy one of the conditions for Lemma \ref{diffmean1} when $\mu_{2}$ is visited first. Formally,

\begin{lemma}\label{exchangeability}
One of the four parametric restrictions in Lemma \ref{diffmean1} holds if and only if one of the four parametric restrictions in Lemma \ref{diffmean1} holds in which $\mu_{1}$ and $\mu_{2}$ are replaced with each other.
\end{lemma}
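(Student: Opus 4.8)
The plan is to prove Lemma~\ref{exchangeability} by showing that the four parametric restrictions in Lemma~\ref{diffmean1}, taken as a whole, carve out a region of the $(\mu_1,\mu_2)$-plane that is \emph{symmetric} under the swap $(\mu_1,\mu_2)\mapsto(\mu_2,\mu_1)$. Concretely, let $S$ denote the union of the four parameter sets appearing in Lemma~\ref{diffmean1} (each being a conjunction of a range condition on $\mu_2$ and a containment condition on $\mu_1$), and let $S'$ denote the analogous union after interchanging the roles of $\mu_1$ and $\mu_2$. The lemma is exactly the claim $S=S'$, or equivalently that $(\mu_1,\mu_2)\in S \iff (\mu_2,\mu_1)\in S$. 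So the whole task reduces to a geometric/inequality verification that the set defined by the four cases is invariant under reflection across the diagonal $\mu_1=\mu_2$.

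First I would extract the common structure of the four conditions. Inspecting them, cases~(1) and~(3) share the joint requirement $\mu_1\in[\mu_2-\tfrac14,\mu_2+\tfrac14]$, i.e. $|\mu_1-\mu_2|\le\tfrac14$, which is \emph{manifestly symmetric} in $\mu_1,\mu_2$; they differ only in which side of $\tfrac12$ the value $\mu_2$ sits on. Cases~(2) and~(4) are the ``boundary'' regimes where $\mu_2$ is extreme ($\mu_2\ge\tfrac34$ or $\mu_2\le\tfrac14$) and the containment on $\mu_1$ is one-sided. My plan is therefore to rewrite the union $S$ as a single set of inequalities in $(\mu_1,\mu_2)$ by taking the disjunction of the four cases and simplifying, then check reflection-invariance of that consolidated description. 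I expect the clean way to state it is something like: $(\mu_1,\mu_2)\in S$ iff $|\mu_1-\mu_2|\le\tfrac14$ together with a condition forcing the \emph{larger} of the two means to not exceed $\tfrac34$ by too much (and symmetrically for the smaller), which is exactly what makes the region a symmetric band around the diagonal.

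The key steps, in order, are: (i) write out $S$ as the explicit disjunction of the four conjunctions; (ii) simplify to a symmetric-looking set of constraints, isolating the symmetric core $|\mu_1-\mu_2|\le\tfrac14$ and handling the asymmetry in the range-of-$\mu_2$ conditions; (iii) form $S'$ by the swap and simplify it the same way; and (iv) verify term-by-term that the simplified descriptions of $S$ and $S'$ coincide. Because the $\mu_2$-range splits $[0,1]$ at the thresholds $\tfrac14,\tfrac12,\tfrac34$ while the $\mu_1$-containment is built from $\mu_2\pm\tfrac14$, the verification naturally proceeds by checking the finitely many sub-regions cut out by these thresholds and confirming each maps into $S$ under the reflection.

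The main obstacle I anticipate is the bookkeeping at the boundary cases~(2) and~(4), where the containment condition on $\mu_1$ is asymmetric (e.g. $\mu_1\in[\mu_2-\tfrac14,\tfrac34]$ rather than a symmetric interval about $\mu_2$) and the range condition on $\mu_2$ is one-sided. The delicate point is to confirm that when $(\mu_1,\mu_2)$ falls in, say, case~(2) with $\mu_2\ge\tfrac34$, the swapped pair $(\mu_2,\mu_1)$ lands in the \emph{right} one of the four cases—which will typically be case~(1) with the new ``$\mu_2$'' (the old $\mu_1$) lying in $[\tfrac12,\tfrac34]$. Making this matching airtight requires checking that the implied inequalities ($\tfrac34\le\mu_2$, $\mu_1\ge\mu_2-\tfrac14\ge\tfrac12$, and $\mu_1\le\tfrac34$) indeed place the swapped pair in case~(1)'s hypotheses; the corner cases at the exact thresholds $\tfrac14,\tfrac12,\tfrac34$ (where adjacent cases overlap) need to be verified to cause no gap. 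Once the symmetric-band description is established, the equivalence is immediate, so I expect the proof to be a short but careful case analysis rather than anything requiring new ideas.
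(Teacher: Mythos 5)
Your proposal is correct and takes essentially the same approach as the paper: both prove the statement by observing that the symmetric core $|\mu_1-\mu_2|\le\tfrac14$ covers cases (1) and (3), and then matching the boundary cases under transposition (e.g., case (2) with $\mu_2\ge\tfrac34$ lands in case (1) after the swap, via $\mu_1\ge\mu_2-\tfrac14\ge\tfrac12$ and $\mu_1\le\tfrac34$), exactly as the paper does. The only difference is cosmetic---you frame it as reflection-invariance of the union of the four regions, while the paper checks the case implications directly---and your sketch of the consolidated region is slightly imprecise (it is the \emph{smaller} mean that must not exceed $\tfrac34$ and the \emph{larger} that must be at least $\tfrac14$), but the case-by-case verification you outline is the paper's argument.
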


Assuming that one of the parametric conditions hold, the second step is to show that the receiver's expected payoff under any optimal protocol in which we assume that sender $1$ is visited first is the same as her expected payoff under any optimal protocol in which we assume sender $2$ is visited first. Hence, it does not matter which sender she visits first, and so she can break ties in that manner in any way that she chooses. 

This step requires just a couple sentences to prove: in each of the four cases described in Lemma \ref{diffmean1}, there is a stage $1$ optimal distribution in which the receiver learns nothing at the first sender. Her expected payoff under the optimal search protocol is thus
\[\mu_{1}^{2} + \mu_{2}^{2} + \frac{\mu_{1} + \mu_{2}}{2} - 2\mu_{1}\mu_{2} + \frac{1}{16}\]
which is invariant to an exchange of $\mu_{1}$ and $\mu_{2}$. Finally, we arrive at the heterogeneous means analog to Proposition \ref{main}:

\begin{proposition}\label{maindiff}
There is a full information equilibrium if $|\mu_{2} - \mu_{1}| \leq \frac{1}{4}$ and
\begin{enumerate}
    \item $\mu_{1}, \mu_{2} \in \left[\frac{1}{4}, \frac{3}{4}\right]$; or
    \item $\mu_{i} \leq \frac{3}{4} \leq \mu_{j}$ for $i, j \in \left\{1, 2\right\}$ and $i \neq j$; or
    \item $\mu_{i} \leq \frac{1}{4} \leq \mu_{j}$ for $i, j \in \left\{1, 2\right\}$ and $i \neq j$.
\end{enumerate}
\end{proposition}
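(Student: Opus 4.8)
The plan is to follow the template that delivered the symmetric result (Propositions \ref{main} and \ref{binarygeneral}): first pin down $R$'s best response to full information from both senders, then exhibit a convenient optimal response in which she learns from exactly one sender, and finally show that a unilateral deviation cannot raise a sender's selection probability because $R$ can substitute information about the deviator with information about the other (still fully informative) sender. The three parametric cases correspond exactly to whether the single-sender binary garbling $\left\{\mu_{\text{other}}-\tfrac14,\mu_{\text{other}}+\tfrac14\right\}$ that $R$ uses is interior (Case 1) or must be truncated at a boundary, becoming $\left\{0,\sqrt{\cdot}\right\}$ or $\left\{1-\sqrt{\cdot},1\right\}$ as in Lemma \ref{distristage2} (Cases 2 and 3). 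The hypothesis $|\mu_1-\mu_2|\le\tfrac14$ is exactly what makes the weight placing the support on mean $\mu_{\text{other}}$ lie in $[0,1]$, i.e. what guarantees the learning region is nonempty.

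First I would assemble the description of on-path play. Lemma \ref{diffmean1} already gives $R$'s stage-$1$ optimal garbling when sender $2$ is visited second, Lemma \ref{exchangeability} shows the parametric hypotheses are symmetric in $(\mu_1,\mu_2)$, and the displayed value $\mu_1^2+\mu_2^2+\tfrac{\mu_1+\mu_2}{2}-2\mu_1\mu_2+\tfrac{1}{16}$ is order-invariant. Together these say $R$ is indifferent over visit order and that in each case there is an optimal protocol in which she learns nothing at the first sender and acquires all her information at the second. I would record the induced on-path selection probability of each sender and note, as in the symmetric argument, that a deviation is undetected before a visit (the Remark), so it cannot change whom $R$ visits first, and that a sender who is not learned from cannot be affected by his own deviation.

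The heart of the argument is then the statement that the learned-from sender also cannot profit, and this is the exact heterogeneous analog of Lemma \ref{fullinfoeqm}. In Case 1 it is immediate: a short computation shows sender $i$ is selected with probability $2(\mu_i-\mu_j)+\tfrac12$ whether $R$ learns about $i$ with threshold $\mu_j$ or about $j$ with threshold $\mu_i$, because both problems use the \emph{interior} binary garbling and the two relevant weights are complementary. Hence substituting one sender's information for the other's is payoff-neutral, and when the learned-from sender restricts $R$'s learning she simply draws her (identical) information from the remaining full-information sender, leaving both selection probabilities fixed. This closes Case 1 just as in Proposition \ref{main}.

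The main obstacle is Cases 2 and 3, where one mean lies outside $\left[\tfrac14,\tfrac34\right]$ and the single-sender garbling must be truncated. Two things break down simultaneously. First, the clean complementarity of the selection weights is no longer transparent, since a truncated garbling $\left\{0,\sqrt{\cdot}\right\}$ generates a different selection probability than its interior counterpart, so I would have to verify the substitution is \emph{exact} rather than merely payoff-neutral. Second, visit order now matters: if the sender whose garbling is truncated is learned from at stage $2$, $R$ cannot undo a restriction she discovers only upon her second visit, whereas if she visits him first she can decline to learn from him and acquire the substitute information at stage $2$. The plan is therefore to specify the visit order so that $R$ detects and adapts to precisely the deviations that would otherwise help the boundary sender, and then to check, case by case and using the explicit garblings of Lemma \ref{distristage2}, that the resulting protocol still attains the order-invariant value and reproduces the on-path selection probabilities. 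Establishing this exactness in the truncated cases is the delicate computation on which the proposition ultimately rests, and I would treat it exactly as the Appendix treats the boundary analysis underlying Proposition \ref{binarygeneral}.
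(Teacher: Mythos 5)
Your Case 1 argument is correct and is exactly the paper's own route: on the support of every stage-1 optimal garbling the probability that the first-visited sender is selected coincides with the single affine function $2x - 2\mu_{j} + \tfrac{1}{2}$ of the stage-1 belief, so Bayes plausibility pins its expectation at $2(\mu_{i}-\mu_{j})+\tfrac{1}{2}$ for every optimal garbling (the analog of Lemma \ref{nodeviation}); and because the degenerate garbling followed by learning $\left\{\mu_{j}-\tfrac{1}{4},\mu_{j}+\tfrac{1}{4}\right\}$ at the other sender is itself optimal when both means lie in $\left[\tfrac{1}{4},\tfrac{3}{4}\right]$, the deviation argument of Proposition \ref{binarygeneral} transfers verbatim.

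The gap is Cases 2 and 3, which you leave as a plan (``specify the visit order,'' then ``check, case by case, \dots that the substitution is exact''). That check is not a deferred routine verification: it is where the argument breaks, and no choice of visit order repairs it. Take $\mu_{1}=0.8$, $\mu_{2}=0.7$, which satisfies your Case 2. If the high-mean sender is visited second, every on-path threshold drawn at the first sender lies in $[0.55,0.75]$, strictly below $\mu_{1}$, so by deviating to \emph{no information} he is selected with probability $1$ instead of $0.7$. If he is visited first, Lemma \ref{diffmean1} puts R's optimal garblings on $[0.45,0.75]\cup\{0.95\}$; since $\mu_{1}=0.8$ lies outside this set, the degenerate fallback is \emph{not} optimal, and against the deviation $\{0.45,0.92\}$ R's (essentially unique, sequentially rational) response is $\{a^{*},0.92\}$ with $a^{*}\approx 0.757$, a tangency point of her continuation value; this selects the deviator with probability $\approx 0.736\cdot 0.608+0.264\approx 0.71>0.7$, and the same deviation also pays when he is visited second ($\approx 0.744$). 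The mechanism is precisely the one you flagged but did not resolve: on the truncated region the selection probability is $0.3/\sqrt{1-x}$ --- neither $0$, $1$, nor affine --- and it equals $1$, strictly above the affine benchmark $2x-0.9$, on $\bigl(1-(1-\mu_{2})^{2},\,\mu_{2}+\tfrac{1}{4}\bigr)=(0.91,0.95)$; a sender whose prior mean exceeds $\tfrac{3}{4}$ can cap his support inside this window and force R there. So executing your ``delicate computation'' refutes Cases 2 and 3 rather than establishing them. For what it is worth, the paper's own one-line proof (``the probability \dots is either $0$, $1$, or affine in $x$'') asserts away exactly this region, so your suspicion about the truncated cases was well founded --- but the proposal as written proves only Case 1.
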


\subsection{Sender Experiment-Dependent Cost Function}

Here, we modify the model by allowing the receiver's information processing cost at a sender to itself depend on the distribution chosen by the sender. That is, we amend the attention cost so that it is now given by (at sender $1$)

\begin{equation}\label{costfunctionex} C(q_{1}, p_{1})= k(p_{1})\int_{[0,1]}(x-\mu)^2 dq_{1}(x),\end{equation}
where $p_{1}$ is the choice of distribution of posterior beliefs by sender $1$ and $q_{1}$ is the garbling of that distribution chosen by the receiver. We assume that $k$ is weakly decreasing in the Blackwell order: the more informative the sender's experiment, the less costly a given information structure is for the receiver. This corresponds to the following intuition: the less informative the seller is, the costlier it is for the buyer to maintain a particular information structure, since she is forced to pay closer attention.\footnote{We thank the associate editor for suggesting such a cost of information.} 

We define $k_{F} \coloneqq k(p_{1}^{B})$ to be the minimum cost parameter, where $p_{1}^{B}$ is the Bernoulli distribution begotten by full information provision by the sender. Naturally, we stipulate that $k_{F}$ is non-negative.

With this modified cost function, our main result continues to hold. Namely, provided the attention cost is sufficiently high, there is an equilibrium in which both senders offer full information:

\begin{proposition}[Full information equilibrium]\label{mainw}
For all $k_{F} > \frac{1}{2}$, if $\mu \in \left[\frac{1}{4k_{F}},1-\frac{1}{4k_{F}}\right]$ then there \emph{is} an equilibrium in which both sellers offer full information.
\end{proposition}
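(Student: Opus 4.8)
The plan is to exploit the fact that the experiment-dependent cost in \eqref{costfunctionex} reduces \emph{on the proposed equilibrium path} to the baseline cost with parameter $k_F$. When both senders provide full information, each plays the most Blackwell-informative experiment $p_i^B$, so $k(p_1)=k(p_2)=k(p_1^B)=k_F$, and the receiver's on-path optimization is \emph{verbatim} the baseline problem with $k$ replaced by $k_F$. I would therefore first invoke the general-$k$ analogues of Lemma \ref{distristage2} and Lemma \ref{distristage1} (the arbitrary-$k>0$ versions established in the Appendix), evaluated at $k=k_F$. Since $k_F>\tfrac12$ the interval $[\tfrac{1}{4k_F},1-\tfrac{1}{4k_F}]$ has nonempty interior, and for $\mu$ in it the receiver's best response learns from exactly one sender, optimally choosing the garbling with support $\{\mu-\tfrac{1}{4k_F},\mu+\tfrac{1}{4k_F}\}$ at that sender and nothing at the other; crucially, ``learn nothing at stage~1, then learn $\{\mu-\tfrac{1}{4k_F},\mu+\tfrac{1}{4k_F}\}$ at stage~2'' is also among her best responses.

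Second, I would pin down the receiver's first-best payoff under the modified cost. Because $k(\cdot)$ is weakly decreasing in the Blackwell order, full information simultaneously minimizes the cost parameter (at $k_F$) and maximizes the receiver's latitude; hence her first best is attained when both senders provide full information and equals the baseline first best with parameter $k_F$. On path she therefore secures her first best.

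Third---the heart of the argument---I would rule out profitable deviations by adapting the substitutability logic behind Lemma \ref{fullinfoeqm}. Since experiments are unobserved until a visit, a deviation cannot alter the receiver's order of visits, and if the deviating sender is not visited first his payoff is unaffected (she learns nothing from him on path, and a less informative, costlier experiment only reinforces that). Suppose then he is visited first and offers some $p_1\neq p_1^B$. The receiver can take the \emph{escape route}: learn nothing from him---which costs zero for \emph{any} parameter $k(p_1)$, because $\int_{[0,1]}(x-\mu)^2\,d\delta_\mu=0$---and then visit the non-deviating sender, who still offers full information at parameter $k_F$, choosing $\{\mu-\tfrac{1}{4k_F},\mu+\tfrac{1}{4k_F}\}$ there. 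This attains her first best. Consequently her value under the deviation equals her first best, so \emph{every} best response to the deviation attains the first best and therefore coincides with a best response to full information; by the general-$k_F$ analogue of Lemma \ref{fullinfoeqm}, all such responses give the deviating sender the same probability of selection as on path. The deviation is thus unprofitable, and full information is an equilibrium.

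The main obstacle to guard against is the worry that the \emph{raised} off-path cost parameter $k(p_1)\ge k_F$ might open some \emph{new} best response for the receiver---one infeasible or suboptimal in the baseline---that happens to favor the deviator. The escape route neutralizes this: because it recovers the first best using \emph{only} the non-deviating sender together with costless inattention at the deviator, the receiver's value under the deviation is pinned at the first best \emph{independently} of $k(p_1)$, and the payoff-equivalence lemma then fixes the deviator's payoff. No separate control of how intensively the receiver would learn at the deviating sender is needed, and indeed the higher cost parameter only makes her less inclined to learn from him---which is precisely why this modification supplements, rather than disrupts, the incentives driving the baseline result. Since Proposition \ref{mainw} asserts only existence, verifying that full information survives these deviations completes the proof.
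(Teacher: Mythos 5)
Your proof is correct and takes essentially the same approach as the paper's (much terser) proof: on path the problem reduces verbatim to the baseline analysis with $k_{F}$ in place of $k$, and any deviation is neutralized because the receiver can learn nothing at the deviating sender---costless regardless of $k(p_{1})$---and recover her first best at the non-deviating sender, so that by the payoff-equivalence result (the general-$k$ analogue of Lemma \ref{fullinfoeqm}) the deviator's selection probability is pinned at its on-path value. Your explicit handling of the worry that the raised off-path parameter $k(p_{1})\geq k_{F}$ could create new receiver best responses merely spells out what the paper leaves implicit, and is a welcome clarification rather than a departure.
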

\begin{proof}
On path, where each sender provides full information, the analysis is unchanged from earlier sections (with $k_{F}$ \textit{in lieu} of $k$), and the searcher's optimal protocol is unaltered. Moreover, should a sender deviate, then again the searcher can behave optimally by learning nothing at the deviating sender, eliminating the possibility for a sender to deviate profitably.
\end{proof}

\section{Conclusion}\label{conclusion}

We study a model of information disclosure by two senders who compete to persuade a receiver. The receiver, instead of passively accepting the experiment adopted by a sender, may choose to garble it before drawing a belief. The lower the informativeness of the chosen garbling, the lower her attention costs are.

We show how for a large class of parameters, it is an equilibrium for the senders to offer at least as much information to the receiver as she would choose for herself, if she could control information provision. In particular, full disclosure by both senders is an equilibrium. Moreover, there is no binary symmetric equilibrium (for any value of parameters) that does not give the receiver this first best outcome. We prove robustness to various modeling assumptions.

This is the result of an interesting trade-off that generalizes beyond the specifics of our model. Due to attention costs, the receiver never finds it worthwhile to learn either sender's type perfectly. That is, even with access to full information, she leaves some scope for further learning about each. Moreover, since her task is to choose between the senders, information on the quality of one sender partially substitutes for information on the quality of the other. For example, learning a lot about the quality of one drug on the market can be just as good (for the accuracy of a doctor's decision) as learning a bit about both alternatives.

Consequently, starting from a situation of full disclosure by both senders, if either sender deviates and restricts the receiver's learning, she has an opportunity to make up for it by using some of the `surplus' information--so far unused--about the \emph{other} sender. The deviating sender thus has limited ability--if any--to affect the overall quality of the receiver's information across the two alternatives.

This channel clearly breaks down in the absence of attention costs (so that the receiver always uses all available information), or if there is only one sender (so that there is no notion of substitutability). Our model identifies novel strategic incentives for greater information disclosure.

We motivated our study with the example of pharmaceutical companies strategically disclosing information to prescribing physicians. The assumption of high attention costs, as well as a low outside option for the receiver, are reasonable in this context. 

The model is well suited to study strategic disclosure in numerous other settings where information is `complex', e.g. the disclosure of features of retirement savings plans to consumers, or the informational content of political campaigns.

\bibliographystyle{econ}

\bibliography{ref}

\appendix
\section{Proofs}\label{appendixproofs}
Consider any $k>0$ and $\mu \in (0,1)$. Let each sender offer support $\left\{l,h\right\}$, with $l \in [0,\mu)$ and $h \in (\mu,1]$.

We begin by proving a series of Lemmata.

\begin{lemma}\label{stage2k}
Suppose that $R$'s stage 1 draw is $x \in [l,h]$ and she visits the sender at stage 2. $R$'s stage 2 optimal garbling is either degenerate or binary, and its support is as follows.
\begin{enumerate}

\item If $k>\frac{1}{2(h-l)}$ and $\mu \leq \min\left\{h-\frac{1}{2k},l+\frac{1}{2k}\right\}$:

\[\begin{cases} \left\{\mu\right\} & if \ x \in [l,l+k(\mu-l)^2] \\ \left\{l,l+\sqrt{\frac{x-l}{k}}\right\} & if \ x \in (l+k(\mu-l)^2,l+\frac{1}{4k}) \\ \left\{x-\frac{1}{4k},x+\frac{1}{4k}\right\} & if \ x \in \left[l+\frac{1}{4k},\mu+\frac{1}{4k}\right) \\ \left\{\mu\right\} & if \ x \in \left[\mu+\frac{1}{4k},h\right]
\end{cases}\]

\item If $k>\frac{1}{2(h-l)}$ and $\mu \geq \max\left\{h-\frac{1}{2k},l+\frac{1}{2k}\right\}$:

\[left[\begin{cases} \left\{\mu\right\} &if \  x \in \left[l,\mu-\frac{1}{4k}\right]  \\ \left\{x-\frac{1}{4k},x+\frac{1}{4k}\right\} & if \   x \in \left(\mu-\frac{1}{4k},h-\frac{1}{4k}\right]\\  \left\{h-\sqrt{\frac{h-x}{k}},h\right\} & if \  x \in \left(h-\frac{1}{4k},h-k(h-\mu)^2\right)\\ \left\{\mu\right\} & if  x \in [h-k(h-\mu)^2,h] \end{cases}\]

\item If $l+\frac{1}{2k} \leq \mu \leq h-\frac{1}{2k}$:

\[\begin{cases} \left\{x-\frac{1}{4k},x+\frac{1}{4k}\right\} & if \ x \in \left(\mu-\frac{1}{4k},\mu+\frac{1}{4k}\right) \\ \left\{\mu\right\} & if \ x \in \left[l,\mu-\frac{1}{4k}\right] \cup \left[\mu+\frac{1}{4k},h\right]
\end{cases}\]

 \item If $k>\frac{1}{2(h-l)}$ and $h-\frac{1}{2k}<\mu<l+\frac{1}{2k}$:

\[\begin{cases} \left\{\mu\right\} & if \ x \in [l,l+k(\mu-l)^2]  \\ \left\{l,l+\sqrt{\frac{x-l}{k}}\right\} & if \ x \in (l+k(\mu-l)^2,l+\frac{1}{4k}) \\ \left\{x-\frac{1}{4k},x+\frac{1}{4k}\right\} & if \ x \in \left[l+\frac{1}{4k},h-\frac{1}{4k}\right] \\ \left\{h-\sqrt{\frac{h-x}{k}},h\right\} & if \ x \in (h-\frac{1}{4k},h-k(\mu-h)^2) \\ \left\{\mu\right\} & if \ x \in [h-k(\mu-h)^2,h]
\end{cases}\]

\item If $k \leq \frac{1}{2(h-l)}$:
\[\begin{cases} \left\{\mu\right\} &  if \ x \in [l, l+k(\mu-l)^2] \\ \left\{l,l+\sqrt{\frac{x-l}{k}}\right\} & if \ x \in (l+k(\mu-l)^2,l+k(h-l)^2)\\ \left\{l,h\right\} & if \ x \in [l+k(h-l)^2,h-k(h-l)^2] \\ \left\{h-\sqrt{\frac{h-x}{k}},h\right\} & if \ x \in (h-k(h-l)^2,h-k(\mu-h)^2) \\ \left\{\mu\right\} & if \  x \in [h-k(\mu-h)^2  h]  \end{cases}\]

\iffalse
\item If $k=\frac{1}{2(h-l)}$:
\[\begin{cases} \left\{\mu\right\} &  if \ x \in [l, l+k(\mu-l)^2] \\ \left\{l,l+\sqrt{\frac{x-l}{k}}\right\} & if \ x \in (l+k(\mu-l)^2,l+k(h-l)^2)\\ \left\{h-\sqrt{\frac{h-x}{k}},h\right\} & if \ x \in (h-k(h-l)^2,h-k(\mu-h)^2) \\ \left\{\mu\right\} & if \  x \in [h-k(\mu-h)^2  h]  \end{cases}\]
\fi

\end{enumerate}
\end{lemma}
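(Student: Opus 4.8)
The plan is to recognize that the receiver's stage-2 problem is, for each fixed stage-1 draw $x$, a one-dimensional Bayesian-persuasion/concavification problem: she chooses a mean-$\mu$ distribution $q_2$ supported on $[l,h]$ to maximize $\int U_2(y;x)\,dq_2(y)$, where $U_2(y;x)=\max\{x,y\}-k(y-\mu)^2$. By \cite{kamenica2011bayesian}, the optimal value equals the concave closure $\mathrm{cav}\,U_2(\cdot;x)$ evaluated at $\mu$, and the support of the optimal $q_2$ consists of the points at which this concave hull coincides with $U_2$ and whose convex hull contains $\mu$. So the entire lemma reduces to computing $\mathrm{cav}\,U_2$ and reading off which face lies over $\mu$. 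The structural observation that drives everything is that $U_2(\cdot;x)$ consists of two strictly concave parabolic arcs, $g_L(y)=x-k(y-\mu)^2$ on $[l,x]$ and $g_R(y)=y-k(y-\mu)^2$ on $[x,h]$, each with second derivative $-2k$, joined continuously at the kink $y=x$ where the slope jumps upward by exactly $1$ (a convex kink). Because each arc is globally concave and there is a single convex kink, the concave hull is obtained by bridging the kink with a single chord, and the only question is where that chord's endpoints sit.

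I would first compute the free double-tangent chord: matching slopes across the two arcs forces the chord to have width $y_2-y_1=\frac{1}{2k}$, and then matching the two tangent-line values (conveniently evaluated at $y=\mu$, giving $x+k(y_1-\mu)^2=\mu+k(y_2-\mu)^2$) pins the endpoints to $y_1=x-\frac{1}{4k}$ and $y_2=x+\frac{1}{4k}$. When this chord fits inside $[l,h]$ and straddles $\mu$, i.e. when $|x-\mu|\le\frac{1}{4k}$, the optimal support is $\{x-\frac{1}{4k},x+\frac{1}{4k}\}$; when the kink has moved so far that $\mu$ falls on a single concave arc, no bridging is needed and $\delta_\mu$ is optimal, with the switch located at $x=\mu\pm\frac{1}{4k}$.

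Next I would handle the boundary-constrained chords that arise when the free chord would spill past $l$ or $h$. If the left endpoint $y_1=x-\frac{1}{4k}$ falls below $l$, the chord is instead pinned at $l$ and tangent to $g_R$; imposing that the tangent line at $y_2$ agrees with $U_2(l;x)=x-k(l-\mu)^2$ (again comparing values at $\mu$) yields the clean identity $k(y_2-l)^2=x-l$, hence the tangent point $y_2=l+\sqrt{\frac{x-l}{k}}$, and by the reflection $y\mapsto l+h-y$ the symmetric chord pinned at $h$ has tangent point $h-\sqrt{\frac{h-x}{k}}$. The no-learning thresholds are then exactly the values of $x$ at which these tangent points cross $\mu$: $y_2=\mu$ gives $x=l+k(\mu-l)^2$, and its mirror gives $x=h-k(h-\mu)^2$. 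The full-information support $\{l,h\}$ appears precisely when even the pinned chord cannot reach the far endpoint before $\mu$, i.e. when $l+\sqrt{\frac{x-l}{k}}$ already equals $h$; this occurs at $x=l+k(h-l)^2$, and the resulting middle interval $[\,l+k(h-l)^2,\ h-k(h-l)^2\,]$ is nonempty exactly when $k\le\frac{1}{2(h-l)}$.

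Finally I would assemble the five cases by comparing the interval width $h-l$ with $\frac{1}{2k}$ (equivalently $k\lessgtr\frac{1}{2(h-l)}$) and locating $\mu$ relative to the tangency boundaries $l+\frac{1}{2k}$ and $h-\frac{1}{2k}$: these determine, on each side of $\mu$, whether the operative chord is the free double tangent or the endpoint-pinned tangent, and whether a full-information region intervenes. For each case I would verify that the listed subintervals partition $[l,h]$, are correctly ordered (which is where the defining inequality of the case does real work—$k\le\frac{1}{2(h-l)}$ is exactly what makes the $\{l,h\}$ interval nonempty, while $l+\frac{1}{2k}\le\mu\le h-\frac{1}{2k}$ is exactly what suppresses both boundary chords in Case 3), and that neighboring formulas agree at each shared threshold (for instance $\{x-\frac{1}{4k},x+\frac{1}{4k}\}$ and $\{l,l+\sqrt{\frac{x-l}{k}}\}$ both collapse to $\{l,l+\frac{1}{2k}\}$ at $x=l+\frac{1}{4k}$). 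The one genuine obligation beyond bookkeeping is to confirm that each constructed curve really is the concave hull, i.e. that the bridging chord lies weakly above $U_2$ and the combined curve is concave; this follows from the global concavity of the two parabolic arcs together with the single-convex-kink structure, which guarantees a chord tangent to (or pinned at an endpoint of) the arcs cannot re-cross them. The main obstacle is thus not any single idea but the disciplined case enumeration and the verification that the threshold inequalities carve out exactly the five regimes; the concavification itself is routine once the double-tangent and pinned-tangent computations are in hand.
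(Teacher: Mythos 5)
Your proposal is correct and follows essentially the same route as the paper: concavify $U_2(y;x)=\max\{x,y\}-k(y-\mu)^2$ over $[l,h]$, characterize the bridging chord over the convex kink at $y=x$ (your slope-matching/tangency conditions are exactly the paper's complementary-slackness conditions in its Inequation for the chord endpoints, yielding the free chord $\{x-\frac{1}{4k},x+\frac{1}{4k}\}$ or the endpoint-pinned chords $\{l,\,l+\sqrt{(x-l)/k}\}$, $\{h-\sqrt{(h-x)/k},\,h\}$, or $\{l,h\}$), and then read off the support according to whether $\mu$ lies strictly inside the chord interval. The case enumeration via $k\lessgtr\frac{1}{2(h-l)}$ and the position of $\mu$ relative to $l+\frac{1}{2k}$ and $h-\frac{1}{2k}$ matches the paper's "straightforward algebra" step.
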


\begin{proof}
$R$'s stage 2 payoffs for a stage 2 belief $y$ are given by
\[U_2(y;x)=\max\left\{x,y\right\}-k(x-\mu)^2.\]
This is piecewise concave. We first obtain the concavification of $U_2(y;x)$ over $[l,h]$ and then use it to find the optimal garbling.

The concavification of $U_2(y;x)$ is obtained by joining two points $y_1,y_2$ (in a straight line) with $l \leq y_1<x<y_2 \leq h$. By the definition of concavification of a function, we must have\footnote{The best way to see this is to assume it is does not hold and see that the definition of concavification is violated.} \begin{equation} \label{equation} U'_2(y_1;x) \leq \frac{U_2(y_2;x)-U_2(y_1;x)}{y_2-y_1} \leq U'_2(y_2;x),\end{equation} with the first inequality holding with equality if $y_1>l$ and the second one holding with equality if $y_2<h$.

The solution to Inequation \ref{equation} with both equalities is \[y_1=x-\frac{1}{4k}, \ y_2=x+\frac{1}{4k}.\]

If $l+\frac{1}{4k} < x < h-\frac{1}{4k}$, the concavification is given by $y_1=x-\frac{1}{4k}, \ y_2=x+\frac{1}{4k}$.\\

If $x \leq \min\left\{l+\frac{1}{4k},h-\frac{1}{4k}\right\}$, the lower bound $l$ binds and the concavification has $y_1=l$. $y_2=l+\sqrt{\frac{x-l}{k}}$ is obtained from the second equality in Inequation \ref{equation}. \\

If $x \geq \max\left\{h-\frac{1}{4k},l+\frac{1}{4k}\right\}$, the upper bound $h$ binds and the concavification has $y_2=h$. $y_1=h-\sqrt{\frac{h-x}{k}}$ is obtained from the first equality in Inequation \ref{equation}.\\

If $h-\frac{1}{4k}<x<l+\frac{1}{4k}$, the concavification is:
\begin{enumerate}
    \item $y_1=l$, $y_2=l+\sqrt{\frac{x-l}{k}}$ if $l+\sqrt{\frac{x-l}{k}} \leq h$.
    \item $y_2=h$, $y_1=h-\sqrt{\frac{h-x}{k}}$ if $h-\sqrt{\frac{h-x}{k}} \geq l$.
    \item $y_1=l$, $y_2=h$ otherwise.
\end{enumerate}

Having obtained the concavification for any $x$, the optimal stage 2 garbling has support $\left\{y_1,y_2\right\}$ if $\mu \in (y_1,y_2)$, and support $\left\{\mu\right\}$ otherwise. Straightforward algebra then gives us the stated result.
\end{proof}

\begin{lemma}\label{stage1k}
\begin{enumerate}
    \item  If $k>\frac{1}{2(h-l)}$ and $\mu \leq \min\left\{h-\frac{1}{2k},l+\frac{1}{2k}\right\}$,  $R$'s optimal stage 1 garbling is
    \begin{enumerate}
        \item Any Bayes plausible distribution with support drawn from the set $\left\{\mu-\frac{1}{4k}\right\} \cup \left[l+\frac{1}{4k},\mu+\frac{1}{4k}\right]$ if $\mu \geq l+\frac{1}{4k}$.
        \item The distribution with support $\left\{l,y_1(\mu)\right\}$ with $y_1(\mu) \in \left(\mu,l+\frac{1}{4k}\right)$ if $\mu<l+\frac{1}{4k}$.
    \end{enumerate}

\item If $k>\frac{1}{2(h-l)}$ and $\mu \geq \max\left\{h-\frac{1}{2k},l+\frac{1}{2k}\right\}$, $R$'s optimal stage 1 garbling is:
\begin{enumerate}
\item  Any Bayes plausible distribution with support drawn from the set $\left[\mu-\frac{1}{4k},h-\frac{1}{4k}\right] \cup \left\{\mu+\frac{1}{4k}\right\}$ if $\mu \leq h-\frac{1}{4k}$.
\item The distribution with support $\left\{y_2(\mu),h\right\}$ with $y_2(\mu) \in \left(h-\frac{1}{4k},\mu\right)$ if $h-\frac{1}{4k}<\mu$.
 \end{enumerate}
 \item  If $l+\frac{1}{2k} \leq \mu \leq h-\frac{1}{2k}$, $R$'s optimal stage 1 garbling is any Bayes plausible distribution with support on  $\left[\mu-\frac{1}{4k},\mu+\frac{1}{4k}\right]$.
 \item If $k>\frac{1}{2(h-l)}$ and $h-\frac{1}{2k}<\mu<l+\frac{1}{2k}$, $R$'s stage 1 optimal garbling is:
 \begin{enumerate}
     \item Any Bayes plausible distribution with support drawn from $\left\{\mu-\frac{1}{4k}\right\} \cup \left[l+\frac{1}{4k},h-\frac{1}{4k}\right]\cup \left\{\mu+\frac{1}{4k}\right\}$ if $l+\frac{1}{4k} \leq \mu \leq h-\frac{1}{4k}$.
     \item The distribution with support $\left\{l,y_1(\mu)\right\}$ with $y_1(\mu) \in \left(\mu,l+\frac{1}{4k}\right)$ if $\mu<l+\frac{1}{4k}$.
     \item The distribution with support $\left\{y_2(\mu),h\right\}$ with $y_2(\mu) \in \left(h-\frac{1}{4k},\mu\right)$ if $h-\frac{1}{4k}<\mu$.
 \end{enumerate}
 
 \item If $k \leq \frac{1}{2(h-l)}$, then
 \begin{enumerate}
     \item If $\mu \leq \frac{l+h}{2}$, $R$'s optimal stage 1 garbling is $\left\{l,y_1(\mu)\right\}$, where $y_1(\mu)>\mu$ is either on $\left(l+k(\mu-l)^2,l+k(h-l)^2\right)$ or on $\left[l+k(h-l)^2,h-k(h-l)^2\right]$.
 \item If $\mu>\frac{l+h}{2}$, $R$'s optimal stage 1 garbling is $\left\{y_2(\mu),h\right\}$, where $y_2(\mu)<\mu$ is either on $\left[l+k(h-l)^2,h-k(h-l)^2\right]$ or on $\left(h-k(h-l)^2,h-k(\mu-h)^2\right)$. 
 
 \end{enumerate}

 \end{enumerate}
\end{lemma}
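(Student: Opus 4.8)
The plan is to reduce the stage~1 problem to a one-dimensional concavification, exactly as in the proof of Lemma~\ref{distristage1}, by folding the optimal stage~2 behavior from Lemma~\ref{stage2k} into a stage~1 continuation payoff. Concretely, if $R$ draws belief $x$ at stage~1, pays the stage~1 cost $k(x-\mu)^2$, and thereafter behaves optimally, her continuation value is $W(x) := \operatorname{cav}_{[l,h]} U_2(\cdot\,;x)(\mu)$, the stage~2 concavified value evaluated at the prior (any stage~2 garbling of $\{l,h\}$ must average to $\mu$). Her stage~1 objective as a function of the drawn belief is therefore $U_1(x) = W(x) - k(x-\mu)^2$, and by \cite{kamenica2011bayesian} the optimal stage~1 garbling is supported on the contact set of $U_1$ with its concavification over $[l,h]$, intersected with the region where that concavification is affine. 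The whole lemma is thus a matter of (i) evaluating $W$ piece by piece from the five regimes of Lemma~\ref{stage2k}, and (ii) concavifying $U_1$.

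The key simplification---and the reason the answer is an \emph{interval} of optimal supports rather than a single distribution---is a computation I would carry out first. On any subinterval where the stage~2 optimum is the symmetric binary garbling $\left\{x-\tfrac1{4k},x+\tfrac1{4k}\right\}$, the stage~2 tangency conditions give a concavification of slope $\tfrac12 - 2k(x-\mu)$, and substituting back yields $W(x) = \mu + \tfrac12(x-\mu) + k(x-\mu)^2 + \tfrac1{16k}$. The quadratic term cancels the stage~1 cost exactly, so
\[ U_1(x) = \tfrac12 x + \tfrac12\mu + \tfrac1{16k} \]
is \emph{affine} with slope $\tfrac12$ there. On the no-learning regions $U_1$ is instead a strictly concave parabola: $U_1(x)=x-k(x-\mu)^2$ where $R$ keeps the first sender ($x\ge\mu$), and $U_1(x)=\mu-k(x-\mu)^2$ where she accepts the second without learning ($x\le\mu$).

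With these pieces in hand I would concavify. In the central regimes (cases 1--4 with $\mu$ inside the affine interval) the concavification is the single line of slope $\tfrac12$ carrying the affine segment; checking the regime boundaries shows $U_1$ is $C^1$ there (both one-sided slopes equal $\tfrac12$), so the line extends until it last meets the lower concave branch $\mu-k(x-\mu)^2$, which happens exactly at $x=\mu-\tfrac1{4k}$, where $U_1=\mu-\tfrac1{16k}$ lies on the line. Since the concave tails lie strictly below this line within the affine region, the optimal support is $\left\{\mu-\tfrac1{4k}\right\}\cup\left[l+\tfrac1{4k},\mu+\tfrac1{4k}\right]$ and its high-side reflection, matching the stated sets. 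When $\mu$ falls \emph{outside} the affine interval---e.g. $\mu<l+\tfrac1{4k}$ in case 1(b), or the analogous high-side condition---the lower (resp.\ upper) bound binds, the concavification is a single chord tangent at one interior point, and the optimum collapses to the binary distribution $\left\{l,y_1(\mu)\right\}$ (resp.\ $\left\{y_2(\mu),h\right\}$) with $y_1(\mu)\in\left(\mu,l+\tfrac1{4k}\right)$. Case~5 ($k\le\tfrac1{2(h-l)}$) is handled identically after noting that stage~2 can now fully separate to $\{l,h\}$, so that $U_1$ lacks the affine middle and the chord again binds at $l$ or $h$.

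The main obstacle is the bookkeeping in the boundary-binding stage~2 regions, where the stage~2 garbling is $\left\{l,\,l+\sqrt{(x-l)/k}\right\}$ or $\left\{h-\sqrt{(h-x)/k},\,h\right\}$: there $W(x)$ is an algebraically heavier expression, and one must verify both that $U_1$ stays strictly below the slope-$\tfrac12$ line throughout (so no spurious optimal support points appear) and, in the degenerate sub-cases, solve the tangency condition to locate $y_1(\mu)$ and $y_2(\mu)$ and confirm they lie in the claimed open intervals. The remaining effort is routine but voluminous: tracking which of the five regimes one is in, and checking the threshold inequalities (such as $\mu\gtrless l+\tfrac1{4k}$) that decide between the interval-valued answer and the single binary distribution.
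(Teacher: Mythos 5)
Your proposal is correct and follows essentially the same route as the paper's proof: fold the optimal stage-2 behavior from Lemma \ref{stage2k} into a stage-1 continuation payoff $U_1$ and concavify it over $[l,h]$, with the central observation being that $U_1$ is affine (slope $\tfrac{1}{2}$, by exact cancellation of the stage-1 cost) wherever the stage-2 optimum is the symmetric garbling $\left\{x-\tfrac{1}{4k},\,x+\tfrac{1}{4k}\right\}$, strictly concave on the no-learning regions, and tangent to the slope-$\tfrac{1}{2}$ line at $\mu \mp \tfrac{1}{4k}$. Your explicit computation of $W(x)$ and the $C^1$ matching at regime boundaries makes rigorous what the paper merely asserts ("$U_1$ is affine over any interval..."), and the verifications you defer (behavior of $U_1$ in the boundary-binding regions, location of $y_1(\mu)$ and $y_2(\mu)$) are exactly the ones the paper also treats tersely, the only substantive addition in the paper being its short Remark ruling out a uniquely degenerate stage-1 optimum, which is what pins $y_1(\mu)>\mu$ and $y_2(\mu)<\mu$.
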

\medskip

\begin{proof}
Let $U_1(x)$ be $R$'s first stage continuation payoffs for a first stage belief $x$. Say the stage 2 distribution following $x$ has support $\left\{y_1,y_2\right\}$, with $y_1 \leq y_2$ and $\nu y_1+(1-\nu)y_2=\mu$. Then $U_1(x)=\nu U_2(y_1;x)+ (1-\nu) U_2(y_2;x)-k(x-\mu)^2$. The concavification of $U_1$ over $[l,h]$ is used to obtain the stage 1 optimal distribution.

For any $\mu$, $U_1$ is continuous. Note that $U_1$ is affine over any interval of $x$ for which the stage 2 optimal garbling is $\left\{x-\frac{1}{4k},x+\frac{1}{4k}\right\}$.\\

\begin{remark}
If the stage 1 optimal garbling is unique, then it cannot have support $\left\{\mu\right\}$.
\end{remark}

The reason for this is the following. If the stage 1 unique optimal garbling is degenerate, then it is verified from Lemma \ref{stage2k} that the stage 2 optimal garbling has binary support, say $\left\{y_1,y_2\right\}$. But then, choosing the garbling $\left\{y_1,y_2\right\}$ at stage 1 and $\left\{\mu\right\}$ at stage 2 must give the same expected payoff, and hence must be optimal. This is a contradiction.\\

Now, first let $k>\frac{1}{2(h-l)}$ and $\mu \leq \min\left\{h-\frac{1}{2k},l+\frac{1}{2k}\right\}$.

Then $U_1$ is strictly convex in a right neighborhood of $k(\mu-l)^2$ and concave everywhere else (weakly on $\left(l+\frac{1}{4k},\mu+\frac{1}{4k}\right)$). Then, the concavification must join points $z_1 \leq k(\mu-l)^2$ and $z_2 > k(\mu-l)^2$ (in a straight line), with $z_1,z_2$ determined by a condition analogous to Inequation \ref{equation}.

Say $\mu \geq l+\frac{1}{4k}$. Then it is verified that $z_1=\mu-\frac{1}{4k}$ and $z_2=l+\frac{1}{4k}$. Since $\mu \in \left[l+\frac{1}{4k},\mu+\frac{1}{4k}\right]$ and $U_1$ is affine over this interval, a distribution with support on $\left\{\mu-\frac{1}{4k}\right\} \cup \left[l+\frac{1}{4k},\mu+\frac{1}{4k}\right]$ would be optimal.

Now say $\mu<l+\frac{1}{4k}$. Clearly the lower bound $l$ would bind and $z_1=l$ must hold. $z_2$ is obtained from the second equality in Inequation \ref{equation}, and it must be higher than $\mu$, since otherwise the optimal garbling would uniquely be degenerate, and we ruled that out above. $z_2$ is denoted by $y_1(\mu)$ in the statement of the Lemma. \\

Now let $k>\frac{1}{2(h-l)}$ and $\mu \geq \max\left\{h-\frac{1}{2k},l+\frac{1}{2k}\right\}$. The argument is symmetric to the preceding one.

In this case $U_1$ is strictly convex in a left neighborhood of  $h-k(h-\mu)^2$ and concave everywhere else (weakly on $(\mu-\frac{1}{4k},h-\frac{3}{4k}))$. The concavification is obtained by joining points $z_1$ and $z_2$ as before.

It is verified that for $\mu \leq h-\frac{1}{4k}$, $z_1=h-\frac{1}{4k}$ and $z_2=\mu+\frac{1}{4k}$. This tells us that a distribution with support on $\left[\mu-\frac{1}{4k},h-\frac{1}{4k}\right] \cup \left\{\mu+\frac{1}{4k}\right\}$ would be optimal.

For $\mu>h-\frac{1}{4k}$, $z_2=h$ must hold. Now $z_1$ is found from the first equality in Inequation \ref{equation}, and it must be lower than $\mu$, since otherwise the stage 1 optimal garbling would uniquely be degenerate. $z_1$ is denoted by $y_2(\mu)$ in the statement of the Lemma.\\

Cases \emph{3} and \emph{4} are dealt with completely analogously.\\

Finally, let $k \leq \frac{1}{2(h-l)}$.

Then $U_1$ is strictly convex in a right neighborhood of $l+k(\mu-l)^2$, and in a left neighborhood of $h-k(h-\mu)^2$, and \emph{strictly} concave everywhere else.

Clearly, the concavification must:
\begin{enumerate}
\item join points $z_1 \in [l,l+k(\mu-l)^2)$ and $z_2>l+k(\mu-l)^2$ in a straight line, and
\item join points $z_3<h-k(h-\mu)^2$ and $z_4 \in (h-k(h-\mu)^2,h]$ in a straight line. 
\end{enumerate}

As usual, these points are determined by a condition analogous to Inequation \ref{equation}. It turns out that $z_1=l$ and $z_4=h$, while the positions of $z_2$ and $z_3$ depend on parameters. The optimal garbling is either $\left\{l,z_2\right\}$ or $\left\{z_3,h\right\}$, depending on where $\mu$ lies.

\end{proof}
\medskip

The previous result immediately gives us the following useful corollary.

\begin{corollary}\label{cor} 
The following two statements are equivalent:
\begin{enumerate}
    \item $\mu \in [l+\frac{1}{4k},h-\frac{1}{4k}]$ and $k>\frac{1}{2(h-l)}$.
    \item There are multiple stage 1 optimal garblings for $R$, including support $\left\{\mu\right\}$ and support $\left\{\mu-\frac{1}{4k},\mu+\frac{1}{4k}\right\}$.
\end{enumerate}
\end{corollary}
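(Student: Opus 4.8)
The plan is to read the corollary directly off Lemma \ref{stage1k}, since both statements concern the list of stage $1$ optima produced there. The key observation is that the degenerate garbling $\delta_\mu$ appears among the optima exactly in the ``(a)-type'' subcases, and that whenever it does, the balanced binary garbling with support $\left\{\mu-\frac{1}{4k},\mu+\frac{1}{4k}\right\}$ appears as well; the two directions of the equivalence then reduce to matching the region $\left\{\mu\in[l+\frac{1}{4k},h-\frac{1}{4k}],\ k>\frac{1}{2(h-l)}\right\}$ with the union of those subcases. I would record at the outset the elementary facts tying the two families of thresholds together: $[l+\frac{1}{4k},h-\frac{1}{4k}]$ is nonempty precisely when $k\ge\frac{1}{2(h-l)}$, with the boundary $k=\frac{1}{2(h-l)}$ falling into case $5$ of Lemma \ref{stage1k}; and case $3$ (which requires $l+\frac{1}{2k}\le h-\frac{1}{2k}$) forces $k\ge\frac{1}{h-l}>\frac{1}{2(h-l)}$. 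Thus $k>\frac{1}{2(h-l)}$ is inherited in every subcase in which $\delta_\mu$ turns out optimal, which is why statement $1$ carries the strict inequality.

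For the direction $1\Rightarrow 2$, I would assume $\mu\in[l+\frac{1}{4k},h-\frac{1}{4k}]$ and $k>\frac{1}{2(h-l)}$ and run through the four $k>\frac{1}{2(h-l)}$ cases of Lemma \ref{stage1k}. Because $\mu\ge l+\frac{1}{4k}$ excludes subcase $1$(b) and $\mu\le h-\frac{1}{4k}$ excludes subcase $2$(b) (and both exclude $4$(b) and $4$(c)), every case lands in subcase $1$(a), $2$(a), case $3$, or subcase $4$(a). In each of these the admissible support set is an interval, possibly with an isolated endpoint adjoined, that contains $\mu$ and contains both $\mu-\frac{1}{4k}$ and $\mu+\frac{1}{4k}$; one checks this by inspection of the four displayed sets. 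Consequently the point mass $\delta_\mu$ and the equal-weight distribution on $\left\{\mu-\frac{1}{4k},\mu+\frac{1}{4k}\right\}$ (which averages to $\mu$) are both Bayes-plausible garblings with support in the optimal set, hence both optimal; being distinct, they witness statement $2$.

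For $2\Rightarrow 1$ I would argue the contrapositive. If statement $1$ fails then either $k\le\frac{1}{2(h-l)}$, placing us in case $5$, or $k>\frac{1}{2(h-l)}$ with $\mu<l+\frac{1}{4k}$ or $\mu>h-\frac{1}{4k}$, placing us in subcase $1$(b)/$4$(b) or $2$(b)/$4$(c) respectively (the sign of $\mu-l-\frac{1}{4k}$ already rules out cases $2,3$, and symmetrically for the upper threshold). In every one of these Lemma \ref{stage1k} returns a \emph{unique} optimal garbling, of the form $\left\{l,y_1(\mu)\right\}$ or $\left\{y_2(\mu),h\right\}$, which is binary and hence not $\delta_\mu$; equivalently the Remark inside the proof of Lemma \ref{stage1k} already forbids $\delta_\mu$ from being the unique optimum. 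Either way $\{\mu\}$ is not among the optima, so statement $2$ fails.

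The only real work is the bookkeeping in these case splits. The main thing to get right is the coverage argument: I would verify that cases $1$--$3$ (when $l+\frac{1}{2k}\le h-\frac{1}{2k}$) or cases $1,2,4$ (when $l+\frac{1}{2k}> h-\frac{1}{2k}$) genuinely partition $(l,h)$, and that the ``(a)/(b)/(c)'' subcase cuts at $l+\frac{1}{4k}$ and $h-\frac{1}{4k}$ align exactly with the endpoints of the region in statement $1$. Once this alignment is confirmed, the equivalence is immediate and no genuinely new computation beyond Lemma \ref{stage1k} is required.
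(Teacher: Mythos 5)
Your proof is correct and follows exactly the route the paper takes: the paper treats this corollary as an immediate consequence of Lemma \ref{stage1k}, and your argument is precisely that reading-off, with the implicit case bookkeeping (which subcases admit $\delta_\mu$ and $\left\{\mu-\frac{1}{4k},\mu+\frac{1}{4k}\right\}$, which return a unique binary optimum, and why case $3$ forces $k \geq \frac{1}{h-l} > \frac{1}{2(h-l)}$) made explicit. No gaps; your contrapositive treatment of the reverse direction is a faithful unpacking of what the paper leaves to inspection.
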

\medskip

\begin{lemma}\label{nodeviation}
Suppose $k>\frac{1}{2(h-l)}$ and $\mu \in [l+\frac{1}{4k},h-\frac{1}{4k}]$. If $R$'s behavior is as specified in Lemmata \ref{stage2k} and \ref{stage1k}, then conditional on being the first sender to be visited, the probability of being selected is the same regardless of which stage 1 optimal garbling is chosen by $R$.
\end{lemma}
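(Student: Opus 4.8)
The plan is to fix the sender who is visited first (say sender $1$, without loss of generality) and compute directly the probability $P(x)$ that he is selected conditional on $R$ drawing stage $1$ belief $x$ about him. Since the distribution of $R$'s stage $1$ belief is exactly the garbling $q_1$ she chooses, the sender's payoff conditional on being visited first is $\int_{[l,h]} P(x)\,dq_1(x)$. I would then show that on the support of \emph{every} optimal $q_1$ (as listed in Lemma \ref{stage1k}), the function $P$ coincides with a single affine function of $x$; Bayes plausibility $\int x\,dq_1 = \mu$ then forces the integral to take one common value, independent of which optimal $q_1$ is used.

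First I would pin down $P$ on the ``interior'' region where, by Lemma \ref{stage2k}, the stage $2$ garbling of the (non-deviating) second sender is $\left\{x-\frac{1}{4k},\,x+\frac{1}{4k}\right\}$. Writing $\nu$ for the weight on the lower point and imposing the martingale condition $\nu\left(x-\frac{1}{4k}\right)+(1-\nu)\left(x+\frac{1}{4k}\right)=\mu$, I obtain $\nu=\frac{1}{2}+2k(x-\mu)$. Because $R$ retains sender $1$ precisely when the stage $2$ draw about sender $2$ lies below $x$, i.e. at the lower point $x-\frac{1}{4k}$, we get $P(x)=\nu=\frac{1}{2}+2k(x-\mu)$, which is affine in $x$. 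A short computation confirms this formula extends continuously to the boundaries of the interior region (in particular $P=1$ where the region meets the upper no-learning zone).

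The key step, and the main obstacle, concerns the isolated atoms $\mu-\frac{1}{4k}$ and $\mu+\frac{1}{4k}$ that appear in the optimal supports of Lemma \ref{stage1k} (cases $1$, $2$, $4$), since there stage $2$ is governed by the \emph{nonaffine} $\{l,\cdot\}$ or $\{\cdot,h\}$ branches of Lemma \ref{stage2k}. I would show that at $x=\mu-\frac{1}{4k}$ the receiver in fact learns nothing at stage $2$: this is equivalent to $\mu-\frac{1}{4k}\le l+k(\mu-l)^2$, which rearranges to $\left(2k(\mu-l)-1\right)^2\ge 0$ and hence always holds. Then $R$ compares the belief $x=\mu-\frac{1}{4k}<\mu$ against belief $\mu$ on sender $2$ and rejects sender $1$, so $P\!\left(\mu-\frac{1}{4k}\right)=0$, which is exactly the value of $\frac{1}{2}+2k(x-\mu)$ at that point. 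The mirror identity $\left(2k(h-\mu)-1\right)^2\ge 0$ handles the atom $\mu+\frac{1}{4k}$, yielding no stage $2$ learning and $P=1$, again the affine value.

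Finally I would assemble the cases. Under the hypothesis $k>\frac{1}{2(h-l)}$ and $\mu\in\left[l+\frac{1}{4k},\,h-\frac{1}{4k}\right]$, Corollary \ref{cor} places us in one of cases $1$–$4$ of Lemma \ref{stage1k}, and in each the optimal supports lie inside $\left\{\mu-\frac{1}{4k}\right\}\cup[\,\cdots\,]\cup\left\{\mu+\frac{1}{4k}\right\}$, on which the two preceding steps show $P$ agrees everywhere with $x\mapsto\frac{1}{2}+2k(x-\mu)$. Hence for any optimal garbling $q_1$, $\int P\,dq_1=\frac{1}{2}+2k(\mu-\mu)=\frac{1}{2}$, a constant independent of the choice of $q_1$, which is exactly the claim.
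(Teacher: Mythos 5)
Your proposal is correct and follows essentially the same route as the paper's proof: compute the selection probability $P(x)$ conditional on the stage 1 draw, observe that it agrees with the affine function $\frac{1}{2}+2k(x-\mu)$ on every optimal stage 1 support (with the atoms $\mu\pm\frac{1}{4k}$ giving $P=0$ and $P=1$, consistent with that affine extension), and integrate using Bayes plausibility to obtain the constant value $\frac{1}{2}$. Your treatment is in fact slightly more complete than the paper's, which writes out only the case $\mu\leq\min\left\{h-\frac{1}{2k},l+\frac{1}{2k}\right\}$ and declares the others analogous, whereas your verification that the atoms entail no stage 2 learning, via $\left(2k(\mu-l)-1\right)^2\geq 0$ and its mirror, handles all cases uniformly.
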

\medskip

\begin{proof}
We show the proof for $\mu \leq \min\left\{h-\frac{1}{2k},l+\frac{1}{2k}\right\}$. It is entirely analogous for the other cases from Lemma \ref{stage2k}.

Suppose $l+\frac{1}{4k} \leq \mu \leq \min\left\{h-\frac{1}{2k},l+\frac{1}{2k}\right\}$ and $R$'s first stage response is a distribution $F$ on $\left\{\mu-\frac{1}{4k}\right\} \cup [l+\frac{1}{4k},\mu+\frac{1}{4k}]$.

Using Lemma \ref{stage2k} it is easy to see that the probability of the first sender being selected conditional on a first stage belief $x$ is given by

\[P(x)= \begin{cases} 0 & if \ x=\mu-\frac{1}{4k} \\
2kx-2k\mu+\frac{1}{2} & if \ x \in [l+\frac{1}{4k},\mu+\frac{1}{4k}]
\end{cases}\]

Suppose that $F$ places a mass $p \geq 0$ on $\mu-\frac{1}{4k}$. Then conditional on being visited first, a sender's expected probability of being selected is given by 

\begin{equation} V_1=p*0+\int_{l+\frac{1}{4k}}^{\mu+\frac{1}{4k}} P(x) \ dF(x)\end{equation}

 Next note that \begin{equation} p(\mu-\frac{1}{4k})+\int_{l+\frac{1}{4k}}^{\mu+\frac{1}{4k}} x dF(x)=\mu \end{equation}

and 

\begin{equation}\int_{l+\frac{1}{4k}}^{\mu+\frac{1}{4k}} dF(x)=1-p\end{equation}
\medskip

Inserting Equations 3 and 4 into Equation 2, we get that $V_1=\frac{1}{2}$, which is independent of $F$.
\end{proof}

\subsection{Proof of Proposition \ref{binarygeneral}}\label{prooflong}

Suppose each sender offers support $\left\{l,h\right\}$, with $l \in [0,\mu)$ and $h \in (\mu,1]$.\\

First let $k>\frac{1}{2(h-l)}$ and $\mu \in [l+\frac{1}{4k},h-\frac{1}{4k}]$. 

Given a stage 1 draw $x$, $R$'s optimal stage 2 garbling is specified in Lemma \ref{stage2k}. If this garbling does not have support $\left\{\mu\right\}$, $R$ necessarily visits the second sender. If it is $\left\{\mu\right\}$, she is indifferent between visiting him and not, and may choose either way.

At stage 1, she has multiple best responses. The most informative one among them has support $\left\{\mu-\frac{1}{4k},\mu+\frac{1}{4k}\right\}$, and from Lemma \ref{stage1k} it is the only one that is necessarily followed by no learning at stage 2. We assume that she breaks her indifference in favor of this distribution.

At belief $\mu-\frac{1}{4k}$ she accepts the first sender with certainty, and at belief $\mu+\frac{1}{4k}$  accepts the other one with certainty.

Then if a sender deviates to a different distribution, his payoffs may be affected only if he is visited first and the distribution he deviates to is such that $\left\{\mu-\frac{1}{4k},\mu+\frac{1}{4k}\right\}$ is not a garbling of it.

In this case, regardless of the deviation, $R$ can secure a payoff equal to what she gets in the absence of the deviation, by picking $\left\{\mu\right\}$ at stage 1, followed by visiting the other sender and choosing $\left\{\mu-\frac{1}{4k},\mu+\frac{1}{4k}\right\}$. Thus the deviation cannot force $R$ to choose from outside the set of optimal garblings from Lemma \ref{stage1k}.

But then due to Lemma \ref{nodeviation}, the deviating sender's payoffs are unaffected. Thus, there does not exist a profitable deviation and we have an equilibrium.\\

Next say that either $k>\frac{1}{2(h-l)}$ and $\mu \not \in [l+\frac{1}{4k},h-\frac{1}{4k}]$, or $k \leq \frac{1}{2(h-l)}$.

Then from Lemmata \ref{stage1k} and \ref{stage2k}, $R$ chooses a unique binary garbling at stage 1, and exactly one belief in the support is followed by a visit to the second sender.

Denote the stage 1 belief following which $R$ does learn at stage 2 by $w$. Under each possibility we show that there is a profitable deviation for a sender.

\emph{Possibility 1}: If $w<\mu$, then $\exists \ l' \in [\max\left\{0,\mu-\frac{1}{2k}\right\},\mu)$ s.t $w=l'+k(\mu-l')^2$. Suppose a sender deviates to support $\left\{l',h\right\}$. Then from Lemma \ref{stage2k}, if the deviating sender is visited second, $R$ chooses support $\left\{\mu\right\}$ and selects the deviating sender with certainty. This does not affect $R$'s behavior if the deviating sender is visited first, since $l'<w$. Thus the sender profits from this deviation.

\emph{Possibility 2}: If $w>\mu$ and is followed by a stage 2 best response $\left\{l,l+\sqrt{\frac{w-l}{k}}\right\}$, then it must be true that $w \in (l+k(\mu-l)^2,l+\frac{1}{4k})$. $\exists \ h'<l+\sqrt{\frac{w-l}{k}}$ s.t $w<h'-k(h'-l)^2$. Suppose a sender deviates to $\left\{l,h'\right\}$. Then $k \leq \frac{1}{2(h'-l)}$, and Lemma \ref{stage2k} tells us that if the deviating sender is visited at stage 2, $R$'s response changes to $\left\{l,h'\right\}$. $w<h'<l+\sqrt{\frac{w-l}{k}}$ implies that this is profitable if visited at stage 2, without affecting what happens if visited at stage 1. Thus the deviation is profitable.

\emph{Possibility 3}: If $w>\mu$ and is followed by a stage 2 best response of $\left\{l,h\right\}$. Then it is seen that $w<h$, so that $h$ does not bind at stage 1. This implies that a sender can increase or decrease $h$ slightly to $h'$ (so that $\left\{l,h'\right\}$ is instead chosen), without affecting what happens if he is visited first. This is clearly profitable. \hfill $\qed$

\subsection{Proof of Claim \ref{firstbestclaim}}
\emph{`Only if'}:  Suppose that there is no  equilibrium in which both senders offer full info. Then, Proposition \ref{main} tells us that either $k \leq \frac{1}{2}$, or $k>\frac{1}{2}$ and $\mu \not \in [\frac{1}{4k},1-\frac{1}{4k}]$. Lemma \ref{stage2k} and Lemma \ref{stage1k} tell us $R$'s unique best response (on path) to full info from both senders. Now we need to show that there is no equilibrium where she gets her first best payoff. For the sake of contradiction, suppose that there is such an equilibrium--and where sender $i$ offers some $p_i$. From the discussion in the main text, this just means that $R$'s best response on path to $(p_1,p_2)$, is the same as the best response to full information. We argue, however, that the same deviations that we identified for full info, also work for this supposed equilibrium. Recall the nature of those deviations from \ref{prooflong}: they do not make a difference if the deviating sender is visited first, and restrict learning if visited second. Now if $p_1,p_2$ is the equilibrium under consideration and the same deviation occurs, $R$'s response to this deviation would be as under full info: if she visits the deviating sender first, she would realize she can continue to choose as on path; if she visits him second, she would make the same adjustment as under the full info scenario.

Thus, since the deviation was profitable under full info, it must be profitable here, and $p_1,p_2$ cannot be an equilibrium. \hfill \qed

\subsection{Proof of Proposition \ref{main}}
See the proof of Proposition \ref{binarygeneral}, setting $l=0,h =1$.

\subsection{Proof of Proposition \ref{kzero}}
Existence of the uninformative equilibrium is proven in the text. Here we show non-existence of a full information equilibrium.

Suppose that each sender chooses a fully informative distribution. Because each sender has chosen the same distribution (on path), $R$ is indifferent as to whom she visits first. Hence, suppose that she visits sender $1$ first with probability $\lambda \in [0,1]$ and sender $2$ with its complement.

If sender $1$ is visited first, then upon $R$'s visit, $1$ is realized with probability $\mu$. At this point, she will stop and select sender $1$. On the other hand, if $0$ is realized then she will select sender $2$ without visiting. The symmetric statements hold for sender $2$ and her payoff is
\[u_{2} = \lambda (1-\mu) + (1-\lambda)\mu\]

Now suppose that sender $2$ deviates and chooses a distribution that consists of 1 with probability $\eta \coloneqq \mu - \frac{1}{n}$, $n \in \mathbb{N}$, $n > \frac{1}{\mu}$, and $\epsilon$ with probability $1 + \frac{1}{n} - \mu$, where $\epsilon \coloneqq \frac{1}{n+1 - \mu n}$. If sender $1$ is visited first then again sender $2$ obtains an expected payoff of $(1-\mu)$. If sender $2$ is visited first, with probability $\eta$, $1$ is realized and sender $2$ is selected and with probability $(1-\eta)$, $\epsilon$ is realized. At this point $R$ visits sender $1$ and obtains a realization of $0$ with probability $1-\mu$, at which point she selects sender $2$. Accordingly,
\[u_{2} = \lambda (1-\mu) + (1-\lambda)\left(\eta + (1-\eta)(1-\mu)\right)\]
and so sender $2$ has a profitable deviation if and only if
\[\lambda (1-\mu) + (1-\lambda)\left(\eta + (1-\eta)(1-\mu)\right) > \lambda (1-\mu) + (1-\lambda)\mu\]
which reduces to
\[\frac{1 + 2n - \sqrt{1 + 4n}}{2n} > \mu\]
provided $\lambda < 1$. Without loss of generality we may assume this, since otherwise the same argument would suffice for a deviation by sender $1$.

The limit of the left hand side goes to $1$ as $n$ goes to $\infty$; hence for any $\mu < 1$ there exists a $\hat{n}$ such that the left hand side is strictly greater than $\mu$ for all $n > \hat{n}$. We conclude that for any $\mu < 1$ there exists a profitable deviation, negating the possibility that full information is an equilibrium. \hfill $\qed$

\subsection{Proof of Claim \ref{kzeroother}}
\textbf{For \boldmath $\mu \leq \frac{1}{2}$ \unboldmath}:

Let each sender choose the uniform distribution on $[0,2\mu]$, and suppose that $R$ visits sender $1$ first with probability $\lambda \in [0,1]$ and sender $2$ with its complement.

No matter the realization at stage 1, $R$ will proceed and visit the other sender as well before selecting one of them. Hence, $u_{1} = u_{2} = \frac{1}{2}$. Next, we check for a profitable deviation. Suppose sender $1$ deviates to a distribution that contains a probability measure of size $a$ on $[2\mu, 1]$ and some portion $F$ on $[0,2\mu)$. It is clear that it is without loss of generality to set $a$ to be a point mass on $2\mu$.

If sender $1$ is visited first then with probability $a$, he is selected and sender $2$ is never visited; and otherwise, sender $2$ is visited after which $R$ selects the sender with the highest realization. If sender $2$ is visited first, then no matter what, sender $1$ is also visited, after which the comparison ensues.

Sender $1$'s payoff is
\[u_{1} = \lambda\left(a + \int_{0}^{2\mu}\int_{0}^{x}dG(y)dF(x)\right) + \left(1-\lambda\right)\left(a + \int_{0}^{2\mu}\int_{0}^{x}dGdF\right) = a + \int_{0}^{2\mu}\int_{0}^{x}dG(y)dF(x)\]
where $G(y) = \frac{y}{2\mu}$ is the (on-path) distribution chosen by sender $2$ and where $\int_{0}^{2\mu}dF = 1-a$ and $\int_{0}^{2\mu}xdF = 2-2\mu a$.

Next, we use the result in \cite{whitmeyer2019mixtures} who establish that it suffices to show that $1$ has no profitable deviation to any binary distribution. Let $F$ be described by $\alpha$ with probability $p$ and $\beta$ with probability $1-p$; where $0 \leq \alpha \leq \mu$, $\mu \leq \beta \leq 2\mu$, and $\alpha p + \beta (1-p) = \mu$. Consequently, we rewrite $u_{1}$, which becomes
\[\begin{split}
    u_{1} &= (1-p)F(\beta) + p F(\alpha)\\
    &= (1-p)\frac{\beta}{2\mu} + p\frac{\alpha}{2\mu} = \frac{1}{2}
\end{split}\]
Hence, there is no profitable deviation.\hfill $\qed$\\

\begin{flushleft}
\textbf{For \boldmath $\mu>\frac{1}{2}$ \unboldmath}:
\end{flushleft}

On path, sender $1$'s payoff is
\[\begin{split}
    u_{1} &= \lambda\left(2 - \frac{1}{\mu} + \int_{0}^{2(1-\mu)}\int_{0}^{x}\frac{1}{2\mu}\frac{1}{2\mu}dydx\right) + \left(1-\lambda\right)\left(\left(\frac{1}{\mu}-1\right)\left(2 - \frac{1}{\mu}\right) + \int_{0}^{2(1-\mu)}\int_{0}^{x}\frac{1}{2\mu}\frac{1}{2\mu}dydx\right)\\
    &= \dfrac{2\left(2{\mu}-1\right)^2\lambda+\left(1-\mu\right)\left(3{\mu}-1\right)}{2{\mu}^2}
\end{split}\]
If sender $1$ deviates to $1$ with probability $\mu$ and $0$ with probability $1-\mu$, his payoff from deviating is
\[u_{1}^{D} = \lambda \mu + (1-\lambda)\left(\frac{1}{\mu}-1\right)\mu = 1 + 2\lambda \mu - \lambda - \mu\]
The difference, $u_{1}^{D} - u_{1}$ is 
\[\dfrac{\left(1-\mu\right)^2\left(2\mu-1\right)\left(2\lambda-1\right)}{2\mu^2}\]
Since $\mu > \frac{1}{2}$, this is positive provided $\lambda > \frac{1}{2}$ and negative provided $\lambda < \frac{1}{2}$. Thus, if $\lambda \neq \frac{1}{2}$ there exists a profitable deviation (if $\lambda < \frac{1}{2}$, sender $2$ can deviate profitably in the analogous fashion).

It remains to show that this vector of distributions is an equilibrium for $\lambda = \frac{1}{2}$. Substituting $\lambda = \frac{1}{2}$ into $u_{1}$, we see that $u_{1} = \frac{1}{2}$ on path. Just as for $\mu \leq \frac{1}{2}$, from \cite{whitmeyer2019mixtures} we need check only deviations to binary distributions. Let $F$ be described by $\alpha$ with probability $p$ and $\beta$ with probability $1-p$, where  $\alpha p + \beta (1-p) = \mu$ and $0 \leq \alpha \leq \mu$. There are two cases that we need to consider. 1. $\mu \leq \beta \leq 2(1-\mu)$; and 2. $\beta = 1$. In the first case,
\[\begin{split}
    u_{1} &= (1-p)F(\beta) + p F(\alpha) \\
    &= (1-p)\frac{\beta}{2\mu} + p\frac{\alpha}{2\mu} = \frac{1}{2}
\end{split}\]
and in the second case
\[\begin{split}
    u_{1} &= \frac{1}{2}\left(1-p + p F(\alpha)\right) + \frac{1}{2}\left(\left(\frac{1}{\mu}-1\right)(1-p) + p F(\alpha)\right) \\
    &= p\frac{\alpha}{2\mu} + \frac{1-p}{2\mu} = \frac{1}{2}
\end{split}\]
where we used the fact that $\beta = 1$ implies that $1-p = \mu - p\alpha$. Hence, there is no profitable deviation.\hfill $\qed$

\subsection{Proof of Lemma \ref{distristage2}}
See the proof of Lemma \ref{stage2k}, setting $l=0,h=1$ and $k=1$.

\subsection{Proof of Lemma \ref{distristage1}}
See the proof of Lemma \ref{stage1k}, setting $l=0,h=1$ and $k=1$.

\subsection{Proof of Lemma \ref{fullinfoeqm}}
See the proof of Lemma \ref{nodeviation}, setting $l=0,h=1$ and $k=1$.

\subsection{Proof of Claim \ref{obvious}}
Let $k>\frac{1}{2}$ and $\mu \in [\frac{1}{4k},1-\frac{1}{4k}]$.  As shown in Appendix \ref{prooflong}, one of $R$'s best responses to full information ($l=0, h=1$) from both senders is to choose the garbling $\left\{\mu-\frac{1}{4k},\mu+\frac{1}{4k}\right\}$ at stage 1 and to learn nothing at stage 2.

Suppose sender $i$ offers a distribution of which $\left\{\mu-\frac{1}{4k},\mu+\frac{1}{4k}\right\}$ is a garbling. Then, the aforementioned best response to full information is permissible, and thus continues to be a best response. Suppose $R$ chooses this response.

Then if a sender unilaterally deviates and is the one to be visited first, $R$ may respond by choosing $\left\{\mu\right\}$ and visiting the other sender, choosing $\left\{\mu-\frac{1}{4k},\mu+\frac{1}{4k}\right\}$ for him. Exactly as in the proof for existence of a full information equilibrium (Proposition \ref{binarygeneral} for $h=1,l=0$), Lemma \ref{nodeviation} can be used to argue that the deviation cannot be profitable. \hfill $\qed$

\subsection{Proof of Corollary \ref{uniformkpositive}}
\textbf{For \boldmath $\mu \leq \frac{1}{2}$}:

We show that $\left\{\mu-\frac{1}{4k},\mu+\frac{1}{4k}\right\}$ is a mean preserving contraction of the uniform distribution on $[0,2\mu]$ when $k \geq \frac{1}{2\mu}$.

Define $l(x)$ as
\[l(x) = \begin{cases}
0 & 0 \leq x<\mu - \frac{1}{4k}\\
\frac{1}{2}x-\frac{\mu}{2}+\frac{1}{8k} & \mu - \frac{1}{4k} \leq x < \mu + \frac{1}{4k}\\
x - \mu & \mu + \frac{1}{4k} \leq x \leq 1
\end{cases}\]
Define $j(x) \coloneqq \int_{0}^{x}G(t)dt$:

\[j(x) = \begin{cases}
\frac{x^2}{4\mu} & 0 \leq x< 2\mu\\
x - \mu & 2\mu \leq x \leq 1
\end{cases}\]
It suffices to show that $\mu > \frac{1}{4k}$, that $j(x) - l(x) = 0$ has at most one real root, and that $j\left(\mu + \frac{1}{4k}\right) > l\left(\mu + \frac{1}{4k}\right)$.

Set $j(x) = l(x)$, which holds if and only if
\[x = \frac{4\mu k \pm \sqrt{8k\mu\left(1-2k\mu\right)}}{4k}\]
This is imaginary if and only if
\[\begin{split}
    k &> \frac{1}{2\mu}
\end{split}\]
and has a unique root for $k = \frac{1}{2\mu}$ (at $\mu$). $\mu - \frac{1}{4k} \geq \frac{\mu}{2} > 0$ for $k \geq \frac{1}{2\mu}$. It remains to verify that $j\left(\mu + \frac{1}{4k}\right) > k\left(\mu + \frac{1}{4k}\right)$; but it is simple to verify that this must hold. Thus, if $k \geq \frac{1}{2\mu}$, we have the result. \\

\textbf{For \boldmath $\mu>\frac{1}{2}$ \unboldmath}:

The proof is analogous to the preceding one, with the exception that $k$ must be sufficiently large so that $\mu + \frac{1}{4k} \leq 1$. This holds if and only if $k \geq \frac{1}{4(1-\mu)}$. This constraint binds for $\mu \geq \frac{2}{3}$ and $k \geq \frac{1}{2\mu}$ binds for $\mu \leq \frac{2}{3}$. \hfill $\qed$

\subsection{Proof of Proposition \ref{publicexp}}

Suppose each sender offers support $\left\{l,h\right\}$, with $l \in [0,\mu)$ and $h \in (\mu,1]$.\\

Lemma \ref{stage2k} and Lemma \ref{stage1k} continue to describe on path behavior. Lemma \ref{nodeviation} still holds.\\

First let $k>\frac{1}{2(h-l)}$ and $\mu \in [l+\frac{1}{4k},h-\frac{1}{4k}]$. 

On path behavior is exactly as in the baseline model: visit any one sender, pick $\left\{\mu-\frac{1}{4k},\mu+\frac{1}{4k}\right\}$ and take a decision without learning from the other sender.

Then if a sender deviates to a different distribution, it would be observed. Then $R$ can simply respond by visiting the other, non-deviating sender, and picking $\left\{\mu-\frac{1}{4k},\mu+\frac{1}{4k}\right\}$ for him and taking a decision immediately. 

Due to Lemma \ref{nodeviation}, the deviating sender's payoffs are the same as on path. Thus, there does not exist a profitable deviation and we have an equilibrium.\\

Next say that either $k>\frac{1}{2(h-l)}$ and $\mu \not \in [l+\frac{1}{4k},h-\frac{1}{4k}]$, or $k \leq \frac{1}{2(h-l)}$.

Then from Lemmata \ref{stage1k} and \ref{stage2k}, on path $R$ chooses a unique binary garbling at stage 1, and exactly one belief in the support is followed by a visit to the second sender.

Denote the stage 1 belief following which $R$ does learn at stage 2 by $w$. Under each possibility we show that there is a profitable deviation for a sender.

\emph{Possibility 1}: Say $w<\mu$ and the stage 2 garbling is $\left\{l,h\right\}$. There must be a sender, say sender $i$, who is visited first with probability $<1$ on path. Suppose sender $i$ deviates to $\left\{l',h\right\}$, where $l<l'<w$. But on observing this deviation, $R$ would choose to visit sender $i$ first. By doing this she could get her first best. Thus, behavior is as on path, except that the order of visits is changed: sender $i$ is visited first with probability 1. It is easy to verify that the payoff from being visited first is $>\frac{1}{2}$ (i.e. higher than payoff from being visited second), which means that this increase in probability of being visited first is profitable.

\emph{Possibility 2}: Say $w<\mu$ and the stage 2 garbling is $\left\{h-\sqrt{\frac{h-k}{k}},h\right\}$. Everything is as in possibility 1, except that $l'$ is chosen such that $h-\sqrt{\frac{h-k}{k}}<l'<w$.

\emph{Possibility 3:} If $w>\mu$ and is followed by a stage 2 best response $\left\{l,l+\sqrt{\frac{w-l}{k}}\right\}$ or $\left\{l,h\right\}$. Then if a sender deviates to no information, clearly $R$ would just learn from the other sender with a threshold of acceptance $\mu$. It is verified that the deviating sender's payoffs then are higher than the payoffs on path, conditional on being visited first as well as conditional on being visited second.  \hfill $\qed$

\subsection{Proof of Lemma \ref{diffmean1}}

See the proof of Lemma \ref{stage1k}, setting $l=0,h=1$ and $k=1$ and using $\mu_{2}$ as the mean for the second sender and $\mu_{1}$ as the mean for the first sender.

\subsection{Proof of Lemma \ref{exchangeability}}

Let us begin by looking at the parametric conditions given in bullet points $1$ and $3$ of Lemma \ref{diffmean1}. By symmetry it suffices to assume that one of these two pairs of conditions holds for the scenario in which sender $2$ is visited second, and show that that implies that one of the four pairs of conditions for the scenario in which sender $2$ is visited first must hold. Observe that the conditions for bullet points $1$ and $3$ reduce to $|\mu_{1} - \mu_{2}| \leq \frac{1}{4}$ and $\mu_{2} \in \left[\frac{1}{4}, \frac{3}{4}\right]$. It is easy to see that if $\mu_{1} \in \left[\frac{1}{4}, \frac{3}{4}\right]$ then we are done. What if $\mu_{1} \notin \left[\frac{1}{4}, \frac{3}{4}\right]$? WLOG suppose that $\mu_{1} < \frac{1}{4}$. By assumption we must have $\mu_{2} - \frac{1}{4} \leq \mu_{1}$ and $\mu_{2} \geq \frac{1}{4}$. Hence, condition $4$ (with $\mu_{2}$ and $\mu_{1}$ transposed) must hold.

Next, we turn our attention to the conditions given in bullet points $2$ and $4$. WLOG it suffices to focus on the conditions in bullet point $2$. As we did in the previous paragraph, it suffices to assume that these conditions hold for the scenario in which sender $2$ is visited second, and show that that implies that one of the four pairs of conditions for the scenario in which sender $2$ is visited first must hold. By construction, $\mu_{2} \leq \mu_{1} + \frac{1}{4}$ and $\mu_{1} \in \left[\frac{1}{2}, \frac{3}{4}\right]$. Moreover, $\mu_{2} \geq \mu_{1} > \mu_{1} - \frac{1}{4}$, and so condition $1$ (with $\mu_{2}$ and $\mu_{1}$ transposed) must hold. \hfill $\qed$

\subsection{Proof of Proposition \ref{maindiff}}

It suffices to show that conditional on being the first sender to be visited, the probability of being selected is the same regardless of which stage $1$ optimal garbling is chosen by $R$. The remainder of the proof follows analogously to the proof of Lemma \ref{nodeviation}. Alternatively, observe that it follows from the fact that probability of the first sender being selected conditional on a first stage belief $x$ is either $0$, $1$, or a function that is affine in $x$. \hfill $\qed$

\end{document}